\newtheorem{theorem}{Theorem}
\newtheorem{remark}[theorem]{Remark}
\newtheorem{lemma}[theorem]{Lemma}
\newtheorem{proposition}[theorem]{Proposition}
\newtheorem{corollary}{Corollary}
\newtheorem{conjecture}{Conjecture}
\numberwithin{equation}{section}
\numberwithin{theorem}{section}
\title{Characteristic polynomials of complex random matrices and Painlev\'e transcendents}
\title[Complex random matrices and Painlev\'e transcendents]{Characteristic polynomials of complex random matrices and Painlev\'e transcendents}
\author{Alfredo Dea\~{n}o}
\address[A. Dea\~{n}o]{\newline School of Mathematics, Statistics and Actuarial Science. University of Kent, Canterbury CT2 7NF, United Kingdom. \textit{A.Deano-Cabrera@kent.ac.uk}}
\author{Nick Simm}
\address[N. Simm]{\newline Mathematics Department. University of Sussex, Falmer Campus, Brighton, BN1 9RH, United Kingdom. \textit{n.j.simm@sussex.ac.uk}}
\date{\today}
\begin{document}

\begin{abstract}
We study expectations of powers and correlation functions for characteristic polynomials of $N \times N$ non-Hermitian random matrices. For the $1$-point and $2$-point correlation function, we obtain several characterizations in terms of Painlev\'e transcendents, both at finite-$N$ and asymptotically as $N \to \infty$. In the asymptotic analysis, two regimes of interest are distinguished: boundary asymptotics where parameters of the correlation function can touch the boundary of the limiting eigenvalue support and bulk asymptotics where they are strictly inside the support. For the complex Ginibre ensemble this involves Painlev\'e IV at the boundary as $N \to \infty$. Our approach, together with the results in \cite{HW17} suggests that this should arise in a much broader class of planar models. For the bulk asymptotics, one of our results can be interpreted as the merging of two `planar Fisher-Hartwig singularities' where Painlev\'e V arises in the asymptotics. We also discuss the correspondence of our results with a normal matrix model with $d$-fold rotational symmetries known as the \textit{lemniscate ensemble}, recently studied in \cite{BGM17, BGG18}. Our approach is flexible enough to apply to non-Gaussian models such as the truncated unitary ensemble or induced Ginibre ensemble; we show that in the former case Painlev\'e VI arises at finite-$N$. Scaling near the boundary leads to Painlev\'e V, in contrast to the Ginibre ensemble.
\end{abstract} 

\maketitle

\section{Introduction and results}
The purpose of this paper is to study correlation functions of characteristic polynomials, of the form
\begin{equation}
R_{\vec{\gamma}}(\vec{z}) := \mathbb{E}\left(\prod_{i=1}^{m}|\det(G_{N}-z_{i})|^{\gamma_{i}}\right) \label{correlationfn}
\end{equation}
where $\vec{\gamma} = (\gamma_1,\ldots,\gamma_{m})$ are parameters\footnote{We mainly consider the case that $\vec{\gamma} = 2\vec{k}$ where $\vec{k} \in \mathbb{N}^{m}$, though not exclusively.} and $\vec{z} \in \mathbb{C}^{m}$. In the first instance we take $G_{N}$ to be a standard complex Ginibre random matrix, that is $G_{N} = \frac{1}{\sqrt{N}}\,G$ where $G$ is an $N \times N$ matrix of \textit{i.i.d.} standard complex normal random variables, though we shall also discuss other examples. 

Particular attention will be paid to the cases $m=1$ and $m=2$, where the average \eqref{correlationfn} turns out to be intimately related to Painlev\'e transcendents. These are solutions of a distinguished class of non-linear second order differential equations, labelled Painlev\'e I - Painlev\'e VI, see for instance \cite[Chapter 32]{NIST:DLMF}. The relevance of Painlev\'e transcendents in random matrix theory arose most famously in the work of Tracy and Widom in connection with the largest eigenvalue of a Hermitian random matrix \cite{TW94}, building on the earlier developments of Jimbo, Miwa, M\^{o}ri and Sato \cite{JMMS80}. They are now among the most important special functions of mathematical physics, and they appear in connection with reduction of integrable PDEs, models in statistical mechanics, combinatorics and orthogonal polynomials, to name only a few applications, see \cite{WVA19} for a recent survey, and the text \cite{FIKN}. 

The characterization of averages involving Hermitian (or unitary) random matrices in terms of Painlev\'e transcendents has attracted considerable activity in the last decades. However, the fact that Painlev\'e transcendents also play an important role in the context of non-Hermitian random matrices appears to be somewhat less appreciated and will be the main focus of this work. For the complex Ginibre ensemble, this has been discussed to some extent in the physics literature: Nishigaki and Kamenev \cite{NK02} applied methods from supersymmetry to compute \eqref{correlationfn} when $m=1$ and a special case of the $m=2$ correlation function. In subsequent work of Kanzieper \cite{Kan05}, in the context of replica field theories, it was pointed out that the results in \cite{NK02} imply a relation to Painlev\'e IV and Painlev\'e V transcendents. Here, in addition to studying the general correlations \eqref{correlationfn}, we will arrive at these results via what we believe to be a mathematically more transparent approach, circumventing the need for supersymmetry techniques. Another advantage of our approach is its applicability to other classes of planar models, beyond the complex Ginibre ensemble.

In the case of Hermitian (or unitary) random matrices, correlation functions of characteristic polynomials of type \eqref{correlationfn} were studied quite intensively over the last two decades. The pioneering work of Keating and Snaith \cite{KS00} demonstrated their importance in modelling statistical properties of the Riemann zeta function high up on the critical line. For Hermitian random matrices, higher order correlations were studied by Br\'ezin and Hikami \cite{BH00,BH00edge} in various asymptotic regimes and the importance of determinantal structures was emphasized. In the case of non-integer exponents and GUE matrices, Krasovsky \cite{Kra07} applied Riemann-Hilbert techniques to calculate the corresponding asymptotics. More recently there has been a resurgence of interest in characteristic polynomials of random matrices due to their association with log-correlated Gaussian fields \cite{HKO01,FK14,W15,FKS16}. Along these lines, quantity \eqref{correlationfn} can be interpteted as the multi-point Laplace transform of the \textit{random process} $\phi_{N}(z) = \log|\det(G_{N}-z)|$ which is asymptotically Gaussian and logarithmically correlated, see \cite{RV07,AHM11,BWW18,WW18}. This is also closely related to the recently studied question of `moments of moments' \cite{BK19,AK19,Fahs19} (where Painlev\'e V also arises), albeit in the Hermitian context. In the present non-Hermitian setting the understanding of \eqref{correlationfn} remains at an early stage, especially at the level of asymptotic analysis. 

Let us now be more precise about the asymptotic regimes under consideration. Under the chosen normalization of the matrices $G_{N}$, we have the famous \textit{circular law}: in the limit $N \to \infty$ the eigenvalues of $G_{N}$ are uniformly distributed on the unit disc in the complex plane. Our focus will be on large $N$ asymptotics of \eqref{correlationfn} in the microscopic regime, that is where the points $\{z_{i}\}_{i=1}^{m}$ have a separation on the order of the mean eigenvalue spacing of $G_{N}$ of size $1/\sqrt{N}$. This is to be distinguished from the macroscopic or mesoscopic regimes which involve separations at larger scales. The possible microscopic scalings of \eqref{correlationfn} are then further divided into \textit{boundary asymptotics} where the $z_{i}$'s are close to the boundary of the eigenvalue support (close to the unit circle for the matrices $G_{N}$) and \textit{bulk asymptotics} where the $z_{i}$'s are strictly inside the support. The boundary asymptotics, although interesting in their own right, have an application to the normal matrix model which we now discuss.

Recall that the normal matrix model is the probability measure on $N$ points $\lambda_{1},\ldots,\lambda_{N}$ in the complex plane defined by
\begin{equation}
d\mathbb{P}(\lambda_1,\lambda_2,\ldots,\lambda_N) = \frac{1}{\mathcal{Z}_{N}} \prod_{j=1}^{N}e^{-NV(\lambda_j)}\prod_{1 \leq i < j \leq N}|\lambda_{j}-\lambda_{i}|^{2}\,d^{2}\lambda_1\ldots d^{2}\lambda_N, \label{law}
\end{equation}
where $\mathcal{Z}_{N}$ is a normalization constant, known as the \textit{partition function}, defined by
\begin{equation}
\mathcal{Z}_{N} = \int_{\mathbb{C}^{N}}\prod_{j=1}^{N}e^{-NV(\lambda_j)}\prod_{1 \leq i < j \leq N}|\lambda_{j}-\lambda_{i}|^{2}\,d^{2}\lambda_1\ldots d^{2}\lambda_N. \label{generalpf}
\end{equation}
Although the matrix $G_{N}$ is almost surely not normal, its eigenvalue distribution is known to take the form \eqref{law} with $V(\lambda) = |\lambda|^{2}$. Therefore, the eigenvalues of the complex Ginibre ensemble can be viewed as the most basic yet non-trivial example of the model \eqref{law}. In general, the function $V(\lambda)$ is called the external potential and subject to suitable growth and regularity conditions, in the limit $N \to \infty$ the random points $\{\lambda_{j}\}_{j=1}^{N}$ are supported on a compact subset of the plane known as the \textit{droplet}, denoted $\mathcal{S}$. The density of eigenvalues in $\mathcal{S}$ is described by an equilibrium measure coming from weighted potential theory, see \cite{AKM19} for further background. In the electrostatic interpretation of the model \eqref{law}, quantity \eqref{correlationfn} can be viewed as a partition function of the form \eqref{generalpf} subjected to the insertion of $m$ \textit{point charges} at locations $\vec{z}$ and strengths $\vec{\gamma}$, see \cite{SYM17,AKS18,LY19}.

Recently, there has been interest in studying the model \eqref{law} corresponding to the class of potentials
\begin{equation}
V^{(d)}(\lambda,t) = |\lambda|^{2d}-t(\lambda^{d}+\overline{\lambda}^{d}), \qquad d \in \mathbb{N}, \quad t \in \mathbb{R}, \label{lem}
\end{equation}
\begin{figure}
\begin{center}
\includegraphics[scale=0.8]{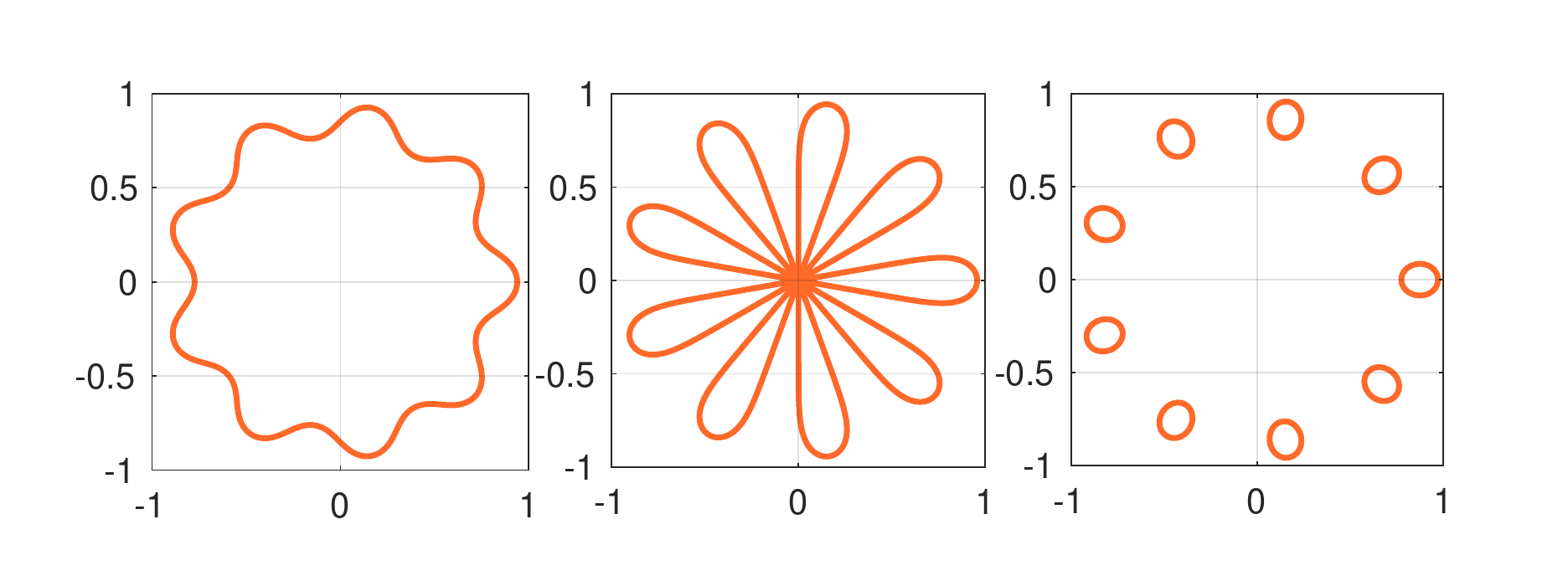}
\caption{The boundary of the droplet for the lemniscate potential \eqref{lem} for $d=9$ with critical value $t_{\mathrm{c}} = 1/\sqrt{d}$. From left to right the plots show $t < t_{\mathrm{c}}$, $t=t_{\mathrm{c}}$ and $t>t_{\mathrm{c}}$.}
\label{fig:lemfig}
\end{center}
\end{figure}
which are invariant under the $d$-fold symmetry $\lambda \to \lambda e^{2\pi i /d}$. This model was introduced by Balogh and Merzi \cite{BM15} who showed that the equilibrium measure is supported on the interior of the domain, $\mathcal{S} = \{\lambda : |\lambda^{d}-t| \leq t_{\mathrm{c}}\}$ where $t_{\mathrm{c}} = d^{-1/2}$. Due to the lemniscate type shape of $\mathcal{S}$, the random point configuration corresponding to \eqref{lem} has been referred to as the \textit{lemniscate ensemble} \cite{AKMW15}, see Figure \ref{fig:lemfig}. When $t$ passes through $t_{\mathrm{c}}$, the topology of the droplet undergoes a transition from being simply connected to consisting of $d$ connected components. The case $t=t_{\mathrm{c}}$ has been singled out as a model with a non-trivial boundary singularity (see \cite{AKMW15}) and the effect of this singularity on the partition function asymptotics is of interest (see also Remark \ref{rem:serfaty}). 

In the recent work \cite{BGG18} the orthogonal polynomials for the planar weight $e^{-NV^{(d)}(\lambda,t)}$ have been analysed. Close to the transition $t=t_{\mathrm{c}}$ the asymptotics were expressed in terms of the Hamiltonian of the Painlev\'e IV system. This followed earlier works \cite{SYM17, BGM17} on the off-critical cases $t \neq t_{\mathrm{c}}$ which were expressed in terms of classical special functions, see also \cite{BBLM15}. A natural question is to look at the partition function associated with \eqref{lem}. The following lemma shows its relation to characteristic polynomials of the Ginibre ensemble.
\begin{lemma}
\label{lem:introredgin}
Let $\mathcal{Z}^{(\mathrm{Lem}_{d})}_{N}(t)$ denote the partition function corresponding to \eqref{lem}. Then we have the identity
\begin{equation}
\mathcal{Z}^{(\mathrm{Lem}_{d})}_{Nd}(t) = e^{(Ntd)^{2}}\tilde{c}_{N,d}\prod_{\ell=0}^{d-1}\mathbb{E}\left(|\det(G_{N}-t\sqrt{d})|^{\gamma_\ell}\right) \label{ginred}
\end{equation}
where
\begin{equation}
\gamma_{l} = -2\left(1-\frac{\ell+1}{d}\right), \qquad \ell=0,\ldots,d-1. \label{gamell}
\end{equation}
Here $\tilde{c}_{N,d}$ is an explicit constant that we give later in Section \ref{se:lem}.
\end{lemma}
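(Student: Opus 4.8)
The plan is to pass to the planar orthogonal polynomials of the lemniscate weight and exploit its $d$-fold symmetry. Recall first the elementary identity, valid for any sufficiently regular weight $w$ on $\mathbb{C}$, that $\int_{\mathbb{C}^{M}}\prod_{j=1}^{M}w(\lambda_{j})\prod_{1\le i<j\le M}|\lambda_{i}-\lambda_{j}|^{2}\,d^{2}\lambda_{1}\cdots d^{2}\lambda_{M}=M!\prod_{n=0}^{M-1}h_{n}$, where $h_{n}=\int_{\mathbb{C}}|p_{n}(\lambda)|^{2}w(\lambda)\,d^{2}\lambda$ and $p_{n}$ is the $n$-th monic orthogonal polynomial for $w$; this follows by writing the Vandermonde as $\det[\lambda_{j}^{i-1}]=\det[p_{i-1}(\lambda_{j})]$, expanding, and using orthogonality. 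Applying this with $M=Nd$ and $w(\lambda)=e^{-Nd\,V^{(d)}(\lambda,t)}$ (the normalization of \eqref{law} with $Nd$ points) reduces \eqref{ginred} to the computation of the norms $h_{n}$ for $0\le n\le Nd-1$.

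Next I would use invariance of $w$ under $\lambda\mapsto e^{2\pi \mathrm{i}/d}\lambda$: by uniqueness of the monic orthogonal polynomial one has $p_{n}(e^{2\pi\mathrm{i}/d}\lambda)=e^{2\pi\mathrm{i} n/d}p_{n}(\lambda)$, so $p_{n}$ only involves monomials $\lambda^{m}$ with $m\equiv n\pmod d$. Writing $n=kd+\ell$ with $0\le \ell\le d-1$ this means $p_{n}(\lambda)=\lambda^{\ell}P^{(\ell)}_{k}(\lambda^{d})$ for a monic polynomial $P^{(\ell)}_{k}$ of degree $k$. I would then perform the $d$-to-one substitution $\mu=\lambda^{d}$, whose Jacobian is elementary in polar coordinates, and complete the square in the resulting exponent, $-Nd|\mu|^{2}+2Ndt\,\mathrm{Re}\,\mu=-Nd|\mu-t|^{2}+Ndt^{2}$. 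This shows that, up to an explicit constant and a factor $e^{Ndt^{2}}$, $h_{kd+\ell}$ equals $\int_{\mathbb{C}}|\mu|^{\gamma_{\ell}}\,e^{-Nd|\mu-t|^{2}}\,|P^{(\ell)}_{k}(\mu)|^{2}\,d^{2}\mu$ with $\gamma_{\ell}$ exactly as in \eqref{gamell}. A shift $\mu\mapsto\mu+t$ and a rescaling by $\sqrt d$ then identify this integral, again up to explicit constants, with the squared norm $\widetilde h^{(\ell)}_{k}$ of the $k$-th monic orthogonal polynomial for the Ginibre weight with an inserted point charge, $|\mu-t\sqrt d|^{\gamma_{\ell}}e^{-N|\mu|^{2}}$, where I use that $\mathbb{E}(|\det(G_{N}-z)|^{\gamma})$ depends on $z$ only through $|z|$ to replace the insertion point $-t\sqrt d$ by $t\sqrt d$.

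Finally I would reassemble the product. For each fixed $\ell$, the same Vandermonde identity run backwards gives $\prod_{k=0}^{N-1}\widetilde h^{(\ell)}_{k}=\frac{1}{N!}\int_{\mathbb{C}^{N}}\prod_{j=1}^{N}|\mu_{j}-t\sqrt d|^{\gamma_{\ell}}e^{-N|\mu_{j}|^{2}}\prod_{i<j}|\mu_{i}-\mu_{j}|^{2}\,d^{2}\mu_{1}\cdots d^{2}\mu_{N}=\frac{\mathcal{Z}^{\mathrm{Gin}}_{N}}{N!}\,\mathbb{E}\!\left(|\det(G_{N}-t\sqrt d)|^{\gamma_{\ell}}\right)$, where $\mathcal{Z}^{\mathrm{Gin}}_{N}=N!\prod_{k=0}^{N-1}h^{\mathrm{Gin}}_{k}$ with the explicit Ginibre norms $h^{\mathrm{Gin}}_{k}=\pi k!/N^{k+1}$. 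Multiplying $h_{kd+\ell}$ over all $0\le k\le N-1$ and $0\le\ell\le d-1$, prepending the factor $(Nd)!$, and collecting every explicit piece — the $Nd$ copies of $e^{Ndt^{2}}$ combining into $e^{(Ntd)^{2}}$, the powers of $d$ from the Jacobian of $\mu=\lambda^{d}$ and from the rescalings, the factorials, and the $h^{\mathrm{Gin}}_{k}$ — produces \eqref{ginred}, with $\tilde c_{N,d}$ the resulting explicit constant that will be recorded in Section \ref{se:lem}.

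The conceptual core of the argument is the symmetry reduction $p_{n}(\lambda)=\lambda^{\ell}P^{(\ell)}_{k}(\lambda^{d})$ together with the observation that $\mu=\lambda^{d}$ turns the lemniscate weight into a (shifted) Ginibre weight carrying the fractional power $|\mu|^{\gamma_{\ell}}$, i.e.\ a point-charge insertion — this is what makes the exponents \eqref{gamell} appear. The hard part is not the idea but the bookkeeping: one must track all Jacobian and scaling constants through the $d$-to-one change of variables and the subsequent shift and rescaling so that $\tilde c_{N,d}$ comes out exactly right, being careful in particular about the $Nd$-point normalization $e^{-Nd\,V^{(d)}}$ and about the fact that $\mu=\lambda^{d}$ covers each value $d$ times.
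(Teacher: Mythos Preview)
Your proposal is correct and follows essentially the same route as the paper's proof: the symmetry reduction $p_{kd+\ell}(\lambda)=\lambda^{\ell}P^{(\ell)}_{k}(\lambda^{d})$, the $d$-to-one substitution $\mu=\lambda^{d}$ turning the lemniscate weight into a shifted Ginibre weight with the fractional-power insertion $|\mu|^{\gamma_{\ell}}$, and reassembly via the norm-product formula $\mathcal{Z}_{M}=M!\prod h_{n}$. The paper additionally sketches a second proof based on the block-diagonal structure of the moment matrix under the $d$-fold rotation, but your argument coincides with its primary one.
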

Notice that double scaling the parameter $t$ close to $t_{\mathrm{c}} := \frac{1}{\sqrt{d}}$ in \eqref{ginred} corresponds precisely to boundary asymptotics of \eqref{correlationfn} with $m=1$. In what follows we will present the boundary asymptotics in the Ginibre case; consequences for \eqref{ginred} are given in Section \ref{se:lem}. The best result we are aware of for the average in \eqref{ginred} holds strictly away from the boundary and the authors do not comment on the connection with the lemniscate ensemble.
\begin{theorem}[Webb and Wong \cite{WW18}]
\label{th:ww}
Fix $\delta>0$ and consider the closed disc $D_{1-\delta}$ of radius $1-\delta$. Then provided $\mathrm{Re}(\gamma)>-2$, the following asymptotic formula holds uniformly for $z \in D_{1-\delta}$
\begin{equation}
\mathbb{E}\left(|\det(G_{N}-z)|^{\gamma}\right) = e^{\frac{N\gamma}{2}\,(|z|^{2}-1)+\frac{\gamma^{2}}{8}\,\log(N)}\,\frac{(2\pi)^{\frac{\gamma}{4}}}{G(1+\frac{\gamma}{2})}(1+o(1)), \qquad N \to \infty, \label{webbwongform}
\end{equation}
where $G(z)$ is the Barnes G-function. The expansion \eqref{webbwongform} holds uniformly in compact subsets of $\mathrm{Re}(\gamma)>-2$.
\end{theorem}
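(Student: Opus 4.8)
The plan is to interpret \eqref{webbwongform} as a ratio of planar partition functions and to run an asymptotic analysis of the associated planar orthogonal polynomials. Since the eigenvalues of $G_{N}$ form the ensemble \eqref{law} with $V(\lambda)=|\lambda|^{2}$, the Andr\'eief (Heine) identity gives
\begin{equation}
\mathbb{E}\bigl(|\det(G_{N}-z)|^{\gamma}\bigr)=\prod_{n=0}^{N-1}\frac{h_{n,N}(z)}{h_{n,N}^{(0)}},\qquad h_{n,N}^{(0)}=\frac{\pi\, n!}{N^{n+1}},
\end{equation}
where $h_{n,N}(z)$ is the squared norm of the $n$-th monic orthogonal polynomial $p_{n,N}$ for the planar weight $w_{z}(\lambda)=|\lambda-z|^{\gamma}e^{-N|\lambda|^{2}}$, and $h_{n,N}^{(0)}$ is the corresponding quantity for $e^{-N|\lambda|^{2}}$ (for which $p_{n,N}(\lambda)=\lambda^{n}$). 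Rotational invariance lets us take $z=x\in[0,1-\delta]$, and it is enough to establish the formula for real $\gamma>-2$: both sides are analytic and locally bounded in $\gamma$ on $\{\mathrm{Re}\,\gamma>-2\}$, so a Vitali/Montel argument then gives the full statement, uniformly on compacts.

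Next I would carry out a $\bar{\partial}$-steepest-descent analysis of $p_{n,N}$ and $h_{n,N}(z)$, uniformly in $0\le n\le N-1$. The weight $w_{z}$ is a one-cut planar weight carrying a power-type (Fisher-Hartwig) singularity at $z$, and the ``droplet'' felt by $p_{n,N}$ is the disc $|\lambda|\le\sqrt{n/N}$; thus $z$ lies outside it for $n<N|z|^{2}$ and inside it for $n>N|z|^{2}$, with a transition window of width $O(\sqrt{N})$ around $n=N|z|^{2}$ in which one needs a local parametrix at $z$ built from confluent hypergeometric (equivalently incomplete Gamma) functions --- the two-dimensional analogue of the Fisher-Hartwig parametrix. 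Writing $\log\mathbb{E}(|\det(G_{N}-z)|^{\gamma})=\sum_{n=0}^{N-1}\log\bigl(h_{n,N}(z)/h_{n,N}^{(0)}\bigr)$ and resumming (Euler--Maclaurin) the two far ranges produces the mean-field term $N\gamma\int\log|z-\lambda|\,d\mu_{\mathrm{eq}}(\lambda)=\tfrac{N\gamma}{2}(|z|^{2}-1)$, where $\mu_{\mathrm{eq}}$ is the normalised area measure on the unit disc; the transition window contributes the fluctuation correction $\tfrac{\gamma^{2}}{8}\log N+\log\bigl((2\pi)^{\gamma/4}/G(1+\gamma/2)\bigr)+o(1)$. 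At the level of these leading terms, \eqref{webbwongform} is thus exactly the large-$N$ Laplace transform of the log-correlated field $\phi_{N}(z)$ of the introduction: the exponential is $N\gamma$ times its mean, $N^{\gamma^{2}/8}=\exp(\tfrac{\gamma^{2}}{2}\cdot\tfrac{1}{4}\log N)$ reflects its bulk variance $\sim\tfrac14\log N$, and $(2\pi)^{\gamma/4}/G(1+\gamma/2)$ is the non-Gaussian correction.

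The subleading constant and the coefficient of $\log N$ can then be pinned down --- and the $z$-independence of the constant confirmed --- from the explicit case $z=0$, where $p_{n,N}(\lambda)=\lambda^{n}$ still and $h_{n,N}(0)=\pi\,\Gamma(n+1+\tfrac{\gamma}{2})\,N^{-(n+1+\gamma/2)}$, so that
\begin{equation}
\mathbb{E}\bigl(|\det G_{N}|^{\gamma}\bigr)=N^{-N\gamma/2}\,\frac{G\bigl(N+1+\tfrac{\gamma}{2}\bigr)}{G\bigl(1+\tfrac{\gamma}{2}\bigr)\,G(N+1)}.
\end{equation}
Inserting the standard asymptotic expansion of the Barnes $G$-function into $G(N+1+\tfrac{\gamma}{2})/G(N+1)$ gives $N^{N\gamma/2+\gamma^{2}/8}e^{-N\gamma/2}(2\pi)^{\gamma/4}(1+o(1))$, which reproduces the right-hand side of \eqref{webbwongform} at $z=0$. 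Because the constant emerging from the analysis depends only on the (constant) value of the equilibrium density at $z$, it is the same throughout the bulk, which fixes it uniformly on $D_{1-\delta}$.

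The main obstacle is the uniform Riemann--Hilbert analysis through the transition $n\approx N|z|^{2}$: constructing the local parametrix at the Fisher-Hartwig point and controlling its matching error uniformly in $n$, in $z\in D_{1-\delta}$, and in $\gamma$ in compacts of $\{\mathrm{Re}\,\gamma>-2\}$, together with the justification of the Euler--Maclaurin resummation with uniform remainders. The restriction $\mathrm{Re}(\gamma)>-2$ is unavoidable: it is exactly the threshold for area-integrability of $|\lambda-z|^{\gamma}$ near $\lambda=z$ in two dimensions, needed both to define $w_{z}$ and for the local model to be admissible.
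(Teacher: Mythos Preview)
This theorem is not proved in the present paper at all: it is quoted from Webb and Wong \cite{WW18} as background, so there is no ``paper's own proof'' to compare against. What the paper does contain is the finite-$N$ identity of Theorem~\ref{th:ginibretocue}, which reduces $R_{\gamma}(z)$ to a Toeplitz determinant on the unit circle; this is exactly the starting point of the analysis in \cite{WW18}, where the asymptotics are then extracted by Riemann--Hilbert/steepest-descent methods for orthogonal polynomials on $S^{1}$ with a Fisher--Hartwig singularity.

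Your plan is different in spirit: you stay with the planar polynomials and their norms $h_{n,N}(z)$ and propose a $\bar{\partial}$-steepest-descent analysis directly in the plane, splitting the index range according to whether $z$ is inside or outside the effective droplet $|\lambda|\le\sqrt{n/N}$. This is a coherent strategy, and the heuristics you give (the equilibrium integral producing $\tfrac{N\gamma}{2}(|z|^{2}-1)$, the transition window producing the $\tfrac{\gamma^{2}}{8}\log N$ and Barnes-$G$ constant, the $z=0$ check) are correct. But compared with the route actually taken in \cite{WW18}, you are trading a one-dimensional Riemann--Hilbert problem (on the circle, with a well-developed Fisher--Hartwig toolkit) for a genuinely two-dimensional $\bar{\partial}$-problem with a power singularity in the bulk, for which the local parametrix construction and uniform matching you allude to are not standard. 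As you yourself flag, the uniform control through the transition $n\approx N|z|^{2}$ is the entire difficulty, and nothing in your proposal indicates how to build the local model or bound the error; at this level the proposal is a plausible outline rather than a proof.
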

We now state our main results.

\subsection{Complex Ginibre ensemble: Boundary asymptotics}
Our first result extends the asymptotic expansion \eqref{webbwongform} uniformly to the boundary $|z|=1$ and leads to a new behaviour in the constant term. In order to state our result, recall that a \textit{GUE (Gaussian Unitary Ensemble) random matrix} is a Hermitian random matrix $H$ whose density is proportional to $e^{-\mathrm{Tr}(H^{2})/2}$ with respect to the Lebesgue measure on its functionally independent entries. We denote by $\lambda^{(\mathrm{GUE}_{k})}_{\mathrm{max}}$ the largest eigenvalue of the matrix $H$, having size $k \times k$.

\begin{theorem}
\label{th:onecharge}
Let $z$ belong to the closed disc $D_{R}$ of radius $R = 1+L/\sqrt{N}$ for some positive constant $L>0$. Then for any $k \in \mathbb{N}$, the following asymptotic expansion holds uniformly on $D_{R}$,
\begin{equation}
\begin{split}
\mathbb{E}\left(|\det(G_{N}-z)|^{2k}\right) = e^{Nk(|z|^{2}-1)+\frac{k^{2}}{2}\, \log(N)}\,&\frac{(2\pi)^{\frac{k}{2}}}{G(1+k)}\,F_{k}(\sqrt{N}(1-|z|^{2}))\\
\times\left(1+o(1)\right), \qquad N \to \infty, \label{ginexpan}
\end{split}
\end{equation}
where $F_{k}(x)$ is the distribution function of the largest eigenvalue of a $k \times k$ GUE random matrix,
\begin{equation}
F_{k}(x) = \mathbb{P}(\lambda^{(\mathrm{GUE}_{k})}_{\mathrm{max}}<x). \label{boundaryerfc}
\end{equation}
\end{theorem}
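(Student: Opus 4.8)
The plan is to reduce \eqref{ginexpan} to an exact finite-$N$ identity and then carry out a uniform large-$N$ analysis of it. Since $\gamma=2k$ is a positive even integer, $|\det(G_N-z)|^{2k}=\prod_{i=1}^{N}|\lambda_i-z|^{2k}$, where $\lambda_1,\dots,\lambda_N$ are the eigenvalues of $G_N$, distributed as the complex Ginibre ensemble with weight $e^{-N|\lambda|^{2}}$; thus the left-hand side of \eqref{ginexpan} is the ratio of the partition function with the point charge $|\lambda-z|^{2k}$ inserted to the unperturbed one. Inserting a charge of even strength $2k$ at $z$ amounts to pinning $k$ eigenvalues at $z$, and carrying this out inside the $(N+k)$-eigenvalue complex Ginibre ensemble (still with weight $e^{-N|\lambda|^{2}}$) the determinantal structure identifies the average, up to an explicit constant depending only on $N$ and $k$, with the confluent limit at the diagonal point $(z,\dots,z)$ of the $k$-point correlation function --- itself a $k\times k$ determinant built from the Ginibre kernel $\mathcal K(z,\bar w)=\tfrac{N}{\pi}\sum_{p=0}^{N+k-1}(Nz\bar w)^{p}/p!$. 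Computing the confluent limit and extracting the pure powers of $z$ and $\bar z$ one arrives at an identity of the form
\[
\mathbb{E}\left(|\det(G_N-z)|^{2k}\right)=c_{N,k}\,|z|^{-k(k-1)}\det\left[\,S_{ab}(N|z|^{2})\,\right]_{a,b=0}^{k-1},
\]
where $S_{ab}(u):=\sum_{p=\max(a,b)}^{N+k-1}\frac{p!}{(p-a)!\,(p-b)!}\,u^{p}$ and $c_{N,k}=\big(\prod_{j=0}^{k-1}j!\big)^{-2}N^{-kN-k(k-1)/2}\prod_{j=N}^{N+k-1}j!$; for $k=1$ this reduces to $\mathbb{E}(|\det(G_N-z)|^{2})=\tfrac{N!}{N^{N}}\sum_{p=0}^{N}(N|z|^{2})^{p}/p!$, a truncated exponential, and in general, via $\sum_{p\le n}u^{p}/p!=e^{u}\Gamma(n+1,u)/n!$, each $S_{ab}$ is a finite combination of incomplete gamma functions $\Gamma(N+\ell,N|z|^{2})$ with $\ell$ bounded. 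I would take this finite-$N$ formula (or the equivalent one in the paper) as the starting point; its derivation is bookkeeping with the Andr\'eief identity and the fact that the planar orthogonal polynomials for $e^{-N|\lambda|^{2}}$ are the monomials $\lambda^{n}$ with norms $\pi\,n!/N^{n+1}$.

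For the asymptotics set $x=\sqrt{N}\,(1-|z|^{2})$, so that on $D_{R}$ with $R=1+L/\sqrt{N}$ one has $-2L-o(1)\le x\le\sqrt{N}$ and $u:=N|z|^{2}=N-x\sqrt{N}$. Writing $\pi_{p}:=e^{-u}u^{p}/p!=\mathbb{P}(\mathrm{Pois}(u)=p)$ we have $S_{ab}(u)=e^{u}\sum_{p}p(p-1)\cdots(p-a+1)\,p(p-1)\cdots(p-b+1)\,\pi_{p}$, and since $\pi_{p}$ concentrates around $p\approx u\approx N$ on the scale $\sqrt{N}$ while the sum is truncated precisely at $p=N+k-1$, the central limit theorem for the Poisson distribution gives, at \emph{leading} order, $S_{ab}(u)\sim e^{u}N^{a+b}\Phi(x)$ with $\Phi$ the standard normal distribution function. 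The matrix $[N^{a+b}\Phi(x)]_{a,b=0}^{k-1}$ has rank one, so this leading term contributes nothing to the determinant when $k\ge2$, and the content is the finer structure of the corrections. I would resolve this by column operations: the generating sequence $p\mapsto p!/(p-b)!$ in column $b$ is a polynomial of degree $b$ in $p$, and replacing it by the Charlier polynomial $C_{b}(p;u)$ of the same degree --- the family orthogonal with respect to $\pi_{p}$ --- leaves $\det[S_{ab}]$ unchanged up to an explicit triangular factor, while turning the entries into constants times the \emph{incomplete Charlier moments} $\sum_{p\le N+k-1}C_{a}(p;u)C_{b}(p;u)\,\pi_{p}$. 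Using the classical Charlier-to-Hermite limit $u^{a/2}\,C_{a}(u+\sqrt{u}\,t;u)\to(\text{const})\,H_{a}(t)$ as $u\to\infty$ together with the local central limit theorem controlling the truncation, these entries converge after rescaling to the incomplete Hermite moments $\int_{-\infty}^{x}H_{a}(t)H_{b}(t)\,e^{-t^{2}/2}\,dt$, where $H_{a}$ is the Hermite polynomial orthogonal with respect to $e^{-t^{2}/2}$. Collecting the scalar factors pulled out along the way --- $c_{N,k}$, $|z|^{-k(k-1)}$, the powers $e^{u}$, $u^{a+b}$ and $u^{-(a+b)/2}$, and the triangular factor, all estimated by Stirling's formula --- produces the prefactor $e^{Nk(|z|^{2}-1)+\frac{k^{2}}{2}\log N}(2\pi)^{k/2}/G(1+k)$, the Barnes $G$-value $G(1+k)=\prod_{j=0}^{k-1}j!$ and the factor $(2\pi)^{k/2}$ arising from Stirling's formula applied to the factorials generated by the confluence, matched against the normalization of the $k\times k$ GUE.

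It then remains to recognize $\det\!\big[\int_{-\infty}^{x}H_{a}(t)H_{b}(t)\,e^{-t^{2}/2}\,dt\big]_{a,b=0}^{k-1}$, up to the constant already extracted, as $F_{k}(x)$ of \eqref{boundaryerfc}: this is just the Andr\'eief expansion of $\frac{1}{\mathcal Z}\int_{(-\infty,x]^{k}}\prod_{a<b}(t_{a}-t_{b})^{2}\prod_{a}e^{-t_{a}^{2}/2}\,dt$, the standard Hankel-determinant formula for the distribution of the largest eigenvalue of a $k\times k$ GUE matrix. The main obstacle, where essentially all the work lies, is making the preceding analysis \emph{uniform} over the whole disc $D_{R}$ with a single $o(1)$ error term: the expansion must hold both for $z$ near the unit circle --- where $x$ stays bounded and $F_{k}(x)$ is genuinely nontrivial --- and for $z$ well inside it --- where $x\to+\infty$, $F_{k}(x)\to1$, and one must recover the Webb--Wong asymptotics of Theorem~\ref{th:ww} with $\gamma=2k$. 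Concretely this demands uniform control of the truncated-exponential / incomplete-gamma asymptotics through the transition region $u\approx N$, and careful tracking of the fact that the $k$ nearby parameters $N,N+1,\dots,N+k-1$ entering the truncation, though separated only by $O(1)$, contribute exactly the corrections that assemble into the Hankel determinant rather than collapsing it. A Riemann--Hilbert analysis of the planar orthogonal polynomials for the perturbed weight $|\lambda-z|^{\gamma}e^{-N|\lambda|^{2}}$ would give an alternative route valid for general $\mathrm{Re}(\gamma)>-2$, but for the integer case $\gamma=2k$ the determinantal identity above is the more economical starting point.
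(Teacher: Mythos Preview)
Your approach is correct but takes a genuinely different and considerably more laborious route than the paper. You keep the $k\times k$ determinant built from the Ginibre kernel and analyse it entry-wise: the leading term collapses to rank one, so you must pass to the Charlier basis, invoke the Charlier--Hermite scaling limit, and then reassemble the incomplete Hermite moments into the Hankel determinant for $F_k$. The paper instead \emph{undoes} that determinant at the finite-$N$ stage: applying Andr\'eief to the very same determinant (with an integral representation of the Ginibre kernel) yields the LUE duality
\[
R_{2k}(z)=\frac{N^{k^{2}/2}}{\prod_{j=0}^{k-1}j!(j+1)!}\int_{[0,\infty)^{k}}\prod_{j=1}^{k}e^{-\sqrt{N}t_{j}}\Bigl(|z|^{2}+\tfrac{t_{j}}{\sqrt{N}}\Bigr)^{N}\Delta^{2}(\vec t)\,d\vec t,
\]
and after the single shift $t_j\mapsto t_j+\sqrt{N}(1-|z|^{2})$ the integrand converges pointwise to the GUE density $e^{-t_j^{2}/2}\Delta^{2}(\vec t)$ on the half-space $(-\sqrt{N}(1-|z|^{2}),\infty)^{k}$, which is directly $\mathcal{Z}^{(\mathrm{GUE})}_{k}F_k(\sqrt{N}(1-|z|^{2}))$. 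The uniform control that you correctly flag as the main obstacle --- a single $o(1)$ valid across the bulk and the edge --- is handled by one elementary inequality, $-\sqrt{N}t+N\log(1+t/\sqrt{N})\le -t^{2}/(2(1+|t|))$, and dominated convergence.

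What the paper's route buys is precisely the avoidance of the rank-one cancellation you identified: by unfolding the determinant into a $k$-fold integral \emph{before} taking the limit, the GUE structure is manifest and no subleading tracking is needed. Your route has the compensating virtue that it stays closer to the kernel asymptotics and so generalises more transparently to the multiple-charge setting (where indeed the paper proceeds via the kernel rather than a scalar duality). For the present one-charge statement, however, the integral duality is markedly shorter.
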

\begin{proof}
See Section \ref{se:dualitylue}.
\end{proof}
\begin{remark}
A result of Tracy and Widom characterizes the above GUE probability in terms of Painlev\'e transcendents. In particular in \cite{TW94} we find
\begin{equation}
F_{k}(x) = \mathrm{exp}\left(-\int_{x}^{\infty}\sigma^{(\mathrm{IV})}_{k}(t)\,dt\right), \label{Fk}
\end{equation}
where $\sigma^{(\mathrm{IV})}_{k}(t)$ satisfies the Jimbo-Miwa-Okamoto $\sigma$-form of the Painlev\'e IV equation
\begin{equation}
(\sigma'')^{2}+4(\sigma')^{2}(\sigma'+k)-(t\sigma'-\sigma)^{2} = 0 \label{p4}
\end{equation}
subject to the boundary condition $\sigma(t) = -kt-k^{2}/t + o(1/t)$ as $t \to -\infty$ (see \cite[Equation 4.18]{FWPII}). 
\end{remark}
We expect that the form of expansion \eqref{ginexpan} also holds more generally when $k = \frac{\gamma}{2}$ and $\gamma$ satisfies the hypothesis of Theorem \ref{th:ww}, noting that the above Painlev\'e characterization does not require integrality of $k$. In the later sections, we will provide some heuristic computations lending support to such an extrapolation to non-integer $k$. This is based on the fact that for finite $N$, the left-hand side of \eqref{ginexpan} can be expressed in terms of a Painlev\'e V transcendent for general $k = \frac{\gamma}{2}$, see Section \ref{se:redgroup}. The Painlev\'e IV on the right-hand side of \eqref{ginexpan} then arises from a rescaling (confluent limit) of this equation.
\begin{figure}
\begin{center}
\begin{overpic}[scale=.4]{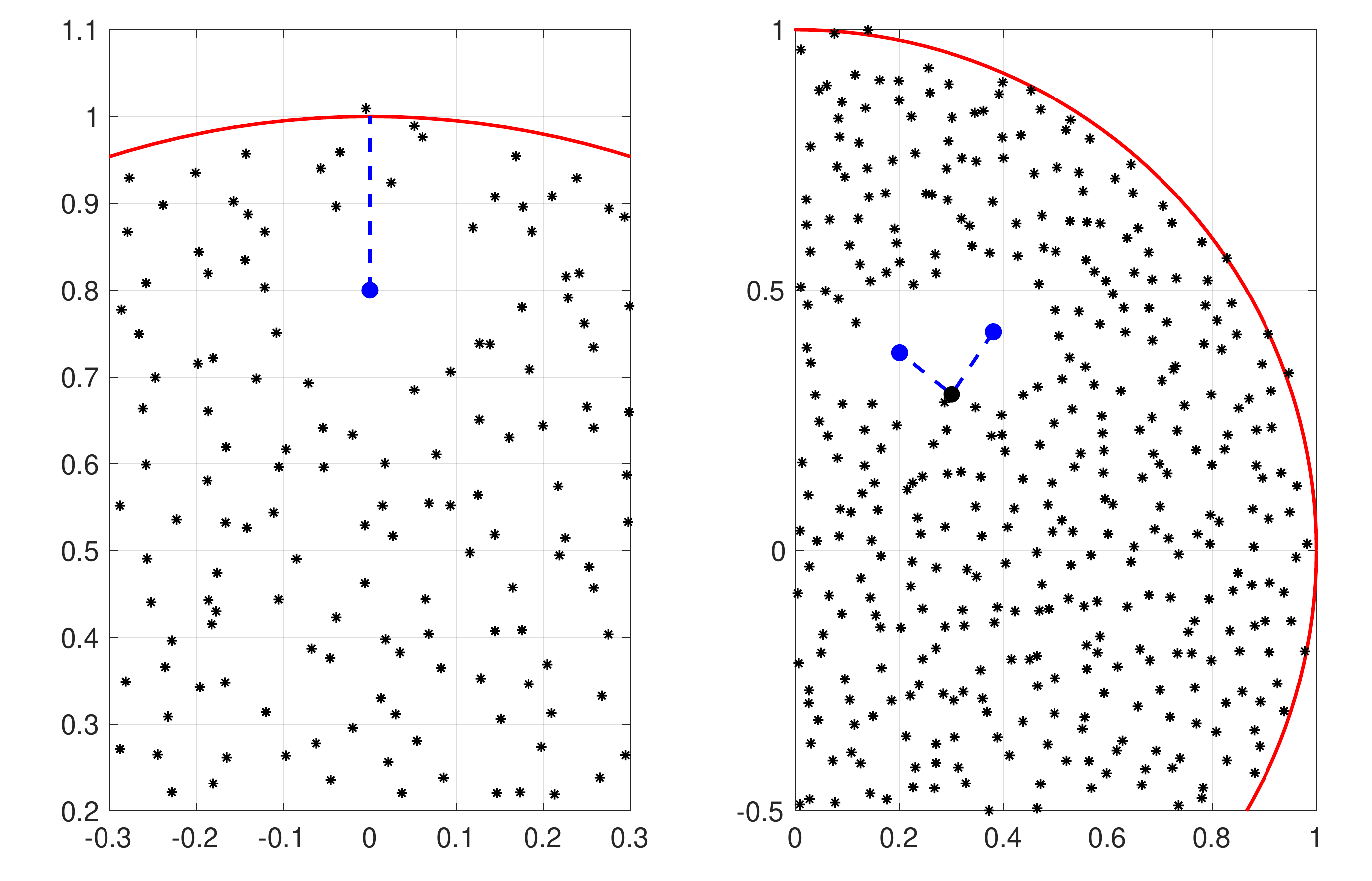}
%\put(5,45){\color{blue}\huge\LaTeX}
%\put(40,50){\color{blue}\line(10,8){4}}
%\put(75,70){\color{red}\rule{1pt}{30pt}}
%\put(51.65,82){\color{blue}\rule{1pt}{12pt}}%\line(1,2){1.225}}
\put(28,58){\color{blue}\small{$\mathcal{O}(N^{-\frac{1}{2}})$}}
\put(26,40){\color{blue}{$z$}}
\put(64,40){\color{blue}{$z_1$}}
\put(72,41.5){\color{blue}{$z_2$}}
\put(69,33){\color{black}{$z$}}
\end{overpic}
\caption{Schematic insertion of a charge near the boundary of distance on the order of $N^{-1/2}$ (left), and of two charges $z_1$ and $z_2$ of distance on the order of $N^{-1/2}$ from a fixed point $z$ in the bulk of the complex Ginibre ensemble (right). In both cases $N=1000$, and the unit circle is depicted in red.}
\label{fig_Ginibre1}
\end{center}
\end{figure}
Notice that when $|z|<1$ is fixed inside the unit disc and $N$ is large, we can replace the GUE probability in \eqref{boundaryerfc} with $1$ and we recover precisely \eqref{webbwongform} with $\gamma=2k$. On the other hand, if we are close to the boundary with $|z| = 1-u/\sqrt{N}$ with $u$ fixed, the GUE probability is not negligible and converges to $F_{k}(2u)$. When $|z|>1+\delta$ with $\delta>0$ fixed, a different type of expansion holds, see Appendix \ref{app:gff} for details. 

As we already mentioned, Painlev\'e IV type objects also appeared in the recent work \cite{BGG18}, which (taking into account relation \eqref{ginred}) is likely to be related with our result here, see Section \ref{se:lem}. Analogous results on the line are known: the work \cite{BCI16} considers a jump singularity colliding with the spectral edge of GUE random matrices, which they describe in terms of Painlev\'e II.

\subsection{Complex Ginibre ensemble: Bulk collision of two singularities}
\label{se:ginibre}
Consider the particular case $m=2$ of \eqref{correlationfn} and set
\begin{equation}
z_{1} = z + \frac{u_1}{\sqrt{N}}, \qquad z_{2} = z + \frac{u_{2}}{\sqrt{N}}, \label{z1z2scaling}
\end{equation}
where $z$ is a point in the bulk, \textit{i.e.} $|z| < 1-\delta$ for some fixed $\delta>0$. That is, we consider the case of two singularities merging (or colliding) inside the bulk of the spectrum, see Figure \ref{fig_Ginibre1}. The corresponding asymptotics of \eqref{correlationfn} turns out to be related to the LUE (Laguerre Unitary Ensemble) defined as follows: Let $G_{k,p}$ denote a rectangular $k \times p$ matrix of \textit{i.i.d.} independent standard complex normal random variables with $p\geq k$. Then the matrix $W = G_{k,p}G^{\dagger}_{k,p}$ is said to be distributed according to the LUE with $p$ degrees of freedom, denoted $\mathrm{LUE}_{k,p}$, though sometimes it is convenient to refer to the parameter $\alpha = p-k$. The names \textit{Wishart matrix} or \textit{sample covariance matrix} are also in common use, see the text \cite{Forrester_loggas} for further background. The largest eigenvalue of $W$ will be denoted $\lambda^{(\mathrm{LUE}_{k,p})}_{\mathrm{max}}$.
\begin{theorem}
\label{th:twocharge}
Let $k_{1},k_{2} \in \mathbb{N}$, where we assume without loss of generality that $k_1 \geq k_2$ and take $z_1,z_2$ scaled according to \eqref{z1z2scaling}. Then the following holds uniformly for $u_1$ and $u_2$ varying in compact subsets of $\mathbb{C}$,
\begin{equation}
\begin{split}
\mathbb{E}&\left(|\det(G_{N}-z_1)|^{2k_1}|\det(G_{N}-z_2)|^{2k_2}\right) = e^{k_{1}N(|z_1|^{2}-1)+k_{2}N(|z_2|^{2}-1)+\frac{k_1^{2}}{2}\log(N)+\frac{k_{2}^{2}}{2}\log(N)}\\
&\times e^{-2k_1k_2\log|z_2-z_1|}\,\frac{(2\pi)^{(k_1+k_2)/2}}{G(1+k_1)G(1+k_2)}F_{k_1,k_2}(N|z_2-z_1|^{2})(1+o(1)), \qquad N \to \infty, \label{corr2}
\end{split}
\end{equation}
where $F_{k_1,k_2}(x)$ is the distribution function of the largest eigenvalue of a $k_2 \times k_2$ LUE matrix with $k_1$ degrees of freedom:
\begin{equation}
F_{k_1,k_2}(x) = \mathbb{P}\left(\lambda_{\mathrm{max}}^{(\mathrm{LUE}_{k_2,k_1})} < x\right). \label{LUEprob}
\end{equation}
\end{theorem}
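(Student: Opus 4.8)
The plan is to reduce the average on the left of \eqref{corr2} to an exact finite-$N$ identity whose large-$N$ behaviour can be read off, in the same spirit as the $m=1$ duality used for Theorem \ref{th:onecharge} in Section \ref{se:dualitylue}. The eigenvalues of $G_{N}$ form a determinantal process whose monic orthogonal polynomials are the monomials $\lambda\mapsto\lambda^{j}$, with squared norms $h_{j}=\pi\,j!\,N^{-j-1}$, so by a Heine--Andr\'eief computation $\mathbb{E}\big(\prod_{i}\det(x_{i}-G_{N})\,\overline{\det(y_{i}-G_{N})}\big)$ for pairwise distinct parameters equals a determinant built from the associated polynomial kernel evaluated at the points $(x_{i},\bar{y}_{j})$, divided by Vandermonde factors. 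Coalescing $k_{1}$ of the $x$'s (and of the $y$'s) at $z_{1}$ and $k_{2}$ of them at $z_{2}$ by repeated differentiation (l'H\^{o}pital) --- legitimate precisely because $k_{1},k_{2}\in\mathbb{N}$ --- turns this into an exact formula for the left-hand side of \eqref{corr2} as a finite (Wronskian-type) determinant in the kernel and its derivatives at $z_{1},z_{2}$, equivalently a finite-dimensional integral in which $N$ now enters as a large parameter of the integrand rather than as a dimension. This is a form of duality between the $N$-dimensional Ginibre problem and a problem of dimension $\sim k_{1}+k_{2}$.

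Next I would substitute the scaling \eqref{z1z2scaling} and pull out of the dual object all factors depending on $z_{1},z_{2}$ only through $|z_{1}|$, $|z_{2}|$ and $|z_{1}-z_{2}|$. These produce exactly the prefactors in \eqref{corr2}: the linear exponents $k_{1}N(|z_{1}|^{2}-1)+k_{2}N(|z_{2}|^{2}-1)$ and the term $\tfrac{1}{2}(k_{1}^{2}+k_{2}^{2})\log N$ are the `classical' (saddle) contribution; $|z_{2}-z_{1}|^{-2k_{1}k_{2}}$ is the Fisher--Hartwig-type interaction between the two insertions, in agreement with the logarithmically correlated structure of $\phi_{N}(z)=\log|\det(G_{N}-z)|$ (for which $\mathrm{Cov}(\phi_{N}(z_{1}),\phi_{N}(z_{2}))\sim-\tfrac{1}{2}\log|z_{1}-z_{2}|$); and $(2\pi)^{(k_{1}+k_{2})/2}/\big(G(1+k_{1})G(1+k_{2})\big)$ is the normalization inherited from the one-point function \eqref{webbwongform}. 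What is left is a $k_{2}$-dimensional integral depending on $z_{1},z_{2}$ only through $x=N|z_{2}-z_{1}|^{2}$; after rescaling the integration variables it becomes $\tfrac{1}{Z_{k_{2},k_{1}}}\int_{[0,x]^{k_{2}}}\prod_{a<b}(t_{a}-t_{b})^{2}\prod_{a}t_{a}^{k_{1}-k_{2}}e^{-t_{a}}\,dt$, which is precisely the gap probability $\mathbb{P}\big(\lambda^{(\mathrm{LUE}_{k_{2},k_{1}})}_{\mathrm{max}}<x\big)=F_{k_{1},k_{2}}(x)$ of \eqref{LUEprob}.

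The asymptotic statement then follows by estimating the dual integral with Laplace's method / steepest descent as $N\to\infty$, uniformly for $u_{1},u_{2}$ in compact subsets of $\mathbb{C}$: the $z$-dependent saddle contribution reproduces the prefactors above with relative error $1+o(1)$, while the factor retaining the variable $x=N|z_{2}-z_{1}|^{2}$ is exactly $F_{k_{1},k_{2}}(x)$. Two useful consistency checks fix all multiplicative constants: first, letting the scaled separation $N|z_{2}-z_{1}|^{2}\to\infty$ (so $F_{k_{1},k_{2}}\to1$) degenerates \eqref{corr2} into the product of two copies of \eqref{webbwongform} with $\gamma=2k_{1}$ and $\gamma=2k_{2}$ times the residual interaction $|z_{2}-z_{1}|^{-2k_{1}k_{2}}$, as the decoupling of the two singularities demands; second, at $u_{1}=u_{2}$ one has $F_{k_{1},k_{2}}(x)\sim c\,x^{k_{1}k_{2}}$ as $x\to0$ with $c=G(1+k_{1})G(1+k_{2})/G(1+k_{1}+k_{2})$ (a Selberg integral), and $|z_{2}-z_{1}|^{-2k_{1}k_{2}}F_{k_{1},k_{2}}(N|z_{2}-z_{1}|^{2})$ then collapses \eqref{corr2} back onto the one-point formula of Theorem \ref{th:onecharge} with exponent $2(k_{1}+k_{2})$.

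I expect the main obstacle to be the first two steps: bringing the exact finite-$N$ identity into a form in which the surviving finite-dimensional integral is \emph{manifestly} the $\mathrm{LUE}_{k_{2},k_{1}}$ gap probability --- rather than merely some $k_{2}\times k_{2}$ determinant in incomplete-gamma-type entries --- requires careful control of the point-coalescence limit and of all normalizing constants, and the uniform error estimate has to be propagated across the whole crossover $N|z_{2}-z_{1}|^{2}\in[0,\infty)$, interpolating between the regime in which the two singularities interact and the regime in which they decouple. The steepest-descent step itself, once the representation is in hand, is essentially routine.
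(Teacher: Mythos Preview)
Your outline shares its starting point with the paper's proof --- the Akemann--Vernizzi determinant formula \eqref{akevern} for the product of characteristic polynomials, followed by coalescence of the $k=k_1+k_2$ points into two groups --- and your identification of the prefactors is correct. But the crucial mechanism that produces the LUE gap probability is missing from your plan, and it is not a Laplace/steepest-descent phenomenon.

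In the paper, the limit $N\to\infty$ is taken \emph{before} the merging is carried out: under the bulk scaling the polynomial kernel converges uniformly (with all derivatives) to the exponential $f(u,\overline v)=e^{u\overline v}$, so the ratio in \eqref{akevern} tends to
\[
\frac{\det\{e^{u_i\overline v_j}\}_{i,j=1}^{k}}{\Delta(\vec u)\,\overline{\Delta(\vec v)}}\;=\;\frac{1}{G(1+k)}\int_{U(k)}e^{\mathrm{Tr}(UAU^\dagger\overline B)}\,d\mu(U),
\]
the Harish--Chandra--Itzykson--Zuber integral. The degenerate merging $(u_1,\dots,u_{k_1})\to u_1$, $(u_{k_1+1},\dots,u_k)\to u_2$ (and similarly for $v$) is then performed \emph{inside} the HCIZ integral by taking $A,\overline B$ block-constant. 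A short computation gives $\mathrm{Tr}(UAU^\dagger\overline B)=k_1u_1\overline v_1+k_2u_2\overline v_2-(u_2-u_1)(\overline v_2-\overline v_1)\,\mathrm{Tr}(cc^\dagger)$, where $c$ is the $k_2\times k_1$ off-diagonal sub-block of $U$. The key step is that the Haar projection onto $c$ induces the Jacobi ensemble on the eigenvalues of $cc^\dagger$; integrating against $e^{-s\,\mathrm{Tr}(cc^\dagger)}$ with $s=|u_2-u_1|^2$ and rescaling $t_j\to t_j/s$ then yields exactly the $\mathrm{LUE}_{k_2,k_1}$ gap probability, together with the factor $s^{-k_1k_2}$ and the Barnes-$G$ constants via the Selberg evaluation \eqref{rationorms}.

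Your proposal jumps from ``what is left is a $k_2$-dimensional integral'' to the LUE density without a device to get there. Doing the merging first at finite $N$ gives a $(k_1+k_2)\times(k_1+k_2)$ block-Wronskian in incomplete-gamma entries, and there is no evident saddle-point reduction of that object to a $k_2$-fold Laguerre integral; the reduction in dimension from $k_1+k_2$ to $k_2$ is exactly what the HCIZ/sub-block argument supplies. So the step you flagged as the ``main obstacle'' is indeed the whole point, and the missing idea is the HCIZ identification together with the Haar-measure projection onto the sub-block.
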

\begin{proof}
See Section \ref{se:bulk}.
\end{proof}
\begin{remark}
As with Theorem \ref{th:onecharge}, the probability $F_{k_1,k_2}(x)$ has a characterization in terms of a non-linear second order differential equation. In the same work of Tracy and Widom \cite{TW94} one finds the result
\begin{equation}
F_{k_1,k_2}(x) = \mathrm{exp}\left(-\int_{x}^{\infty}\frac{\sigma^{(\mathrm{V})}_{k_2,k_1-k_2}(t)}{t}\,dt\right), \label{LUEfk1k2}
\end{equation}
where $\sigma^{(\mathrm{V})}_{k,\alpha}(t)$ satisfies the Jimbo-Miwa-Okamoto $\sigma$-form of the Painlev\'e V equation,
\begin{equation}
(t\sigma'')^{2}-[\sigma-t\sigma'+2(\sigma')^{2}+(2k+\alpha)\sigma']^{2}+4(\sigma')^{2}(k+\alpha+\sigma')(k+\sigma')=0. \label{pvintro}
\end{equation}
\end{remark}
If we consider a limiting case where the separation $|u_{2}-u_{1}| \to \infty$ we can replace the LUE probability with $1$. The remaining terms in the asymptotic expansion \eqref{corr2} are consistent with a conjectured formula of Webb and Wong \cite{WW18} for the \textit{global asymptotics} of \eqref{correlationfn} (defined by a separation between $z_{2}$ and $z_{1}$ of order $O(1)$). Indeed, the appearance of the term $-2k_{1}k_{2}\log|z_{2}-z_{1}|$ should be interpreted as the covariance function of the random process $\phi_{N}(z) := \log|\det(G_{N}-z)|$. Such \textit{logarithmic correlations} are widely anticipated to appear at global or mesoscopic regimes because of the association of $\phi_{N}(z)$ with the Gaussian free field \cite{RV07, AHM11}. 

We remark that the particular case of \eqref{corr2} with $k_1=k_2$ and $z=0$ appeared in the mentioned work of Nishigaki and Kamenev \cite{NK02} using a technique based on Grassmann integration, but the terms of order $\log(N)$ appearing in \eqref{corr2} appear to be absent in \cite{NK02}. This could be due to their intended application to replica limits which could render such terms negligible in their context. Finally we remark that a one-dimensional version of this merging of singularities has been investigated in the context of Toeplitz and Hankel determinants, with non-integer exponents, and also involves Painlev\'e V \cite{CK15,CF16}.

As with Theorem \ref{th:onecharge}, we believe that Theorem \ref{th:twocharge} continues to hold when $k_1$ and $k_2$ are general real numbers such that the left-hand side of \eqref{corr2} is finite, with equation \eqref{pvintro} defining the extrapolation to such non-integral exponents. To give support to this we give an example involving (partially) non-integer exponents, as follows.
\begin{theorem}
\label{th:nonintegerbulk}
Let $k_{2} \in \mathbb{N}$ and $\gamma$ be a real number such that $\gamma \geq k_{2}$. Consider the left-hand side of \eqref{corr2} scaled according to \eqref{z1z2scaling} at the center of the eigenvalue support, $z=0$, with $u_{1}=0$ and set $k_{1}=\gamma$. Then the expansion on the right-hand side of \eqref{corr2} remains true with $k_{1} = \gamma$.
\end{theorem}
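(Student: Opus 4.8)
The plan is to reduce the mixed integer/non-integer case to a setting where we can apply the finite-$N$ exact identities already used in the proof of Theorem \ref{th:twocharge}, and then push through the Riemann-Hilbert (or orthogonal polynomial) asymptotics without relying on integrality of $k_1$. First I would fix $z=0$, $u_1=0$, $z_1=0$ and $z_2 = u_2/\sqrt{N}$, so the left-hand side of \eqref{corr2} becomes $\mathbb{E}\big(|\det(G_N)|^{2\gamma}|\det(G_N - u_2/\sqrt{N})|^{2k_2}\big)$. Because $|\det(G_N)|^{2\gamma}$ is a single Fisher-Hartwig-type insertion at the origin with a general real exponent, this average can be written, after integrating out the angular variables and using the rotation invariance of the Ginibre measure, as a determinant (or a Hankel-type integral) whose weight is $|w|^{2\gamma} e^{-N|w|^2}$ perturbed by the $k_2$-th power factor. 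The key point is that the exponent $2\gamma$ only enters through the weight $|w|^{2\gamma}$, so it may be kept arbitrary real with $\gamma \geq k_2$ ensuring integrability and the finiteness of all the relevant moments.

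Next I would exploit the same duality that underlies Theorem \ref{th:twocharge}: the $2k_2$-th moment of a characteristic polynomial against a planar weight can be traded, via an Andr\'eief/Heine-type identity, for an average over a $k_2 \times k_2$ random matrix, here a Laguerre-type ensemble whose parameter $\alpha$ is shifted by $\gamma - k_2$ rather than by an integer. Concretely, the insertion $|\det(G_N - z_2)|^{2k_2}$ with the weight $|w|^{2\gamma}e^{-N|w|^2}$ should produce, after the duality, a $k_2$-dimensional integral of Laguerre type with parameter $\alpha = \gamma$ (playing the role of "$k_1$ degrees of freedom" in \eqref{LUEprob}) but with $k_1 = \gamma$ now real; this is exactly the object whose largest-eigenvalue distribution is $F_{\gamma,k_2}$, with the Tracy-Widom characterization \eqref{LUEfk1k2}-\eqref{pvintro} remaining valid for real $\alpha$ since that $\sigma$-form does not require $k_1 \in \mathbb{N}$. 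The prefactors $e^{k_1 N(|z_1|^2-1)}$, $\frac{k_1^2}{2}\log N$, $-2k_1k_2\log|z_2-z_1|$ and $(2\pi)^{k_1/2}/G(1+k_1)$ should all emerge from the normalization of this Laguerre integral together with the leading behaviour of the weight; here one uses that $\gamma\mapsto G(1+\gamma)$ and $\gamma\mapsto(2\pi)^{\gamma/2}$ are the natural analytic continuations, and the Barnes function appears exactly as in Theorem \ref{th:ww}.

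The asymptotic step is then to show that the $k_2$-dimensional Laguerre-type integral, after the scaling $x = N|z_2-z_1|^2 = |u_2|^2$, converges to $F_{\gamma,k_2}(|u_2|^2)$ uniformly for $u_2$ in compact subsets of $\mathbb{C}$. Since $k_2$ is a fixed integer, this is a finite-dimensional Laplace/steepest-descent analysis: one rescales the Laguerre variables near the hard edge, checks that the weight $|w|^{2\gamma}e^{-N|w|^2}$ contributes the correct edge kernel limit, and identifies the limiting correlation kernel with the Bessel-type kernel governing $\lambda_{\max}^{(\mathrm{LUE}_{k_2,\gamma})}$; the non-integer $\gamma$ only changes the Bessel parameter, not the structure of the argument. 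I expect the main obstacle to be bookkeeping the precise constants and the $o(1)$ uniformity: one must track how the $|w|^{2\gamma}$ factor interacts with the Gaussian weight under the microscopic rescaling, and verify that the error terms remain uniform as $u_2$ ranges over a compact set (in particular as $u_2 \to 0$, where the two singularities fully merge). A secondary technical point is justifying the duality identity itself for real $\gamma$ — this is really just analytic continuation in $\gamma$ of an identity valid for $\gamma \in \mathbb{N}$, but one should confirm that both sides are analytic in $\gamma$ on $\{\gamma \geq k_2\}$ (or a complex neighbourhood thereof) so that the continuation is legitimate. Given these, the theorem follows by matching the continued finite-$N$ identity with the hard-edge Laguerre asymptotics.
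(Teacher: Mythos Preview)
Your overall strategy---treating the $|\det(G_N)|^{2\gamma}$ factor as part of the reference weight $w(\lambda)=|\lambda|^{2\gamma}e^{-N|\lambda|^2}$ and then applying a $k_2$-dimensional duality---is exactly what the paper does. But you have introduced an unnecessary worry: no analytic continuation in $\gamma$ is needed anywhere. The Akemann--Vernizzi identity \eqref{akevern} requires only that the \emph{number of charges} $k_2$ be a positive integer; the parameter $\gamma$ enters solely through the radial weight, for which the planar orthogonal polynomials are still the monomials $p_j(\lambda)=\lambda^j$ with norms $h_j=\pi N^{-j-\gamma-1}\Gamma(j+\gamma+1)$. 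So the finite-$N$ determinantal formula \eqref{inducedpartials} is obtained directly, with $\gamma$ real throughout.

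The asymptotic step is also more elementary than the hard-edge/Bessel-kernel route you sketch. The paper observes that the rescaled kernel $f_N(u,v)=v^{\gamma}\sum_{j=0}^{N-1}(uv)^j/\Gamma(j+\gamma+1)$ converges uniformly (with all derivatives) to $f(u,v)=v^{\gamma}\sum_{j\ge 0}(uv)^j/\Gamma(j+\gamma+1)$, which has the integral representation $f(u,v)=\Gamma(\gamma)^{-1}\int_0^v e^{ut}(v-t)^{\gamma-1}\,dt$. The merged determinant \eqref{mergeakevern} is then computed \emph{exactly} for the limiting kernel via the Andr\'eief identity, producing the $k_2$-fold Laguerre integral $\int_{[0,uv]^{k_2}}\prod_j t_j^{\gamma-k_2}e^{-t_j}\Delta^2(\vec t)\,d\vec t$ and hence $\mathbb{P}(\lambda_{\max}^{(\mathrm{LUE}_{k_2,\gamma})}<|u_2|^2)$ directly. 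No steepest descent, no Bessel parametrix, no analytic continuation---just uniform convergence of a fixed-size determinant followed by an explicit finite-dimensional calculation. The Barnes-$G$ and $(2\pi)^{\gamma/2}$ factors emerge from the Stirling asymptotics of $\mathcal{Z}_N^{(\mathrm{Ind}_\gamma)}/\mathcal{Z}_N^{(\mathrm{Gin})}$, exactly as in \eqref{barnesasy}.
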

\begin{proof}
See Section \ref{se:bulk}.
\end{proof}
As we explain in Section \ref{se:bulk}, Theorem \ref{th:nonintegerbulk} can be viewed as an asymptotic expansion for the $m=1$ case of \eqref{correlationfn}, with $G_{N}$ replaced by a certain non-Gaussian matrix: the \textit{induced Ginibre ensemble} introduced and studied in \cite{FBKSZ}, see equation \eqref{corr2proved}. 
\subsection{Truncated unitary ensemble and Painlev\'e transcendents}
Let $M$ be a positive integer and consider $U(M)$, the group of $M \times M$ unitary matrices. The CUE (Circular Unitary Ensemble) is $U(M)$ equipped with the normalized Haar measure $d\mu(U)$, where $U \in U(M)$. A \textit{truncation} of $U$ is simply a sub-block, which by invariance of the Haar measure we can take to be the upper-left $N \times N$ sub-block of $U$, that we denote here by $T$ and assume $N < M$. The eigenvalues $\lambda_1,\ldots,\lambda_N$ of the sub-unitary matrix $T$ lie strictly inside the unit disc $|\lambda_{j}| \leq 1$ for all $j=1,\ldots,N$. Their joint probability density function has been computed by \.{Z}yczkowski and Sommers \cite{ZS00} as 
\begin{equation}
P(\lambda_1,\ldots,\lambda_N) \propto \prod_{j=1}^{N}(1-|\lambda_{j}|^{2})^{M-N-1}\prod_{1 \leq i < j \leq N}|\lambda_{j}-\lambda_{i}|^{2}. \label{zspdf}
\end{equation}
Using \eqref{zspdf} the authors showed that statistics of the eigenvalues depend sensitively on the relative scaling of $N$ and $M$ as $M \to \infty$. Roughly speaking, unless $N$ is very close to $M$, the matrix $T$ is expected to have similar spectral characteristics as the Ginibre matrix $G_{N}$ from the previous sections. However, when $M-N = \kappa = O(1)$, so that only a constant order of rows and columns are truncated, one expects new behaviour. This is known as the \textit{weak non-unitarity limit}, see \cite{FS03,KS11} for further background and applications.

\begin{theorem}[Boundary asymptotics]
\label{th:tcueonecharge}
Let $M-N = \kappa$ be a fixed positive integer and fix $k \in \mathbb{N}$. Consider the boundary scaling
\begin{equation}
|z| = 1-\frac{u}{N}, \qquad u>0.
\end{equation}
Then the following asymptotic formula holds
\begin{equation}
\mathbb{E}\left(|\det(T-z)|^{2k}\right) = N^{k^{2}}\,\frac{G(k+\kappa+1)}{G(k+1)G(\kappa+1)}\,(2u)^{-k^{2}-k\kappa}\,F_{k+\kappa,k}(2u)\left(1+o(1)\right), \qquad N \to \infty, \label{tcueresult}
\end{equation}
where $F_{k+\kappa,k}(x)$, as in \eqref{LUEprob}, is the distribution function of the largest eigenvalue in the $k \times k$ LUE with $k+\kappa$ degrees of freedom,
\begin{equation}
F_{k+\kappa,k}(x) = \mathbb{P}\left(\lambda^{(\mathrm{LUE}_{k,k+\kappa})}_{\mathrm{max}} < x\right). \label{LUEtruncatedCUE}
\end{equation}
\end{theorem}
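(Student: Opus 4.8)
The plan is to follow the same route as the proof of Theorem~\ref{th:onecharge}: first obtain an exact finite-$N$ formula for $\mathbb{E}(|\det(T-z)|^{2k})$ as a $k\times k$ determinant of incomplete Beta functions, then identify that determinant with a gap probability of a classical Jacobi ensemble of size $k$, and finally pass to the boundary (hard-edge) scaling limit.

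\textbf{Step 1: exact formula at finite $N$.} By \eqref{zspdf} the eigenvalues of $T$ form a rotationally invariant planar determinantal ensemble with weight $w(\lambda)=(1-|\lambda|^{2})^{\kappa-1}$ on the unit disc; hence the monic planar orthogonal polynomials are the monomials $\lambda^{n}$, with squared norms $h_{n}=\pi\,B(n+1,\kappa)=\pi\,\Gamma(\kappa)\,\Gamma(n+1)/\Gamma(n+\kappa+1)$. Writing $\prod_{j}|\lambda_{j}-z|^{2k}=\prod_{j}(\lambda_{j}-z)^{k}(\bar\lambda_{j}-\bar z)^{k}$, absorbing these factors into the Vandermonde by the confluent Vandermonde identity and integrating out the angular variables by Andr\'eief's identity (exactly as in Section~\ref{se:dualitylue}), one is led to an identity of the shape
\begin{equation*}
\mathbb{E}\left(|\det(T-z)|^{2k}\right)=C_{N,k,\kappa}\,\det\!\left[\int_{|z|^{2}}^{1}t^{\,N+\mu}\,(1-t)^{\kappa-1+i+j}\,dt\right]_{i,j=0}^{k-1},
\end{equation*}
with a fixed shift $\mu$, where $C_{N,k,\kappa}$ is an explicit constant assembled from the norms $h_{n}$ and from the (Selberg-type) partition function of \eqref{zspdf}. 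Equivalently, up to $C_{N,k,\kappa}$ this expresses the left-hand side as the probability that all $k$ eigenvalues of a size-$k$ Jacobi $\beta=2$ ensemble on $[0,1]$ lie in the interval $(|z|^{2},1)$ adjacent to the hard edge $t=1$; which of the two complementary intervals occurs is forced by positivity of the entries, and is what eventually turns the answer into a largest-eigenvalue law rather than its complement.

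\textbf{Step 2: boundary scaling.} Substitute $|z|^{2}=1-2u/N+O(N^{-2})$ and change variables $x=N(1-t)$ in every entry. Since $t^{\,N+\mu}=(1-x/N)^{N+\mu}\to e^{-x}$ uniformly for $x$ in compact sets, each entry converges, after extracting an overall power of $N$, to $\int_{0}^{2u}e^{-x}x^{\kappa-1+i+j}\,dx=\gamma(\kappa+i+j+1,2u)$: this is precisely the classical hard-edge (Jacobi-to-Laguerre) degeneration of the size-$k$ Jacobi ensemble of Step~1 into the $\mathrm{LUE}_{k,k+\kappa}$. Running Andr\'eief's identity in reverse, the limiting determinant is proportional to $\mathbb{P}\big(\lambda_{\max}^{(\mathrm{LUE}_{k,k+\kappa})}<2u\big)=F_{k+\kappa,k}(2u)$. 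It then remains to evaluate the prefactor: Stirling's formula applied to the $\Gamma$-quotients inside $C_{N,k,\kappa}$, together with the power of $N$ pulled out of the determinant and the Jacobian $(N(1-|z|^{2}))^{-k(k+\kappa)}=(2u)^{-k^{2}-k\kappa}(1+o(1))$, produces the headline factor $N^{k^{2}}$, while the remaining constants reorganize, via $\prod_{j=0}^{k-1}\Gamma(j+1)=G(k+1)$ and the corresponding $\mathrm{LUE}_{k,k+\kappa}$ normalization, into $\tfrac{G(k+\kappa+1)}{G(k+1)G(\kappa+1)}\,(2u)^{-k^{2}-k\kappa}$, giving \eqref{tcueresult}. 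Uniformity of the $o(1)$ for $u$ in compact subsets of $(0,\infty)$ comes from the uniform limit $(1-x/N)^{N}\to e^{-x}$ on compacts together with a dominated-convergence bound over the whole integration range.

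\textbf{Main obstacle.} The computations in Step~1 are routine but bookkeeping-intensive; the genuine difficulty is Step~2, namely extracting the \emph{exact} constant $\tfrac{G(k+\kappa+1)}{G(k+1)G(\kappa+1)}(2u)^{-k^{2}-k\kappa}$ and the precise power $N^{k^{2}}$. This forces one to match three normalizations with no slack --- the planar norms $h_{n}=\pi B(n+1,\kappa)$, the Selberg partition function of \eqref{zspdf}, and the partition function of $\mathrm{LUE}_{k,k+\kappa}$ --- and in particular to pin down the shift $\mu$ and the resulting Laguerre parameter. A secondary technical point is upgrading the entrywise convergence of the Hankel determinant to convergence of the determinant that is uniform in $u$ on compacts, which requires the convergence $(1-x/N)^{N+\mu}\to e^{-x}$ with an integrable majorant valid on all of $\big(0,N(1-|z|^{2})\big)$, not just on compact $x$-sets.
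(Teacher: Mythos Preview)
Your overall strategy---establish a finite-$N$ Jacobi-type duality, then take a hard-edge limit to Laguerre---is exactly the paper's route. The execution differs in two places, and in both the paper's version is cleaner and sidesteps precisely the obstacles you flag.

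\textbf{Step 1 (finite-$N$ duality).} The paper does not use a confluent Vandermonde absorption; it uses the Akemann--Vernizzi formula \eqref{akevern} (this is also what Section~\ref{se:dualitylue} actually does, so your ``exactly as in'' is a misreading). The key trick is to write the truncated CUE polynomial kernel via Euler's hypergeometric integral,
\[
B_{N+k}(x,y)\;\propto\; x^{N+k}\int_{0}^{\infty}(1+tx)^{-N-k-\kappa-1}(t+y)^{N+k-1}\,dt,
\]
which separates the $x$ and $y$ dependence. Differentiating in $x$ and $y$, applying Andr\'eief, and simplifying yields the clean $k$-fold integral
\[
\mathbb{E}\big(|\det(T-z)|^{2k}\big)=\frac{1}{C_{k}^{(\mathrm{JUE}_{\kappa+N,0})}}\int_{[0,1]^{k}}\prod_{j=1}^{k}t_{j}^{\kappa}\big(1+(|z|^{2}-1)t_{j}\big)^{N}\Delta^{2}(\vec t)\,d\vec t,
\]
with \emph{no} unspecified shift $\mu$. (If you Andr\'eief this into a Hankel determinant the exponent is $\kappa+i+j$, not $\kappa-1+i+j$.)

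\textbf{Step 2 (boundary limit).} Because the integral above lives on the fixed cube $[0,1]^{k}$, the limit is trivial: with $|z|^{2}-1=-2u/N+O(N^{-2})$ one has $(1+(|z|^{2}-1)t_{j})^{N}\to e^{-2u t_{j}}$ uniformly on $[0,1]$, and dominated convergence gives
\[
\int_{[0,1]^{k}}\prod_{j}t_{j}^{\kappa}e^{-2u t_{j}}\Delta^{2}(\vec t)\,d\vec t
=(2u)^{-k^{2}-k\kappa}\,C_{k}^{(\mathrm{LUE}_{\kappa})}\,F_{k+\kappa,k}(2u).
\]
There is no growing integration range and no integrable-majorant issue: your ``secondary technical point'' simply does not arise. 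The constants then collapse via $C_{k}^{(\mathrm{JUE}_{\kappa+N,0})}=G(1+k)G(2+k)N^{-k^{2}}(1+o(1))$ and the explicit $C_{k}^{(\mathrm{LUE}_{\kappa})}$, giving the stated $N^{k^{2}}\,G(k+\kappa+1)/(G(k+1)G(\kappa+1))$ with no ambiguity.

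In short: your plan is correct, but the paper's choice of variables (keep the integral on $[0,1]^{k}$ and let the weight, not the domain, carry $|z|$) makes both the derivation of the duality and the passage to the limit essentially one-line, eliminating the bookkeeping you identify as the main obstacle.
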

\begin{proof}
See Section \ref{se:tcue}.
\end{proof}
\begin{remark}
As in Theorems \ref{th:twocharge} and \ref{th:onecharge}, the quantity \eqref{LUEtruncatedCUE} is expressible in terms of Painlev\'e transcendents. The results of \cite{TW94} imply that \eqref{LUEtruncatedCUE} is given by
\begin{equation}
F_{k+\kappa,k}(x) = \mathrm{exp}\left(-\int_{x}^{\infty}\frac{\sigma^{(\mathrm{V})}_{k,\kappa}(t)}{t}\,dt\right)
\end{equation}
where $\sigma^{(\mathrm{V})}_{k,\kappa}(t)$ satisfies the $\sigma$-form of Painlev\'e V \eqref{pvintro} with parameter $\alpha=\kappa$. Via this representation, our expansion \eqref{tcueresult} can be interpreted for both non-integer $k$ and non-integer $\kappa$.
\end{remark}
\begin{figure}
\label{fig_tCUE1}
\begin{center}
\begin{overpic}[scale=.5]{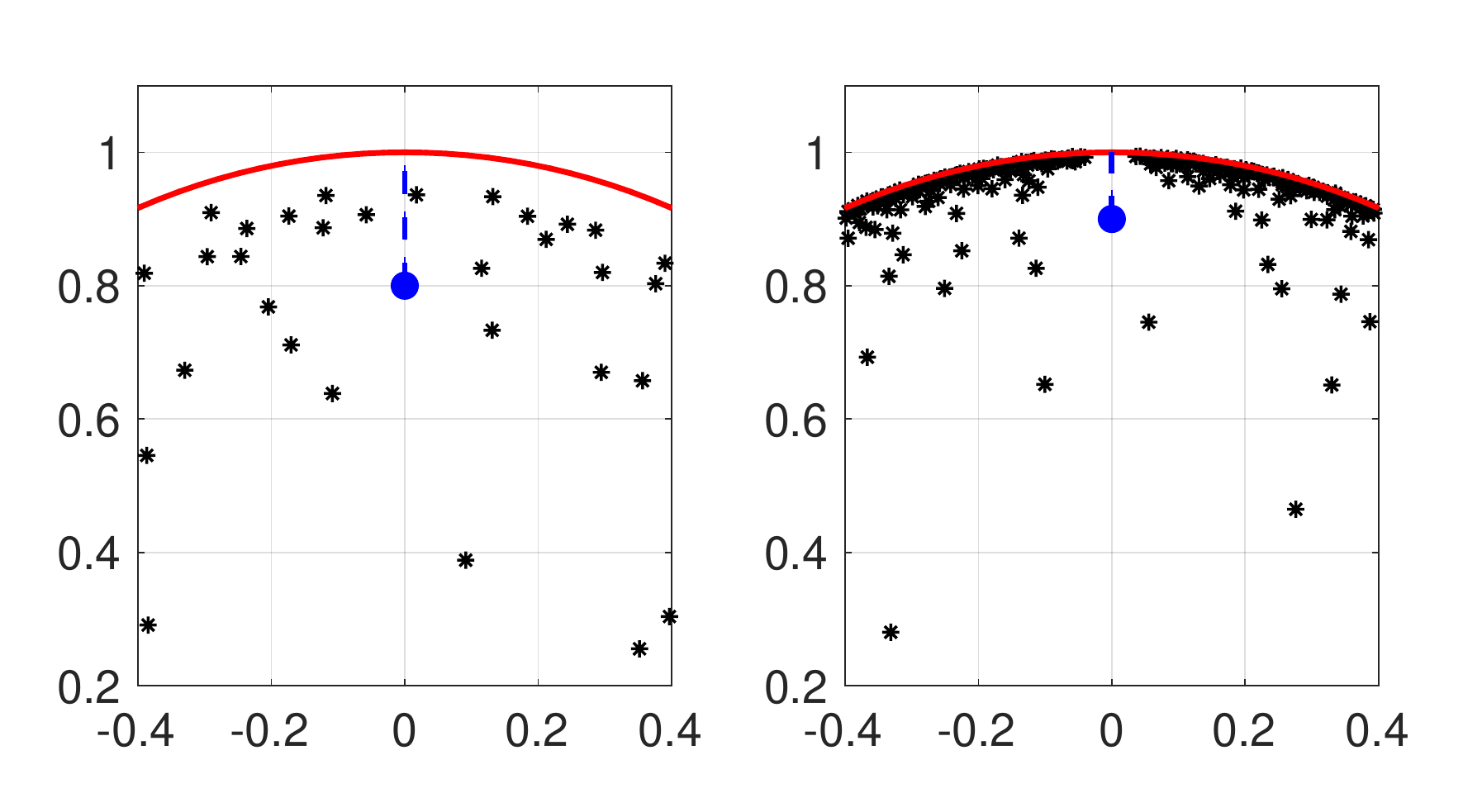}
%\put(5,45){\color{blue}\huge\LaTeX}
%\put(40,50){\color{blue}\line(10,8){4}}
%\put(75,70){\color{red}\rule{1pt}{30pt}}
%\put(51.65,82){\color{blue}\rule{1pt}{12pt}}%\line(1,2){1.225}}
\put(29,46){\color{blue}\small{$\mathcal{O}(N^{-1})$}}
\put(29,34){\color{blue}{$z$}}
\put(77,46){\color{blue}\small{$\mathcal{O}(N^{-1})$}}
\put(77,38){\color{blue}{$z$}}
\end{overpic}
\caption{Schematic insertion of a charge $z$ near the boundary in the truncated CUE ensemble. On the left $M=200$ and $N=180$, and on the right $M=2000$ and $N=1980$. The unit circle is depicted in red.}
\end{center}
\end{figure}

Compared to Theorem \ref{th:onecharge}, an interesting point of departure concerns the change from Painlev\'e IV to Painlev\'e V appearing in the constant term of the asymptotic expansion \eqref{tcueresult}. One notable difference here is that the boundary forms a \textit{hard edge}: eigenvalues of $T$ are forbidden from leaving the unit disc and thus a different boundary behaviour can be expected. Properties of two-dimensional models of type \eqref{law} with a hard edge have been studied recently \cite{AKM19} and were shown to deviate from the boundary statistics known for the Ginibre ensemble.

For our final result on the truncated unitary ensemble, we state an analogue of \eqref{tcueresult} at finite $N$. In order to state the result, let $W_1$ and $W_2$ be two independent LUE matrices with $p_1$ and $p_2$ degrees of freedom, both of dimension $k$. The ratio of LUE matrices
\begin{equation}
J = W_{1}(W_{1}+W_{2})^{-1}
\end{equation}
has eigenvalues supported on the interval $[0,1]$. This set of eigenvalues is known as the \textit{Jacobi Unitary Ensemble}, again see \cite{Forrester_loggas} for further background. We denote the largest eigenvalue of $J$ by $\lambda^{(\mathrm{JUE}_{k,\alpha,\beta})}_{\mathrm{max}}$ where $\alpha=p_{1}-k$ and $\beta = p_{2}-k$ is standard notation for the parameters of the ensemble.
\begin{theorem}[Truncated CUE and Painlev\'e VI]
\label{th:tcuefiniten}
Let $k \in \mathbb{N}$ and $|z|<1$. Then we have the exact identity
\begin{equation}
\mathbb{E}\left(|\det(T-z)|^{2k}\right) = \frac{C_{M,N,k}}{(1-|z|^{2})^{k(M-N)+k^{2}}}\mathbb{P}\left(\lambda^{(\mathrm{JUE}_{k,M-N,N})}_{\mathrm{max}} < 1-|z|^{2}\right), \label{JUEprobab}
\end{equation}
where
\begin{equation}
C_{M,N,k} := \mathbb{E}\left(|\det(T)|^{2k}\right) = \prod_{j=0}^{k-1}\frac{\Gamma(M-N+1+j)\Gamma(N+1+j)}{\Gamma(M+1+j)\Gamma(1+j)}. \label{r2kzero}
\end{equation}
\end{theorem}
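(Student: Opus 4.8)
The plan is to pass, at finite $N$, from the $N$-point \.{Z}yczkowski--Sommers ensemble \eqref{zspdf} to a $k\times k$ determinant via the same characteristic-polynomial duality that underlies Theorem~\ref{th:onecharge}, to recognise this determinant as a Hankel-determinant representation of the Jacobi gap probability appearing in \eqref{JUEprobab}, and finally to fix the prefactor $C_{M,N,k}$ by specialising $z=0$. By rotational invariance of \eqref{zspdf} the left-hand side of \eqref{JUEprobab} depends only on $|z|$, so we may take $z\in[0,1)$; writing $|\det(T-z)|^{2k}=\prod_{j=1}^{N}(\lambda_j-z)^{k}(\bar\lambda_j-\bar z)^{k}$ exhibits it as an average of $k$ holomorphic and $k$ antiholomorphic characteristic polynomials, all evaluated at $z$, against the determinantal point process on the unit disc with radial weight $w(\lambda)=(1-|\lambda|^2)^{M-N-1}$. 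The planar orthogonal polynomials for $w$ are the monomials $\lambda^{n}$, with squared norms $h_n=\pi\int_0^1 s^n(1-s)^{M-N-1}\,ds=\pi\,B(n+1,M-N)$, and the radial structure makes the computation identical to the Ginibre one of Section~\ref{se:dualitylue} up to the replacement of these norms: running that duality and passing to the confluent limit in which all insertion points collide at $z$ shows that the expectation equals an explicit $z$-independent constant times
\[
\det\!\Big[\,\partial_z^{a}\partial_{\bar z}^{b}\sum_{n=0}^{N+k-1}\frac{(z\bar z)^{n}}{h_n}\,\Big]_{a,b=0}^{k-1},
\]
the essential point being that the sum over $n$ runs up to $N+k-1$ rather than up to the $N-1$ of the $N$-point reproducing kernel.

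Second, I would evaluate this determinant. Since $h_n^{-1}$ is an $n$-independent multiple of $\binom{n+M-N}{M-N}$, the truncated series $\sum_{n=0}^{N+k-1}(z\bar z)^{n}/h_n$, and the Hankel moments $\sum_{n}n^{j}(z\bar z)^{n}/h_n$ obtained from it by repeated application of $r\partial_r$ with $r=|z|^{2}$, can be written in terms of incomplete Beta integrals $\int_{0}^{1-|z|^2}x^{M-N-1}(1-x)^{N}(\cdots)\,dx$; the case $k=1$ is an elementary partial-fraction and incomplete-Beta computation. Pulling suitable powers of $z$ and $\bar z$ out of the rows and columns and passing from the falling-factorial basis to the monomial basis turns the displayed determinant into a Hankel determinant in these moments, which the incomplete-Beta evaluation rewrites, up to a $z$-independent constant, as a multiple of
\[
(1-|z|^{2})^{-k(M-N)-k^{2}}\,\det\!\Big[\int_{0}^{1-|z|^{2}}x^{M-N+a+b}(1-x)^{N}\,dx\Big]_{a,b=0}^{k-1}.
\]
Since the ratio $\det[\int_{0}^{t}x^{M-N+a+b}(1-x)^{N}dx]_{a,b=0}^{k-1}/\det[\int_{0}^{1}x^{M-N+a+b}(1-x)^{N}dx]_{a,b=0}^{k-1}$ is, by the standard Hankel-determinant formula for the Jacobi ensemble, exactly $\mathbb{P}(\lambda^{(\mathrm{JUE}_{k,M-N,N})}_{\mathrm{max}}<t)$, and the denominator (a Selberg integral) has explicit $\Gamma$-product value, this shows that $\mathbb{E}(|\det(T-z)|^{2k})$ equals a $z$-independent constant times $(1-|z|^{2})^{-k(M-N)-k^{2}}\mathbb{P}(\lambda^{(\mathrm{JUE}_{k,M-N,N})}_{\mathrm{max}}<1-|z|^{2})$. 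The exponent $k(M-N)+k^{2}$ is forced: it is the vanishing order $\sum_{a=0}^{k-1}(M-N+2a+1)$ of the Jacobi Hankel determinant at the left endpoint, which is also exactly what makes both sides of \eqref{JUEprobab} finite and nonzero as $|z|\to1$.

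Third, setting $z=0$ the Jacobi probability equals $\mathbb{P}(\lambda_{\mathrm{max}}<1)=1$, while the left-hand side reduces to $\mathbb{E}(|\det T|^{2k})=\mathbb{E}(\prod_{j}|\lambda_j|^{2k})$; because the deformed weight $|\lambda|^{2k}(1-|\lambda|^2)^{M-N-1}$ is still radial, this equals $\prod_{j=0}^{N-1}B(j+k+1,M-N)/B(j+1,M-N)$, which telescopes to the $\Gamma$-product in \eqref{r2kzero}. Hence the $z$-independent constant is $C_{M,N,k}$, and \eqref{r2kzero} is proved along the way, completing \eqref{JUEprobab}. As consistency checks: letting $N\to\infty$ with $|z|=1-u/N$ turns $(1-x)^{N}$ into a Laguerre weight, so $\mathrm{JUE}_{k,M-N,N}$ degenerates (after rescaling by $N$) to the LUE of Theorem~\ref{th:tcueonecharge}, and together with $C_{M,N,k}\sim N^{-k(M-N)}\,G(k+M-N+1)/(G(M-N+1)\,G(k+1))$ this recovers the asymptotic formula \eqref{tcueresult}; moreover the known $\sigma$-Painlev\'e~VI characterisation of the largest Jacobi eigenvalue accounts for the name of the theorem.

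The step I expect to be the main obstacle is the matching in the second paragraph: showing that the confluent determinant of truncated hypergeometric-type sums is \emph{exactly} the $k\times k$ Jacobi Hankel determinant with parameters $\alpha=M-N$, $\beta=N$ and hard-edge cutoff at $1-|z|^{2}$, with precisely the prefactor $(1-|z|^{2})^{-k(M-N)-k^{2}}$ and no lower-order corrections. Once the $k=1$ incomplete-Beta identity is in place, the general case should reduce to organising the derivative and Cauchy--Binet bookkeeping; the remaining ingredients (the radial-weight duality of Section~\ref{se:dualitylue}, the Selberg evaluation, and the telescoping of Step~3) are standard.
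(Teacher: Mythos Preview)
Your overall strategy matches the paper's: both start from the Akemann--Vernizzi duality \eqref{akevern} with the radial weight $w(\lambda)=(1-|\lambda|^2)^{\kappa-1}$ on $\mathbb{D}$ ($\kappa=M-N$), pass to the confluent $k\times k$ determinant, identify it with a Jacobi Unitary Ensemble integral, and fix the constant at $z=0$. The difference lies in how the confluent determinant is evaluated, precisely the step you flag as the main obstacle.

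Rather than working with incomplete-Beta identities for the truncated sums entry by entry, the paper writes the polynomial kernel via an Euler-type integral representation (for the hypergeometric function),
\[
B_{N+k}(x,y)=\frac{x^{N+k}\,\Gamma(N+k+\kappa+1)}{\pi\,\Gamma(\kappa)\,\Gamma(N+k)}\int_0^\infty(1+tx)^{-(N+k+\kappa+1)}(t+y)^{N+k-1}\,dt,
\]
which \emph{separates} the $x$ and $y$ dependence inside the integrand. The partial derivatives $\partial_x^{i-1}\partial_y^{j-1}$ then act on separate factors, and the Andr\'eief identity converts the confluent determinant in one stroke into a $k$-fold integral
\[
\int_{[0,\infty)^k}\prod_{j=1}^{k}(1+t_jx)^{-N-\kappa-2k}(t_j+y)^{N}\,\Delta^2(\vec t)\,d\vec t.
\]
After $t_j\to t_j/x$ and a further linear change this becomes $\int_{[0,1]^k}\prod_j t_j^{\kappa}(1+(xy-1)t_j)^{N}\,\Delta^2(\vec t)\,d\vec t$, which for $xy=|z|^2<1$ and the rescaling $t_j\to t_j/(1-|z|^2)$ is the JUE gap probability on the nose, with the factor $(1-|z|^2)^{-k\kappa-k^2}$ produced by the Jacobian. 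This bypasses your Hankel-determinant matching entirely; the Selberg constant $C_k^{(\mathrm{JUE}_{\kappa+N,0})}$ and the $z=0$ evaluation of $C_{M,N,k}$ then fall out exactly as you describe. Your route would also work, but the integral-representation trick (the direct analogue of \eqref{ginkern} in the Ginibre proof of Proposition~\ref{prop:lue}) is what keeps the paper's argument short and systematic across ensembles.
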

\begin{proof}
See Section \ref{se:tcue}.
\end{proof}
\begin{remark}
Although \eqref{JUEprobab} holds for $|z| < 1$, the general case where $z \in \mathbb{C}$ is covered in Proposition \ref{prop:JUEduality}. A result for non-integer $k$ involving $U(N)$ group integrals is described in Theorem \ref{prop:tPVI}.
\end{remark}
\begin{remark}
As with the previous cases involving GUE and LUE, one also anticipates a Painlev\'e representation for the JUE probability on the right-hand side of \eqref{JUEprobab}. A third order equation was obtained in the same paper of Tracy and Widom \cite{TW94}, but this was not expressed in second order form and identified in terms of Painlev\'e transcendents until later work \cite{LJP99, WFC00,AvM01, BD02, FWPVI}. These studies show, with a variety of approaches, that for $0 \leq x \leq 1$,
\begin{equation}
\mathbb{P}\left(\lambda^{(\mathrm{JUE}_{k,\alpha,\beta})}_{\mathrm{max}}< x\right) = \mathrm{exp}\left(-\int_{x}^{1}\frac{\sigma^{(\mathrm{VI})}_{\alpha,\beta}(t)-b_{1}b_{2}t+\frac{b_{1}b_{2}+b_{3}b_{4}}{2}}{t(1-t)}\,dt\right), \label{pvirep}
\end{equation}
where $\sigma^{(\mathrm{VI})}_{\alpha,\beta}(t)$ satisfies the Jimbo-Miwa-Okamoto $\sigma$-form of the Painlev\'e VI equation
\begin{equation}
\sigma'(t(1-t)\sigma'')^{2}-\left(\sigma'(2\sigma+(1-2t)\sigma')+\prod_{i=1}^{4}b_{i}\right)^{2} + \prod_{i=1}^{4}(\sigma'-b_{i}^{2}) = 0, \label{pvieqn}
\end{equation}
with parameters
\begin{equation}
b_{1} = b_{2} = k+\frac{\alpha+\beta}{2}, \quad b_{3} = \frac{\alpha+\beta}{2}, \quad b_{4} = \frac{\beta-\alpha}{2}. \label{paramsb}
\end{equation}
This continues to give an interpretation of the right-hand side of \eqref{JUEprobab} even when $k$ is not an integer. We will prove in Corollary \ref{cor:consist} that the extrapolation to non-integer $k$ in this fashion is the correct interpretation.
\end{remark}
\begin{remark}
In the limit $\kappa \to 0$ the matrix $T$ reduces to a Haar distributed unitary matrix whose spectrum lies precisely on the unit circle. In that case the asymptotics of order $N^{k^{2}}$ is close to the Keating and Snaith result used to conjecture the corresponding moments of the Riemann zeta function \cite{KS00}. In this purely unitary case $\kappa=0$ the association with Painlev\'e VI and the boundary scaling in terms of Painlev\'e V was discussed in \cite{FWPVI}. More recently, Painlev\'e V arose for boundary asymptotics of \textit{derivatives} of CUE characteristic polynomials \cite{MixedMoms19}.
\end{remark}
\subsection{Multiple charges}

In this section we state our results for the correlation function \eqref{correlationfn} for $m>2$. At the edge, we found a natural generalization of the GUE type Theorem \ref{th:onecharge} which replaces the probability \eqref{boundaryerfc} with an object related to non-intersecting Brownian motion.

Recall that the transition density of a standard Brownian motion on $\mathbb{R}$, leaving from $u$ and arriving at $v$ is
\begin{equation}
p_{t}(u,v) = \frac{1}{\sqrt{2\pi t}}\,e^{-(u-v)^{2}/(2t)}.\label{bm}
\end{equation}
Now consider $k$ non-intersecting Brownian motions on $\mathbb{R}$, leaving from positions $(u_1,\ldots,u_k)$ at time $0$ and arriving at $(v_1,\ldots,v_k)$ at time $1$. The probability that all particles belong to the set $E$ at time $t$ is, up to a normalizing factor, given by the Karlin-McGregor formula (see \cite{KM59} or \cite[Sec. 7.4]{AvMV09})
\begin{equation}
\mathcal{Z}_{t}(\vec{u},\vec{v},E) = \int_{E^{k}}\,\det\bigg\{p_{t}(u_i,s_j)\bigg\}_{i,j=1}^{k}\,\det\bigg\{p_{1-t}(s_i,v_j)\bigg\}_{i,j=1}^{k}\,d\vec{s}.
\end{equation}
Of particular interest here will be the case $E = \mathbb{R}_{+}$ which corresponds to the probability that the smallest particle is positive at time $t$.
\begin{theorem}
\label{th:multiplebdry}
Consider the boundary scaling
\begin{equation}
x_{j} = z - \frac{u_j}{\overline{z}\sqrt{N}}, \qquad \overline{y_{j}} = \overline{z} - \frac{\overline{v_{j}}}{z\sqrt{N}}, \qquad j=1,\ldots,k, \label{edgescale}
\end{equation}
where $z$ is a fixed point on the boundary, $|z|=1$, with fixed complex vectors $\vec{u}, \vec{v} \in \mathbb{C}^{k}$, which may contain degeneracies. Then we have the following asymptotic expansion,
\begin{equation}
\begin{split}
\mathbb{E}&\left(\prod_{j=1}^{k}\det(G_{N}-x_j)\det(G_{N}^{\dagger}-\overline{y_j})\right) = \mathrm{exp}\left(-\sum_{j=1}^{k}\left(\sqrt{N}(u_{j}+\overline{v_{j}})-\frac{u_{j}^{2}+\overline{v_{j}}^{2}}{2}\right)+\frac{k^{2}}{2}\log(N)\right)\\
&\times \frac{(2\pi)^{k}}{k!}\,F^{(\mathrm{edge})}_{k}(\vec{u},\vec{v})\,\left(1+o(1)\right), \qquad N \to \infty, \label{multiplecharge}
\end{split}
\end{equation}
where 
\begin{equation}
F^{(\mathrm{edge})}_{k}(\vec{u},\vec{v}) = \frac{\mathcal{Z}_{1/2}(\vec{u},\overline{\vec{v}},\mathbb{R}_{+})}{\prod_{1 \leq i < j \leq k}(u_{j}-u_{i})(\overline{v_j}-\overline{v_i})}. \label{ratioedge}
\end{equation}
In the event that the vectors $\vec{u}$ or $\vec{v}$ contain degeneracies, the expansion \eqref{multiplecharge} continues to hold by taking appropriate limits in \eqref{ratioedge}.
\end{theorem}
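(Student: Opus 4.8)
The plan is to proceed, as in the proof of Theorem~\ref{th:onecharge}, by a duality formula converting the average over the $N\times N$ Ginibre matrix of a product of $2k$ characteristic polynomials (here viewed as $k$ holomorphic factors $\det(G_N-x_j)$ and $k$ antiholomorphic factors $\det(G_N^\dagger-\overline{y_j})$) into a $k$-fold integral, or equivalently an average over a $k\times k$ matrix from a Laguerre-type ensemble. Concretely, for the complex Ginibre ensemble one has an exact identity of the schematic form
\begin{equation}
\mathbb{E}\left(\prod_{j=1}^{k}\det(G_{N}-x_j)\det(G_{N}^{\dagger}-\overline{y_j})\right) = c_{N,k}\,\frac{\det\left\{ \mathcal{K}_N(x_i,\overline{y_j})\right\}_{i,j=1}^{k}}{\prod_{i<j}(x_j-x_i)(\overline{y_j}-\overline{y_i})},
\end{equation}
where $\mathcal{K}_N$ is (up to normalization) the reproducing kernel of the Ginibre ensemble, $\mathcal{K}_N(x,\overline{y})=\sum_{\ell=0}^{N-1}\frac{(Nx\overline{y})^{\ell}}{\ell!}$, i.e. the truncated exponential series; the denominator is the product of two Vandermonde determinants that restores the symmetry lost in the Andréief/Heine-type manipulation. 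First I would establish this finite-$N$ identity cleanly (it should already appear in the earlier sections the excerpt refers to, e.g. in the form used for Theorem~\ref{th:onecharge} with $k_1=k_2$), taking care of the degenerate confluent cases by l'Hôpital limits, which is exactly the mechanism producing the ``appropriate limits in \eqref{ratioedge}'' clause.

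Next I would insert the boundary scaling \eqref{edgescale}. The key observation is that $x_i\overline{y_j}$ becomes $1 - \frac{1}{\sqrt N}(u_i+\overline{v_j}) + O(1/N)$, so $N x_i\overline{y_j} = N - \sqrt N(u_i+\overline{v_j}) + \tfrac12(u_i^2+\overline{v_j}^2)+\dots$, and the truncated exponential sum $\sum_{\ell=0}^{N-1}(Nx_i\overline{y_j})^\ell/\ell!$ is governed by the standard uniform asymptotics of the incomplete Gamma function near its transition point $\ell\approx N$. The classical fact is that
\begin{equation}
e^{-N\lambda}\sum_{\ell=0}^{N-1}\frac{(N\lambda)^{\ell}}{\ell!}\;\longrightarrow\;\frac{1}{2}\operatorname{erfc}\!\left(\frac{s}{\sqrt2}\right) \quad \text{when } \lambda = 1-\frac{s}{\sqrt N},
\end{equation}
and more precisely, with $\lambda = 1-\frac{a+\overline b}{\sqrt N}$, the kernel entry behaves like an integral $\int_0^\infty e^{-t^2/2+t(a+\overline b)}\,\frac{dt}{\sqrt{2\pi}}$ after extracting the exponential prefactor $e^{N\lambda}$. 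Thus, pulling out the product of scalar prefactors $\prod_j e^{N x_j\overline{y_j}}$ (which, using $|z|=1$ and \eqref{edgescale}, produces exactly the exponential $\exp(-\sum_j(\sqrt N(u_j+\overline{v_j})-\tfrac12(u_j^2+\overline{v_j}^2)))$ on the right of \eqref{multiplecharge}), the rescaled kernel determinant converges to $\det\left\{\int_0^\infty p_{1/2}(u_i,s)\,p_{1/2}(s,\overline{v_j})\,ds\right\}_{i,j=1}^k$ up to Gaussian factors that can be absorbed, which is precisely $\mathcal{Z}_{1/2}(\vec u,\overline{\vec v},\mathbb{R}_+)$ by the Chapman--Kolmogorov/Karlin--McGregor identity at $t=1/2$ (the value $1/2$ coming from the variance $1$ of the Gaussian split symmetrically, or equivalently from the factor $\tfrac12$ in the $\operatorname{erfc}$). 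The Vandermonde denominators in the duality formula and in \eqref{ratioedge} match up, and the remaining constant $c_{N,k}$, after inserting the known Ginibre normalization $\prod_{\ell=0}^{k-1}\ell!$-type product and the $N^{k^2/2}$ from rescaling, collapses to $(2\pi)^k/k!$; I would verify this by specializing to $k=1$ and comparing with Theorem~\ref{th:onecharge} (where $F_1(2u)=\tfrac12\operatorname{erfc}$-type object, noting $F_{1}$ is the $1\times 1$ GUE distribution which is a Gaussian integral).

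The main obstacle is making the convergence of the $k\times k$ kernel determinant \emph{uniform} for $\vec u,\vec v$ in compact subsets of $\mathbb{C}^k$, including the confluent limits where entries collide. Entrywise convergence of the truncated exponential is standard, but near degeneracies one differentiates the kernel in the parameters, and one must control the error terms in the incomplete-Gamma asymptotics after differentiation, uniformly on compacts; the cleanest route is to keep the kernel in integral form (a contour integral representation of $\sum_{\ell=0}^{N-1}(N\lambda)^\ell/\ell!$, e.g. via $\frac{1}{2\pi i}\oint e^{N\lambda w}\frac{dw}{w^N(1-w)}$ or a Laplace-type integral) and perform a saddle point analysis with the saddle at $w=1$, which automatically yields uniformity and handles derivatives in $\vec u,\vec v$ by differentiating under the integral sign. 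A secondary technical point is justifying that the subleading corrections to $Nx_i\overline{y_j}$ (the $O(1/N)$ terms and the precise form of the quadratic term, which depends on $|z|=1$) do not affect the limit, and that the error is genuinely $o(1)$ and not merely $O(1)$; this is routine once the integral representation is in place. I expect the bulk of the write-up to be the bookkeeping of constants and the uniform saddle-point estimate, with no conceptual surprises beyond what is already present in the $k=1$ and $k_1=k_2$ cases.
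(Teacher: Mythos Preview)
Your proposal is correct and follows essentially the same route as the paper: start from the Akemann--Vernizzi determinantal identity for the product of characteristic polynomials, insert the edge scaling so that the truncated exponential kernel converges to the error-function kernel, factor out the row/column prefactors to produce the exponential in \eqref{multiplecharge}, and then use the integral representation $K_{\mathrm{erf}}(u,\overline{v})=2\sqrt{2\pi}\int_0^\infty p_{1/2}(u,s)p_{1/2}(s,\overline{v})\,ds$ together with the Andr\'eief identity to recognize the remaining $k\times k$ determinant as $\mathcal{Z}_{1/2}(\vec u,\overline{\vec v},\mathbb{R}_+)$. The paper handles the uniform edge asymptotics by citing a lemma rather than via your contour-integral/saddle-point argument, and your expansion of $Nx_i\overline{y_j}$ has a minor slip (the $O(1/N)$ term is $u_i\overline{v_j}$, not $\tfrac12(u_i^2+\overline{v_j}^2)$; the quadratic terms in the final answer arise instead from splitting $e^{u_i\overline{v_j}}=e^{(u_i^2+\overline{v_j}^2)/2}e^{-(u_i-\overline{v_j})^2/2}$ when extracting row and column factors from the determinant), but these are bookkeeping issues and do not affect the correctness of the strategy.
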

\begin{proof}
See Section \ref{se:boundarymerge}.
\end{proof}
\begin{figure}
\label{fig_Ginibre3}
\begin{center}
\begin{overpic}[width=.5\textwidth]{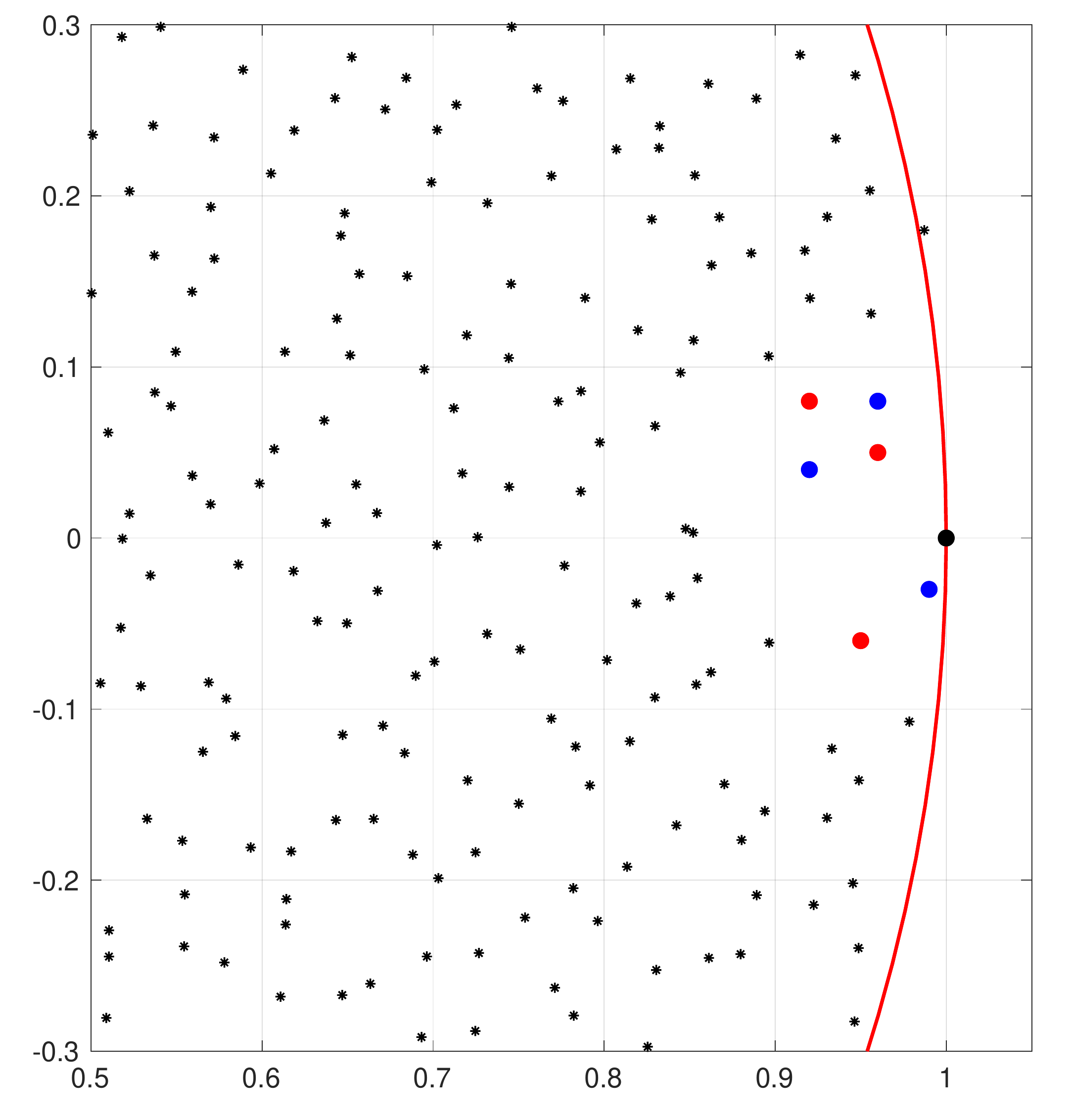}
\put(67.5,62){\color{red}\small{$x_1$}}
\put(79.5,64){\color{blue}\small{$y_1$}}
\put(77,56){\color{red}\small{$x_2$}}
\put(67.5,55){\color{blue}\small{$y_2$}}
\put(72,39){\color{red}\small{$x_3$}}
\put(77,48){\color{blue}\small{$y_3$}}
\put(86,51){\color{black}\small{$z$}}
\end{overpic}
\caption{Schematic insertion of two families of charges (with $k=3$) near the point $z=1$ on the boundary in the complex Ginibre ensemble.}
\end{center}
\end{figure}
To obtain \eqref{correlationfn}, we have to merge entries of the vectors $\vec{u}$ and $\vec{v}$ into several distinct families and eventually take $\vec{u} = \vec{v}$ in the ratio \eqref{ratioedge}. In the Brownian motion interpretation, the confluence of terminal and initial points is well studied and has appeared several times in the literature. For example, as in \cite[Section 7.4]{AvMV09} we can consider $m$ families $u^{(1)}_{i},\ldots,u^{(k_i)}_{i}$ and $v^{(1)}_{i},\ldots,v^{(k_i)}_{i}$ for $i=1,\ldots,m$ each of which are merged to a single point $a_{i}$ and $b_{i}$ respectively for each $i=1,\ldots,m$ with $k_{1}+\ldots+k_{m} = k$, finally setting $a_{i}=b_{i}$. Thus we are considering $k$ non-intersecting Brownian motions with $k_{i}$ paths starting at $a_{i}$ and terminating at $b_{i}$, leading to
\begin{equation}
\begin{split}
&\mathcal{Z}_{t}(\vec{u},\vec{v},E) \sim c_{k,t}(\vec{u},\vec{v})\int_{E^{k}}\prod_{j=1}^{k}e^{-\frac{s_{j}^{2}}{2t(1-t)}}\,\\
&\times \det\begin{pmatrix} \bigg\{s_{j}^{i-1}e^{\frac{a_{1}s_{j}}{t}}\bigg\}_{i=1,j=1}^{k_1,k}\\
\vdots\\
\bigg\{s_{j}^{i-1}e^{\frac{a_{m}s_{j}}{t}}\bigg\}_{i=1,j=1}^{k_m,k}
\end{pmatrix} \det\begin{pmatrix} \bigg\{s_{j}^{i-1}e^{\frac{b_{1}s_{j}}{1-t}}\bigg\}_{i=1,j=1}^{k_1,k}\\
\vdots\\
\bigg\{s_{j}^{i-1}e^{\frac{b_{m}s_{j}}{1-t}}\bigg\}_{i=1,j=1}^{k_m,k}
\end{pmatrix}d\vec{s}, \label{confluentboundary}
\end{split}
\end{equation}
where $c_{k,t}(\vec{u},\vec{v})$ is an explicit and simple to calculate pre-factor that we omit from the presentation. Although the vectors $\vec{u}, \vec{v}$ should be real for this probabilistic interpretation, the formulae \eqref{ratioedge} and \eqref{confluentboundary} continue to make sense for complex values. This block structure has appeared in the context of multiple Hermite polynomials and the $m$-component KP hierarchy \cite{DK04,DK07, AvMV09}. The latter in principle yields a system of PDEs generalizing the Painlev\'e characterization \eqref{p4} to the multiple charge context, thus giving a possible interpretation of the boundary asymptotics of \eqref{correlationfn} for non-integer $\{k_{i}\}_{i=1}^{m}$. The recent work \cite{LY19} studied the relation between the planar orthogonal polynomials associated with \eqref{correlationfn} with non-integer $\{k_{i}\}_{i=1}^{m}$ and type II multiple orthogonal polynomials, but only at the finite $N$ level. It would be interesting to understand if the multiple orthogonality known to be associated with \eqref{confluentboundary} is related. 

Finally we discuss the generalization of Theorem \ref{th:twocharge} to multiple charges colliding in the bulk. 
\begin{theorem}
\label{th:hcizbulk}
Fix $\delta>0$ and take $z$ inside the unit disc $|z| < 1-\delta$. Consider points $\{x_{i}\}_{i=1}^{k}$ and $\{y_{i}\}_{i=1}^{k}$ centered and scaled 
\begin{equation}
x_{i} = z+\frac{u_{i}}{\sqrt{N}}, \qquad \overline{y_{i}} = \overline{z}+\frac{\overline{v_{i}}}{\sqrt{N}}, \label{bulkscale}
\end{equation}
for $i=1,\ldots,k$. Then the following holds uniformly for $\{u_{i},v_{i}\}_{i=1}^{k}$ varying in compact subsets of $\mathbb{C}$,
\begin{equation}
\begin{split}
&\mathbb{E}\left(\prod_{i=1}^{k}\det(G_{N}-x_{i})\det(G_{N}^{\dagger}-\overline{y_{i}})\right) = \mathrm{exp}\left(Nk(|z|^{2}-1)+\sqrt{N}\sum_{i=1}^{k}(z\overline{v_{i}}+\overline{z}u_{i})+\frac{k^{2}}{2}\log(N)\right)\\
& \times \frac{(2\pi)^{k/2}}{G(1+k)}\,F^{(\mathrm{bulk})}_{k}(\vec{u},\vec{v})\,\left(1+o(1)\right), \qquad N \to \infty,\label{multicharge}
\end{split}
\end{equation}
where $F^{(\mathrm{bulk})}_{k}(\vec{u},\vec{v})$ is the following integral over the group $U(k)$ of $k \times k$ unitary matrices,
\begin{equation}
F^{(\mathrm{bulk})}_{k}(\vec{u},\vec{v}) = \int_{U(k)}e^{\mathrm{Tr}(UAU^{\dagger}\overline{B})}\,d\mu(U). \label{fkbulk}
\end{equation}
Here $d\mu(U)$ is the normalized Haar measure on $U(k)$ and
\begin{equation}
A = \mathrm{diag}(u_1,\ldots,u_k), \qquad \overline{B} = \mathrm{diag}(\overline{v_1},\ldots,\overline{v_k}).
\end{equation}
\end{theorem}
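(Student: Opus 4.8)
The plan is to follow the same strategy that (presumably) underlies Theorems \ref{th:onecharge} and \ref{th:twocharge}: reduce the correlation function of characteristic polynomials to a finite-dimensional integral over a unitary group or over a determinantal ensemble via a \emph{duality formula}, and then carry out a saddle-point analysis. Since all exponents here are $+1$ (we are averaging $\prod_{i}\det(G_N-x_i)\det(G_N^\dagger-\overline{y_i})$ without powers), the cleanest starting point is a duality identity expressing this $2k$-fold average of characteristic polynomials over an $N\times N$ Ginibre matrix as an integral over a $k\times k$ matrix space. Concretely, I would first establish (or invoke from an earlier section) the exact finite-$N$ formula
\begin{equation*}
\mathbb{E}\left(\prod_{i=1}^{k}\det(G_{N}-x_{i})\det(G_{N}^{\dagger}-\overline{y_{i}})\right) = c_{N,k}\int_{\mathcal{M}}\frac{\det\!\big[\,\cdot\,\big]}{\Delta(\vec{x})\overline{\Delta(\vec{y})}}\,(\text{Gaussian weight})\,,
\end{equation*}
where the numerator is a determinant built from the planar orthogonal polynomials for the weight $e^{-N|\lambda|^2}$ (monomials, with incomplete-Gamma normalizations) evaluated at the $x_i$ and $\overline{y_j}$, and $\Delta$ denotes a Vandermonde. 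Equivalently, one can use the Harish-Chandra--Itzykson--Zuber integral to write the average directly as a $U(k)$ integral; given that the answer \eqref{fkbulk} is manifestly an HCIZ-type integral, the HCIZ route is the natural one and I would aim to derive the identity
\begin{equation*}
\mathbb{E}\left(\prod_{i=1}^{k}\det(G_{N}-x_{i})\det(G_{N}^{\dagger}-\overline{y_{i}})\right) = (\text{explicit constant})\int_{U(k)} \det\!\big(f_N(x_i,\overline{y}_{\sigma})\big)\,\text{-type object}\,,
\end{equation*}
for a suitable kernel $f_N$ coming from the reproducing kernel of the Ginibre ensemble.

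Second, once the finite-$N$ dual representation is in hand, I would insert the scaling \eqref{bulkscale}. The key point is that $z$ lies strictly in the bulk, $|z|<1-\delta$, so the Ginibre planar kernel $K_N(x,\overline{y})$ at points $x=z+u/\sqrt N$, $\overline{y}=\overline{z}+\overline{v}/\sqrt N$ has the well-known microscopic (bulk) limit: after stripping the Gaussian prefactor $e^{N(|x|^2+|y|^2)/2}$ and a phase, $K_N$ converges locally uniformly to the exponential kernel $e^{u\overline{v}}$ (times the bulk density, which is constant in the circular law). Substituting this limiting kernel into the dual determinant, the Vandermonde factors in the denominator cancel against the antisymmetrization coming from the determinant of exponentials, and one is left precisely with the HCIZ integral
\begin{equation*}
\frac{\det\big(e^{u_i\overline{v_j}}\big)_{i,j=1}^{k}}{\Delta(\vec u)\,\overline{\Delta(\vec v)}} \;\longrightarrow\; \text{const}\cdot\int_{U(k)}e^{\mathrm{Tr}(UAU^\dagger\overline B)}\,d\mu(U),
\end{equation*}
by the standard HCIZ formula, which simultaneously handles the case of degenerate $\vec u$ or $\vec v$ via l'Hôpital (confluence). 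Tracking the explicit prefactors through the scaling — the incomplete-Gamma normalization constants of the planar orthogonal polynomials, the Gaussian exponents, and the Barnes $G$-function arising from $\prod_{j=0}^{k-1}j!$ — should reproduce the exponential prefactor $\exp(Nk(|z|^2-1)+\sqrt N\sum(z\overline{v_i}+\overline z u_i)+\tfrac{k^2}{2}\log N)$ and the constant $(2\pi)^{k/2}/G(1+k)$ in \eqref{multicharge}.

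Third, to turn the pointwise/formal limit into the asymptotic statement with a genuine $(1+o(1))$ error that is uniform on compact sets of $(\vec u,\vec v)$, I would need quantitative control on the convergence of the Ginibre planar kernel in the bulk, together with uniform integrability to justify passing the limit through any remaining finite-dimensional integration (if the dual representation is an integral rather than a bare determinant). The bulk kernel asymptotics for the Ginibre ensemble are classical and come with exponentially small error terms away from the boundary, so this should be routine but must be done carefully, especially concerning the uniformity as the $u_i$ approach each other (where both numerator and denominator of \eqref{fkbulk}'s defining ratio vanish). I expect the main obstacle to be precisely this: establishing the dual/HCIZ representation at finite $N$ in a form clean enough that the bulk scaling limit can be taken term-by-term, and verifying that the confluent (degenerate) limits commute with the $N\to\infty$ limit. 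Once the finite-$N$ identity is correctly set up, the asymptotic analysis is a standard bulk saddle-point computation; the bookkeeping of constants is tedious but not conceptually hard. I would organize the proof so that Theorem \ref{th:twocharge} becomes the special case $k=2$ (with one family of size $k_1$ and one of size $k_2$ after confluence), providing a consistency check on the prefactors.
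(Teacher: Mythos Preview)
Your proposal is essentially the paper's approach: start from the Akemann--Vernizzi identity \eqref{akevern}, which gives the expectation as $\prod_{j=0}^{k-1}h_{j+N}$ times $\det\{B_{N+k}(x_i,\overline{y_j})\}/(\Delta(\vec x)\overline{\Delta(\vec y)})$, insert the bulk scaling so that the centered kernel $f_{N+k}(u,\overline v)$ converges uniformly (together with all its mixed partials) to $e^{u\overline v}$, and then identify the limiting ratio $\det\{e^{u_i\overline{v_j}}\}/(\Delta(\vec u)\overline{\Delta(\vec v)})$ via the HCIZ formula \eqref{HCIZformula}. One clarification: the finite-$N$ starting point is a \emph{bare determinant}, not an integral over a $k\times k$ matrix space, and the HCIZ integral enters only at the very end to rewrite the limiting determinant-over-Vandermonde ratio in a form that remains well-defined under confluence; also, Theorem~\ref{th:twocharge} arises not as the case $k=2$ but by merging the $k=k_1+k_2$ points into two groups inside the HCIZ integral and projecting the Haar measure onto a sub-block.
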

\begin{proof}
See Section \ref{se:bulk}.
\end{proof}
The integral appearing in \eqref{fkbulk} is well known in random matrix theory under the name \textit{Harish-Chandra Itzykson Zuber integral}, see \cite{IZ80}. As we discuss in Section \ref{se:bulk} there is an explicit formula for this integral. However, in order to obtain \eqref{correlationfn} or the result in Theorem \ref{th:twocharge}, a rather delicate merging procedure is required and the explicit formula is not always easy to use in this case. We explain how to overcome this in Section \ref{se:bulk}.
\begin{remark}
\label{rem:universality}
A remark about universality: Our proof of the aforementioned results relies essentially only on local asymptotics of the relevant correlation kernel, either close to the boundary or in the bulk (in contrast to the approaches in \cite{NK02,Kan05} which exploit the Gaussian structure of the Ginibre ensemble). For the more general model \eqref{law}, the correlations in the bulk and at the edge are now known to be strongly independent of the potential $V(\lambda)$ (universal) up to some restrictions on the regularity of $V(\lambda)$ and the droplet. This has been accomplished at the edge in the recent development \cite{HW17} and for the bulk in \cite{AKS18}. Indeed, considering Theorem \ref{th:onecharge} for example, based on the results in \cite{HW17} we expect that the same Painlev\'e IV object describes the collision of a point charge with the boundary of the droplet for a broad class of normal matrix models. In a different direction, bulk universality for correlations of non-Hermitian characteristic polynomials was recently investigated in \cite{Af19} where a formula essentially equivalent to \eqref{multicharge} was derived for a class of independent entry random matrices.
\end{remark}
To summarize, we tabulate below the results described in this section. Each Theorem can be viewed as a certain duality mapping the planar problem under consideration to objects typically encountered in the study of Hermitian or unitary matrices. The last column indicates the associated Painlev\'e system.\\\\
\begin{center}
{\renewcommand{\arraystretch}{1.3}\begin{tabular}{ |c||c|c|c|c| }
 \hline
 \multicolumn{5}{|c|}{Microscopic behaviour of the correlation function \eqref{correlationfn}} \\
 \hline
 Planar model & $m$ & Regime &Duality & Painlev\'e\\
 \hline
 $\mathbb{C}$-Ginibre  & 1    &Finite $N$& LUE & V\\
---''--- &   1  & Edge, $N \to \infty$ &GUE& IV\\
---''--- &   2  & Bulk, $N \to \infty$ & LUE& V\\
---''---  &  $m > 2$  & Edge, $N \to \infty$ & \footnotesize{Non-intersecting paths} & \\
---''---  &  $m > 2$  & Bulk, $N \to \infty$ & \footnotesize{Itzykson Zuber integral} & \\
T-CUE& 1 &Finite $N$ & JUE  & VI\\
---''---& 1 & Edge, $N \to \infty$ & LUE & V\\
 \hline
\end{tabular}}
\vspace{6pt}
\captionof{table}{The above table summarises the main results of the present work.}
\end{center}
The structure of this paper is as follows. In Section \ref{se:opav} we recall some well known facts about the normal matrix model \eqref{law} and its relation to determinantal structures and orthogonal polynomials in the complex plane. Then in Section \ref{se:boundary} we compute the $1$-point correlation (object \eqref{correlationfn} with $m=1$) at finite $N$ and discuss its relation to Painlev\'e V for finite $N$ and Painlev\'e IV near the boundary, proving Theorem \ref{th:onecharge}. These boundary asymptotics are then used to give a conjecture for the critical asymptotics of the partition function \eqref{generalpf} with potential \eqref{lem}. Section \ref{se:bulk} is devoted to proving the results about bulk asymptotics and merging of singularities namely Theorems \ref{th:twocharge}, \ref{th:nonintegerbulk} and \ref{th:hcizbulk}. Finally Section \ref{se:tcue} is devoted to studying the $1$-point correlation for the truncated unitary ensemble, its relation to Painlev\'e VI and boundary asymptotics in terms of Painlev\'e V.
\newpage
\section{Planar orthogonal polynomials and determinantal structures}
\label{se:opav}
In this section we set some basic notation for the paper and recall some known facts regarding the normal matrix model and planar orthogonal polynomials, see \textit{e.g.} \cite{AV03, AHM11, BGM17} and references therein for relevant background. Consider the normal matrix model corresponding to a weight $w(\lambda)$ as defined in \eqref{law}, where $w(\lambda) = e^{-NV(\lambda)} \geq 0$. From the weight $w(\lambda)$ we may construct a unique family of degree $k$ monic polynomials $p_{k}(\lambda), k=0,1,2,\ldots$ satisfying the planar orthogonality
\begin{equation}
\int_{\mathbb{C}}p_{k}(\lambda)\overline{p_{j}(\lambda)}\,w(\lambda)\,d^{2}\lambda = \delta_{k,j}h_{k} \label{orthog}
\end{equation}
and these polynomials can be used to characterize statistical properties of \eqref{law}. A distinguished role is played by a function of two variables known as the \textit{correlation kernel} of the model, defined by
\begin{equation}
K_{N}(\lambda_1,\overline{\lambda_2}) = \sqrt{w(\lambda_1)w(\lambda_2)}\sum_{j=0}^{N-1}\frac{p_{j}(\lambda_1)\overline{p_{j}(\lambda_2)}}{h_j}. \label{kernel}
\end{equation}
For example, the $k$-point eigenvalue correlation functions corresponding to \eqref{law} are given explicitly in terms of \eqref{kernel},
\begin{equation}
\begin{split}
R_{k}(\lambda_1,\ldots,\lambda_k) &:= \frac{N!}{(N-k)!}\int_{\mathbb{C}^{N-k}}P(\lambda_1,\ldots,\lambda_N)\,d^{2}\lambda_{k+1}\ldots d^{2}\lambda_{N}\\
&=\det\bigg\{K_{N}(\lambda_i,\overline{\lambda_j})\bigg\}_{i,j=1}^{k},
\end{split}
\end{equation}
where $P(\lambda_1,\ldots,\lambda_N)$ is a joint probability density function of the type \eqref{law} or \eqref{zspdf}. This identity expresses the fact that eigenvalues of normal random matrices form a determinantal point process with kernel \eqref{kernel}. The partition function \eqref{generalpf} is also expressed in terms of orthogonal polynomials, via the norming constants in \eqref{orthog} we have
\begin{equation}
\mathcal{Z}_{N} = N!\det\bigg\{\int_{\mathbb{C}}\lambda^{i}\overline{\lambda}^{j}w(\lambda)\,d^{2}\lambda \bigg\}_{i,j=0}^{N-1} = N!\prod_{j=0}^{N-1}h_{j}. \label{norms}
\end{equation}
A simple but very useful fact we will exploit throughout the paper is that if $w(\lambda)$ is rotationally invariant, i.e. $w(\lambda) = w(|\lambda|)$, then the planar orthogonal polynomials are the monomials, i.e. $p_{j}(\lambda) = \lambda^{j}$ for each non-negative integer $j$. 

The kernel \eqref{kernel} is also relevant for the computation of our central object of interest \eqref{correlationfn}, as implied by the following result. In what follows, and in the rest of the paper, we will use the notation
\begin{equation}
\Delta(\vec{x}) := \prod_{1 \leq i < j \leq k}(x_{j}-x_{i}) = \det\bigg\{x_{i}^{j-1}\bigg\}_{i,j=1}^{k}
\end{equation}
to denote the Vandermonde determinant in the entries of the vector $\vec{x}$, whose dimension may vary depending on the context.
\begin{theorem}[Akemann and Vernizzi \cite{AV03}]
Consider the \textit{polynomial kernel}
\begin{equation}
B_{N}(x,\overline{y}) := \sum_{j=0}^{N-1}\frac{p_{j}(x)\overline{p_{j}(y)}}{h_j} = K_{N}(x,\overline{y})[w(x)w(y)]^{-1/2} \label{polykern}
\end{equation}
and generic complex numbers $\{x_{j}\}_{j=1}^{k}$ and $\{y_{j}\}_{j=1}^{k}$. The following exact identity holds
\begin{equation}
\begin{split}
\mathbb{E}\left(\prod_{j=1}^{N}\prod_{i=1}^{k}(\lambda_{j}-x_i)(\overline{\lambda_{j}}-\overline{y_i})\right) &= \frac{\det\bigg\{B_{N+k}(x_i,\overline{y_j})\bigg\}_{i,j=1}^{k}}{\Delta(\vec{x})\overline{\Delta(\vec{y})}}\,\prod_{j=0}^{k-1}h_{j+N} \label{akevern}
%\\
%&= \frac{\det\bigg\{K_{N+k}(x_i,\overline{y_j})\bigg\}_{i,j=1}^{k}}{\Delta(\vec{x})\Delta(\vec{y})}\,\prod_{j=1}^{k}\frac{h_{j+N-1}}{w(x_j)w(y_j)} 
\end{split}
\end{equation}
where the expectation is taken with respect to \eqref{law}.
\end{theorem}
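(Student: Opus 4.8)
The plan is to carry out the standard reduction of a normal-matrix average to planar orthogonal polynomials, and then close the computation with the Cauchy--Binet formula. Writing the expectation in \eqref{akevern} against the density \eqref{law} and recalling $|\Delta(\vec{\lambda})|^{2}=\prod_{i<j}|\lambda_{j}-\lambda_{i}|^{2}$, the left-hand side equals $\mathcal{Z}_{N}^{-1}\int_{\mathbb{C}^{N}}\prod_{j=1}^{N}w(\lambda_{j})\,|\Delta(\vec{\lambda})|^{2}\prod_{j=1}^{N}\prod_{i=1}^{k}(\lambda_{j}-x_{i})(\overline{\lambda_{j}}-\overline{y_{i}})\,d^{2}\lambda$. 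First I would use the elementary Vandermonde factorization $\Delta(\vec{\lambda},\vec{x})=\Delta(\vec{\lambda})\,\Delta(\vec{x})\prod_{j=1}^{N}\prod_{i=1}^{k}(x_{i}-\lambda_{j})$ (where $(\vec{\lambda},\vec{x})$ is the concatenated vector of length $N+k$), together with its conjugate analogue for $\vec{y}$, to replace the product of characteristic polynomials by $\Delta(\vec{\lambda},\vec{x})\,\overline{\Delta(\vec{\lambda},\vec{y})}\big/\big(|\Delta(\vec{\lambda})|^{2}\,\Delta(\vec{x})\,\overline{\Delta(\vec{y})}\big)$; the two factors $(-1)^{Nk}$ coming from reversing signs in the two products cancel. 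The $|\Delta(\vec{\lambda})|^{2}$ then cancels the one in the density, so it remains to evaluate $\int_{\mathbb{C}^{N}}\prod_{j=1}^{N}w(\lambda_{j})\,\Delta(\vec{\lambda},\vec{x})\,\overline{\Delta(\vec{\lambda},\vec{y})}\,d^{2}\lambda$ and divide by $\mathcal{Z}_{N}\,\Delta(\vec{x})\,\overline{\Delta(\vec{y})}$.

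For that integral I would use that a Vandermonde is unchanged if each column $\lambda^{j-1}$ is replaced by the monic polynomial $p_{j-1}(\lambda)$, and write $\Delta(\vec{\lambda},\vec{x})$ as the $(N+k)\times(N+k)$ determinant whose first $N$ rows are $\big(p_{0}(\lambda_{i}),\ldots,p_{N+k-1}(\lambda_{i})\big)$ and whose last $k$ rows are $\big(p_{0}(x_{i}),\ldots,p_{N+k-1}(x_{i})\big)$, and likewise $\overline{\Delta(\vec{\lambda},\vec{y})}$ using $\overline{p_{m}}$ and the $y_{i}$. A Laplace expansion of each determinant along its first $N$ rows writes $\Delta(\vec{\lambda},\vec{x})$ as a signed sum over $N$-element column subsets $S\subseteq\{1,\ldots,N+k\}$ of (an $N\times N$ minor in the $\lambda$'s)$\,\times\,$(the complementary $k\times k$ minor in the $x$'s), and similarly with a subset $T$ for the $y$-determinant. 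Multiplying and integrating over $\lambda_{1},\ldots,\lambda_{N}$ against $\prod_{j}w(\lambda_{j})$, the Andréief (Heine) identity turns the $\lambda$-integral of the product of the two $N\times N$ minors into $N!\det\big[\int_{\mathbb{C}}p_{s-1}\overline{p_{t-1}}\,w\big]_{s\in S,\,t\in T}$, which by the planar orthogonality \eqref{orthog} equals $N!\det\big[\delta_{s,t}h_{s-1}\big]_{s\in S,\,t\in T}$. This vanishes unless $S=T$ (a missing index forces a zero row), and when $S=T$ it equals $N!\prod_{s\in S}h_{s-1}$ while the two Laplace signs, depending only on $S$, have product $1$. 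Hence the integral equals $N!\sum_{|M|=k,\,M\subseteq\{0,\ldots,N+k-1\}}\big(\prod_{m\notin M}h_{m}\big)\det\!\big[p_{m}(x_{i})\big]_{i,\,m\in M}\,\overline{\det\!\big[p_{m}(y_{j})\big]_{j,\,m\in M}}$, where $M$ records the polynomial indices surviving in the $k\times k$ minors.

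It then remains to identify this subset sum with the kernel determinant. Writing $\prod_{m\notin M}h_{m}=\big(\prod_{m=0}^{N+k-1}h_{m}\big)\big/\prod_{m\in M}h_{m}$ and applying Cauchy--Binet to the factorization $B_{N+k}(x_{i},\overline{y_{j}})=\sum_{m=0}^{N+k-1}h_{m}^{-1}p_{m}(x_{i})\overline{p_{m}(y_{j})}$, one gets $\sum_{|M|=k}\big(\prod_{m\in M}h_{m}^{-1}\big)\det[p_{m}(x_{i})]_{i,m\in M}\,\overline{\det[p_{m}(y_{j})]_{j,m\in M}}=\det\big\{B_{N+k}(x_{i},\overline{y_{j}})\big\}_{i,j=1}^{k}$, so the integral is $N!\big(\prod_{m=0}^{N+k-1}h_{m}\big)\det\{B_{N+k}(x_{i},\overline{y_{j}})\}_{i,j=1}^{k}$. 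Dividing by $\mathcal{Z}_{N}\,\Delta(\vec{x})\,\overline{\Delta(\vec{y})}$ and using $\mathcal{Z}_{N}=N!\prod_{m=0}^{N-1}h_{m}$ from \eqref{norms}, the prefactor $N!\prod_{m=0}^{N+k-1}h_{m}/\mathcal{Z}_{N}=\prod_{m=N}^{N+k-1}h_{m}=\prod_{j=0}^{k-1}h_{j+N}$ emerges, which is exactly \eqref{akevern}. The main obstacle is the combinatorial bookkeeping in the Laplace step: keeping track of the two column-subset signs and verifying that all off-diagonal ($S\neq T$) contributions vanish. Everything else is the routine orthogonal-polynomial reduction, and convergence of the $\lambda$-integrals is guaranteed by the decay of $w=e^{-NV}$; degeneracies among the $x_{i}$ or $y_{i}$ are excluded by the genericity hypothesis (and could be recovered by continuity).
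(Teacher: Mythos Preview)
Your argument is correct and is in fact the standard route to this identity: absorb the characteristic polynomials into enlarged Vandermondes, pass to monic planar OPs, Laplace-expand, kill the off-diagonal terms with orthogonality via Andr\'eief, and reassemble with Cauchy--Binet. The sign and bookkeeping steps you flagged as the main obstacle are fine as written: for $S=T$ the two Laplace signs coincide and square to $1$, and for $S\neq T$ the Gram matrix $[\delta_{s,t}h_{s-1}]_{s\in S,\,t\in T}$ has a zero row, so those terms vanish.

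For comparison with the paper: the paper does \emph{not} prove this theorem. It is quoted as a result of Akemann and Vernizzi \cite{AV03}, with the remark that the underlying idea goes back to Br\'ezin and Hikami \cite{BH00} in the Hermitian context. Your proof is essentially the Br\'ezin--Hikami argument transplanted to the planar setting, which is exactly what \cite{AV03} does; the paper simply takes \eqref{akevern} as input and moves directly to its consequences (the merging formula \eqref{mergeakevern} and the various duality computations). So there is nothing to compare against beyond noting that your proposal supplies the proof the paper omits.
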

The idea behind this result goes back to Br\'ezin and Hikami \cite{BH00} who developed it in the context of Hermitian random matrices. In order to reproduce averages of the type \eqref{correlationfn}, we have to merge some of the points $\{x_{j}\}_{j=1}^{k}$, $\{y_{j}\}_{j=1}^{k}$ which is easily done with L'H\^{o}pital's rule and basic facts for differentiating determinants. For example, in the case $m=1$ of \eqref{correlationfn}, we have to merge together all points $x_{i}\to x$ and $y_{i} \to y$ for all $i=1,\ldots,k$ which gives 
\begin{equation}
\frac{\det\bigg\{A(x_i,y_j)\bigg\}_{i,j=1}^{k}}{\Delta(\vec{x})\Delta(\vec{y})} \to \left(\prod_{j=0}^{k-1}\frac{1}{(j!)^{2}}\right)\,\det\bigg\{\frac{\partial^{i+j-2}A(x,y)}{\partial x^{i-1}\,\partial y^{j-1}}\bigg\}_{i,j=1}^{k} \label{mergeakevern}
\end{equation}
for any smooth function of two variables $A(x,y)$. This representation also turns out to be useful for studying asymptotics, as it reduces the analysis to that of the two-variable kernel \eqref{polykern} and its derivatives. In order to calculate (or identify) the determinant on the right-hand side of \eqref{mergeakevern} we will often exploit the following.
\begin{lemma}[Andr\'eief identity]
\label{lem:andre}
For a domain $D$ and two sets of integrable functions $\{f_{j}(t)\}_{j=1}^{k}$ and $\{g_{j}(t)\}_{j=1}^{k}$ we have
\begin{equation}
\det\bigg\{\int_{D}f_{i}(t)\,g_{j}(t)\,dt\bigg\}_{i,j=1}^{k} = \frac{1}{k!}\int_{D^{k}}\,\det\{f_{j}(t_i)\}_{i,j=1}^{k}\,\det\{g_{j}(t_i)\}_{i,j=1}^{k}\,d\vec{t}.
\end{equation}
\end{lemma}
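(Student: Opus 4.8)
The statement to prove is the Andréief (Heine) identity, Lemma \ref{lem:andre}. Here is my plan.

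\begin{proof}[Proof proposal]
The plan is to expand both Vandermonde-like (but here arbitrary) determinants on the right-hand side using the Leibniz formula over the symmetric group, interchange the finite sum with the $k$-fold integral, and then recognize the resulting sum of products of single integrals as the Leibniz expansion of the determinant on the left-hand side. Concretely, write $\det\{f_{j}(t_i)\}_{i,j=1}^{k} = \sum_{\sigma \in S_{k}}\mathrm{sgn}(\sigma)\prod_{i=1}^{k}f_{\sigma(i)}(t_i)$ and likewise $\det\{g_{j}(t_i)\}_{i,j=1}^{k} = \sum_{\tau \in S_{k}}\mathrm{sgn}(\tau)\prod_{i=1}^{k}g_{\tau(i)}(t_i)$. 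Multiplying these and using Fubini (justified by the integrability hypothesis on the $f_j,g_j$ over $D$, so that the finite double sum of integrable products is integrable over $D^{k}$), the right-hand side becomes
\begin{equation}
\frac{1}{k!}\sum_{\sigma,\tau \in S_{k}}\mathrm{sgn}(\sigma)\,\mathrm{sgn}(\tau)\prod_{i=1}^{k}\int_{D}f_{\sigma(i)}(t)\,g_{\tau(i)}(t)\,dt. \notag
\end{equation}

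Next I would reduce the double sum to a single sum by the standard change of variables in $S_{k}$: substitute $\tau = \sigma \circ \rho$ so that $\mathrm{sgn}(\sigma)\mathrm{sgn}(\tau) = \mathrm{sgn}(\rho)$, and relabel the product index $i \mapsto \sigma^{-1}(i)$ (a bijection of $\{1,\ldots,k\}$), which turns $\prod_{i}\int_{D}f_{\sigma(i)}g_{\tau(i)}$ into $\prod_{i}\int_{D}f_{i}(t)\,g_{\rho(i)}(t)\,dt$, independently of $\sigma$. Summing the now-$\sigma$-independent summand over the $k!$ choices of $\sigma$ cancels the prefactor $\frac{1}{k!}$, leaving $\sum_{\rho \in S_{k}}\mathrm{sgn}(\rho)\prod_{i=1}^{k}\int_{D}f_{i}(t)\,g_{\rho(i)}(t)\,dt$, which is exactly the Leibniz expansion of $\det\{\int_{D}f_{i}(t)g_{j}(t)\,dt\}_{i,j=1}^{k}$.

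I do not anticipate a genuine obstacle here; the only point requiring a word of care is the application of Fubini's theorem to exchange the finite sum of integrals over $D^{k}$ with the product structure, which is immediate from the stated integrability of each $f_j$ and $g_j$ on $D$ since the integrand is a finite linear combination of products $\prod_i f_{\sigma(i)}(t_i)g_{\tau(i)}(t_i)$ whose $D^{k}$-integral factorizes. This identity is classical and the above is the textbook argument; we include it only for completeness and to fix notation used repeatedly in the sequel.
\end{proof}
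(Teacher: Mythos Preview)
Your argument is correct and is the standard textbook proof of the Andr\'eief identity. Note that the paper itself does not supply a proof of this lemma; it is simply stated as a classical result and then used repeatedly. One small caveat: strictly speaking, Fubini requires the products $f_{i}g_{j}$ to be integrable over $D$, not merely each $f_{i}$ and $g_{j}$ separately, but the paper is equally informal on this point and it does not affect the formal identity.
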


%------------------------------------------------------------------------------

\section{The Ginibre case: finite $N$ and boundary asymptotics}
\label{se:boundary}
In this section we begin by discussing in some detail the object \eqref{correlationfn} with $m=1$, that is the moments
\begin{equation}
R_{2k}(z) = \mathbb{E}\left(|\det(G_{N}-z)|^{2k}\right), \label{Robject}
\end{equation}
where $G_{N}$ is a standard complex Ginibre random matrix. Of interest will be the relation to Painlev\'e transcendents at finite $N$ and the asymptotics $N \to \infty$. In particular we will prove Theorem \ref{th:onecharge} and discuss the problem of continuing $k$ off the integers. Then we apply the obtained results to the lemniscate partition function. Finally, we extend the considerations to multiple products of characteristic polynomials and prove Theorem \ref{th:multiplebdry}.
\subsection{Duality with the LUE}
\label{se:dualitylue}
Recall the definition of the Laguerre Unitary Ensemble from the introduction: form the matrix $W=G_{k,p}G_{k,p}^{\dagger}$ where $G_{k,p}$ is a $k \times p$ matrix of \textit{i.i.d.} standard complex Gaussian random variables and we assume $p \geq k$. The eigenvalues $t_{1},\ldots,t_{k}$ of $W$ are all non-negative and have the following well-known joint probability density function
\begin{equation}
P(t_{1},\ldots,t_{k}) = \frac{1}{C^{(\mathrm{LUE}_{\alpha})}_{k}}\prod_{j=1}^{k}t_{j}^{\alpha}e^{-t_{j}}\Delta^{2}(\vec{t}),
\end{equation}
where the normalization constant is given explicitly by
\begin{equation}
C^{(\mathrm{LUE}_{\alpha})}_{k} := \int_{[0,\infty)^{k}}\prod_{j=1}^{k}t_{j}^{\alpha}e^{-t_{j}}\,\Delta^{2}(\vec{t})\,d\vec{t} = \prod_{j=0}^{k-1}\Gamma(\alpha+j+1)\Gamma(j+2). \label{LUEnormconst}
\end{equation}
The quantity $\alpha = p-k$ is referred to as the \textit{parameter} of the LUE and $k$ as the size, see \cite{Forrester_loggas} for further details.
\begin{proposition}[\cite{NK02,FR09}]
The following exact duality identity holds:
\begin{align}
R_{2k}(z) &= \frac{N^{k^{2}/2}}{\prod_{j=0}^{k-1}j!(j+1)!}\int_{[0,\infty)^{k}}\prod_{j=1}^{k}e^{-\sqrt{N}t_{j}}\left(|z|^{2}+\frac{t_{j}}{\sqrt{N}}\right)^{N}\Delta^{2}(\vec{t})\,d\vec{t} \label{LUEduality}\\
&= N^{-Nk}e^{Nk|z|^{2}}\left(\prod_{j=1}^{k}\frac{\Gamma(j+N)}{\Gamma(j)}\right)\mathbb{P}(\lambda_{1} > N|z|^{2}) \label{smallesteig}
\end{align}
where $\lambda_{1}$ is the smallest eigenvalue in the Laguerre Unitary Ensemble with parameter $\alpha=N$ and size $k$. %The constant $C_{N,k}$ is given by
%\begin{equation}
%C_{N,k} = %\frac{\int_{[0,\infty)^{k}}\prod_{j=1}^{k}\lambda_{j}^{N}e^{-\lambda_j}\Delta(\vec{\lambda})^{2}\,d\vec{\lambda}}{\prod_{j=0}^{k-1}j!(j+1)!} = \prod_{j=1}^{k}\frac{\Gamma(j+1)\Gamma(j+N)}{\Gamma(j)\Gamma(j+1)} = \prod_{j=1}^{k}\frac{\Gamma(j+N)}{\Gamma(j)}
%\end{equation}
\label{prop:lue}
\end{proposition}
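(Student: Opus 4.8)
The plan is to derive \eqref{LUEduality} from the Akemann--Vernizzi duality \eqref{akevern} and then obtain \eqref{smallesteig} from \eqref{LUEduality} by a change of variables. First I would specialise \eqref{akevern} to the Ginibre weight $w(\lambda)=e^{-N|\lambda|^{2}}$; since it is rotationally invariant the planar orthogonal polynomials are the monomials $p_{j}(\lambda)=\lambda^{j}$ with $h_{j}=\pi\,j!/N^{j+1}$, so the polynomial kernel \eqref{polykern} is the rescaled truncated exponential $B_{N+k}(x,\overline y)=\tfrac N\pi\sum_{\ell=0}^{N+k-1}(Nx\overline y)^{\ell}/\ell!$. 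Writing $|\det(G_{N}-z)|^{2k}=\prod_{j}(\lambda_{j}-z)^{k}(\overline{\lambda_{j}}-\overline z)^{k}$ and applying \eqref{akevern} together with the confluence rule \eqref{mergeakevern} (all $x_{i}\to z$, all $y_{i}\to z$), one gets
\[
R_{2k}(z)=\Bigl(\prod_{j=0}^{k-1}\tfrac{h_{N+j}}{(j!)^{2}}\Bigr)\det\Bigl\{\partial_{x}^{\,i-1}\partial_{\overline y}^{\,j-1}B_{N+k}(x,\overline y)\big|_{x=z,\ \overline y=\overline z}\Bigr\}_{i,j=1}^{k}.
\]
Because $B_{N+k}$ is a sum of products $h_{\ell}^{-1}x^{\ell}\overline y^{\ell}$, I would then use Cauchy--Binet (equivalently, the confluent form of Lemma~\ref{lem:andre}) to turn this $k\times k$ determinant into $\sum_{0\le\ell_{1}<\cdots<\ell_{k}\le N+k-1}\bigl(\prod_{s}h_{\ell_{s}}^{-1}\bigr)\,|\mathcal W_{z}(\vec\ell)|^{2}$, where $\mathcal W_{z}(\vec\ell)=\det\{\partial_{x}^{\,i-1}x^{\ell_{s}}|_{x=z}\}_{i,s}$ reduces, after pulling out monomials and performing unitriangular column operations, to $z^{\sum_{s}\ell_{s}-\binom k2}\Delta(\vec\ell)$. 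Substituting $h_{\ell}=\pi\ell!/N^{\ell+1}$ and collecting, $R_{2k}(z)$ becomes --- up to an explicit prefactor built from the $h_{N+j}$, powers of $N$ and $G(k+1)=\prod_{0\le j<k}j!$ --- the discrete Selberg-type sum $\sum_{\ell_{1}<\cdots<\ell_{k}\le N+k-1}\prod_{s}\tfrac{(N|z|^{2})^{\ell_{s}}}{\ell_{s}!}\Delta(\vec\ell)^{2}$, i.e.\ a gap probability for a (truncated) Charlier ensemble.

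The remaining step is to resum this discrete sum into the continuous integral of \eqref{LUEduality}. This rests on the identity
\[
\sum_{0\le\ell_{1}<\cdots<\ell_{k}\le N+k-1}\ \prod_{s=1}^{k}\frac{w^{\ell_{s}}}{\ell_{s}!}\,\Delta(\vec\ell)^{2}=\frac{w^{\binom k2}e^{kw}}{k!\,\prod_{j=0}^{k-1}(N+j)!}\int_{[w,\infty)^{k}}\ \prod_{s=1}^{k}\sigma_{s}^{N}e^{-\sigma_{s}}\,\Delta(\vec\sigma)^{2}\,d\vec\sigma ,
\]
a multidimensional analogue of the classical relation $\sum_{\ell=0}^{N}w^{\ell}/\ell!=e^{w}\Gamma(N+1,w)/N!$ (the case $k=1$). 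For general $k$ I would prove it by applying Lemma~\ref{lem:andre} to both sides, which reduces them to $k\times k$ Hankel determinants --- discrete Charlier moments on one side, incomplete gamma functions $\Gamma(N+i+j-1,w)$ on the other --- and then matching these via the recursion $\Gamma(a+1,w)=a\Gamma(a,w)+w^{a}e^{-w}$ and further unitriangular reductions; alternatively both sides can be recognised as the same Toda/orthogonal-polynomial $\tau$-function. Feeding this in with $w=N|z|^{2}$ and then substituting $\sigma_{j}=N|z|^{2}+\sqrt N\,t_{j}$ to move from $[w,\infty)^{k}$ to $[0,\infty)^{k}$ --- which is where the factor $N^{k^{2}/2}$ and the weight $e^{-\sqrt N\,t_{j}}(|z|^{2}+t_{j}/\sqrt N)^{N}$ are generated --- gives \eqref{LUEduality}; the constants reconcile because $G(k+1)^{2}\,k!=\prod_{j=0}^{k-1}j!\,(j+1)!$.

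Finally I would obtain \eqref{smallesteig} from \eqref{LUEduality} by the successive substitutions $t_{j}=\sqrt N(s_{j}-|z|^{2})$ and $s_{j}=\sigma_{j}/N$: using translation invariance and the homogeneity $\Delta(c\,\vec\sigma)=c^{\binom k2}\Delta(\vec\sigma)$, the integral in \eqref{LUEduality} becomes $e^{Nk|z|^{2}}$ times a power of $N$ times $\int_{[N|z|^{2},\infty)^{k}}\prod_{s}\sigma_{s}^{N}e^{-\sigma_{s}}\Delta(\vec\sigma)^{2}\,d\vec\sigma$, and by \eqref{LUEnormconst} the last integral equals $C_{k}^{(\mathrm{LUE}_{N})}\,\mathbb P(\lambda_{1}>N|z|^{2})$; simplifying the prefactor via $C_{k}^{(\mathrm{LUE}_{N})}/\prod_{0\le j<k}j!\,(j+1)!=\prod_{j=1}^{k}\Gamma(N+j)/\Gamma(j)$ yields \eqref{smallesteig} exactly.

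The hard part will be the resummation identity, and more generally the bookkeeping of constants: one must carry all the powers of $N$ and the Barnes-$G$ factors correctly through \eqref{mergeakevern}, the Cauchy--Binet expansion, and the discrete-to-continuous passage, and in particular check that the truncation at $\ell=N+k-1$ in $B_{N+k}$ produces the exponent exactly $N$ (rather than $N+k-1$) in the Laguerre weight --- equivalently, that the $\binom k2$ ``extra'' powers of the integration variables generated by the Vandermonde are absorbed in the right way. Throughout, the case $k=1$, where $R_{2}(z)=\tfrac{N!}{N^{N}}\sum_{\ell=0}^{N}(N|z|^{2})^{\ell}/\ell!$ can be computed directly, provides a useful check at each stage.
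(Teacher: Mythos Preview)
Your approach is correct but takes a genuinely different route from the paper's. Both start from the Akemann--Vernizzi formula \eqref{akevern} and the confluence rule \eqref{mergeakevern}, but diverge in how the resulting $k\times k$ determinant of partial derivatives of $B_{N+k}$ is evaluated. You expand $B_{N+k}$ as its defining finite sum, apply Cauchy--Binet to obtain a discrete Charlier-type sum $\sum_{\ell_1<\cdots<\ell_k}\prod_s(N|z|^2)^{\ell_s}/\ell_s!\,\Delta(\vec\ell)^2$, and then need a non-trivial resummation identity to convert this into the continuous LUE integral. The paper instead begins by writing the kernel as a Laplace-type integral,
\[
B_{N+k}(x,y)=\frac{x^{N+k}}{\pi\,\Gamma(N+k)}\int_0^\infty e^{-tx}(y+t)^{N+k-1}\,dt,
\]
so that the $x$- and $y$-derivatives in \eqref{mergeakevern} act separately on the factors $e^{-tx}$ and $(y+t)^{N+k-1}$; a single application of Andr\'eief's identity (Lemma~\ref{lem:andre}) then produces the $k$-fold integral \eqref{LUEduality} directly, with no discrete sum ever appearing and no resummation lemma needed. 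Your route makes the Charlier--Laguerre duality explicit and connects naturally to orthogonal-polynomial $\tau$-functions, which is conceptually interesting; the paper's route is considerably shorter and sidesteps precisely the step you flag as the ``hard part''. The passage from \eqref{LUEduality} to \eqref{smallesteig} is the same change of variables in both treatments.
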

To our knowledge the identity \eqref{LUEduality} first appeared in \cite{NK02} where it was derived with Grassmann integration techniques and was then generalized by Forrester and Rains in \cite{FR09} using a symmetric functions approach. Similar dualities involving positive definite matrices have appeared in the work \cite{FK07}, see also \cite{FGS18} for a different generalization, though these approaches appear to be quite specific to the Ginibre setting. In the following we will give a different proof using the general formula \eqref{akevern}.
\begin{proof}[Proof of Proposition \ref{prop:lue}]
Starting with identity \eqref{akevern}, the weight in the Ginibre ensemble\footnote{Here it is convenient to begin with the $N$-independent weight, to obtain \eqref{LUEduality} the factor of $N$ can be easily restored at the end of the calculation.} is $w(\lambda) = e^{-|\lambda|^{2}}$ and the corresponding planar orthogonal polynomials are $p_{j}(\lambda) =  \lambda^{j}$. The norms follow from an explicit integration as $h_{j} = \pi j!$ and this gives rise to the usual Ginibre finite $N$ kernel
\begin{equation}
B_{N+k}(x,y) = \frac{1}{\pi}\sum_{j=0}^{N+k-1}\frac{(xy)^{j}}{j!}= \frac{x^{N+k}}{\pi\Gamma(N+k)}\int_{0}^{\infty}e^{-tx}(y+t)^{N+k-1}\,dt, \label{ginkern}
\end{equation}
where, for simplicity (and without loss of generality), we have assumed $x>0$ and $y>0$. The Ginibre kernel has been written in this way to facilitate computing its partial derivatives, as in \eqref{mergeakevern}. Since the multiplicative factors in \eqref{ginkern} drop out of the determinant, it is enough to calculate $\det\{A(x_i,y_j)\}_{i,j=1}^{k}$ where
\begin{equation}
A(x,y) = \frac{1}{\Gamma(N+k)}\int_{0}^{\infty}e^{-tx}(y+t)^{N+k-1}\,dt \label{Afunc}.
\end{equation}
Now we merge the points $x_{1},\ldots,x_{k} \to x$ and $y_{1},\ldots,y_{k} \to y$. Differentiating and applying Lemma \ref{lem:andre} yields
\begin{equation}
\begin{split}	
&\det\bigg\{\frac{\partial^{i+j}A(x,y)}{\partial x^{i} \partial y^{j}}\bigg\}_{i,j=0}^{k-1} = c_{N,k}\det\bigg\{\int_{0}^{\infty}t^{i-1}e^{-tx}(y+t)^{N+k-j}\,dt\bigg\}_{i,j=1}^{k}\\
&=\frac{c_{N,k}}{k!}\int_{[0,\infty)^{k}}\prod_{j=1}^{k}e^{-t_{j}x}(y+t_{j})^{N}\det\bigg\{t_{i}^{j-1}\bigg\}_{i,j=1}^{k}\det\bigg\{(y+t_{i})^{k-j}\bigg\}_{i,j=1}^{k}\,d\vec{t}\\
&=\frac{c_{N,k}(-1)^{\frac{k(k-1)}{2}}}{k!}\int_{[0,\infty)^{k}}\prod_{j=1}^{k}e^{-xt_{j}}(y+t_{j})^{N}\,\Delta^{2}(\vec{t})\,d\vec{t},
\end{split}
\end{equation}
where
\begin{equation}
c_{N,k} = (-1)^{\frac{k(k-1)}{2}}\prod_{j=1}^{k}\frac{1}{\Gamma(N+k-(j-1))}.
\end{equation}
Collecting the multiplicative factors in \eqref{ginkern} and the product of norms in \eqref{akevern} gives, using \eqref{mergeakevern},
\begin{equation}
\begin{split}
\mathbb{E}&\left(\det(G-x)^{k}\det(G^{\dagger}-y)^{k}\right)\\
&=\left(\prod_{j=1}^{k}\frac{1}{(j-1)!j!}\right)\int_{[0,\infty)^{k}}\prod_{j=1}^{k}e^{-t_{j}}(xy+t_{j})^{N}\Delta^{2}(\vec{t})\, d\vec{t}, \label{finGineq}
\end{split}
\end{equation}
where we made the change of variables $t_{j} \to t_{j}/x$. As an identity between polynomials in $x$ and $y$, \eqref{finGineq} now holds for any $x,y \in \mathbb{C}$. Identity \eqref{LUEduality} follows from the rescalings $x \to \sqrt{N} x$, $y \to \sqrt{N} y$ and $t_{j} \to \sqrt{N}t_{j}$.
\end{proof}

Now we will prove Theorem \ref{th:onecharge}. 
\begin{proof}[Proof of Theorem \ref{th:onecharge}]
As our starting point we consider the duality formula \eqref{LUEduality}. Changing variable $t_{j} \to t_{j}+\sqrt{N}(1-|z|^{2})$ for each $j=1,\ldots,k$ transforms the integral to
\begin{equation}
e^{Nk(|z|^{2}-1)}\int_{(-\sqrt{N}(1-|z|^{2}),\infty)^{k}}\,\prod_{j=1}^{k}\mathrm{exp}\left(-\sqrt{N}t_{j}+N\log\left(1+t_{j}/\sqrt{N}\right)\right)\,\Delta^{2}(\vec{t})\,d\vec{t}.
\end{equation}
Now notice that pointwise, we have the limit
\begin{equation}
\mathrm{exp}\left(-\sqrt{N}t_{j}+N\log\left(1+t_{j}/\sqrt{N}\right)\right) \to e^{-t_{j}^{2}/2}, \qquad N \to \infty. \label{pointwiselimit}
\end{equation}
The claim is that in fact the following estimate is uniform provided $z$ stays inside a disc of radius $1+c/\sqrt{N}$ for some constant $c$,
\begin{equation}
\begin{split}
\int_{(-\sqrt{N}(1-|z|^{2}),\infty)^{k}}&\,\prod_{j=1}^{k}\mathrm{exp}\left(-\sqrt{N}t_{j}+N\log\left(1+t_{j}/\sqrt{N}\right)\right)\,\Delta^{2}(\vec{t})\,d\vec{t}\\ &= \int_{(-\sqrt{N}(1-|z|^{2}),\infty)^{k}}\,\prod_{j=1}^{k}e^{-t_{j}^{2}/2}\,\Delta^{2}(\vec{t})\,d\vec{t}\,\,\times(1+o(1)), \qquad N \to \infty \label{unifapprox}\\
&= \mathcal{Z}^{(\mathrm{GUE})}_{k}\,\mathbb{P}\left(\lambda^{(\mathrm{GUE}_{k})}_{\mathrm{max}}<\sqrt{N}(1-|z|^{2})\right)\,\,\times(1+o(1)), \qquad N \to \infty.
\end{split}
\end{equation}
To pass to the final equality we simply recognised the joint density of GUE eigenvalues, up to a normalization factor given explicitly by
\begin{equation}
\mathcal{Z}^{(\mathrm{GUE})}_{k} := \int_{\mathbb{R}^{k}}\prod_{j=1}^{k}e^{-t_{j}^{2}/2}\,\Delta^{2}(\vec{t})\,d\vec{t} = (2\pi)^{k/2}\prod_{j=0}^{k-1}(j+1)!.
\end{equation}
Inserting these results into \eqref{LUEduality} proves Theorem \ref{th:onecharge} after recalling that $G(1+k) = \prod_{j=0}^{k-1}j!$. It remains to verify the uniform asymptotics in the first equality of \eqref{unifapprox}. To this end, observe that the magnitude of the resulting error in the \textit{difference} between the two integrals is largest when $z=0$ (since $z$ only enters through the region of integration, and this is maximized when $z=0$). The condition $|z| \leq 1+c/\sqrt{N}$ ensures that the lower limit of integration remains bounded from above and permits the multiplicative error bound in \eqref{unifapprox}. Hence the verification of \eqref{unifapprox} reduces to showing that
\begin{equation}
\begin{split}
&\lim_{N \to \infty}\int_{(-\sqrt{N},\infty)^{k}}\,\prod_{j=1}^{k}\mathrm{exp}\left(-\sqrt{N}t_{j}+N\log\left(1+\frac{t_{j}}{\sqrt{N}}\right)\right)\,\Delta^{2}(\vec{t})\,d\vec{t}\\
&= \int_{\mathbb{R}^{k}}\prod_{j=1}^{k}e^{-t_{j}^{2}/2}\,\Delta^{2}(\vec{t})\,d\vec{t}
\end{split}
\end{equation}
and this can be proved with a dominated convergence argument, for example using the pointwise limit \eqref{pointwiselimit} and the bound
\begin{equation}
-\sqrt{N}t+N\log\left(1+\frac{t}{\sqrt{N}}\right) =-\int_{0}^{t}\frac{s\,ds}{1+\frac{s}{\sqrt{N}}} \leq -\frac{t^{2}}{2(1+|t|)}, \qquad t > -\sqrt{N}.
\end{equation}
\end{proof}
\begin{remark}
The asymptotic expansion in Theorem \ref{th:onecharge} takes a different form if $z$ is strictly outside the unit disc, see Appendix \ref{app:gff}. It is natural to ask whether one can write down expansions which combine these two in a uniform way. When $k=1$ the right-hand side of \eqref{LUEduality} is an incomplete Gamma function and the uniform transition asymptotics are well studied, see \textit{e.g.} \cite{BW07}. However, these make use of a (contour) integral representation quite different from the right-hand side of \eqref{LUEduality} and as our concern here is more the relation to Painlev\'e, we leave this question for future work.
\end{remark}
\subsection{Reduction to integrals over $U(N)$ for general $\gamma$ and Painlev\'e V}
\label{se:redgroup}
As explained in the introduction, it is interesting to consider extrapolating Theorem \ref{th:onecharge} to non-integer values of $k$. We will now consider this problem, starting with the second representation \eqref{smallesteig} at finite $N$. The work of Tracy and Widom \cite{TW94} implies the representation
\begin{equation}
\mathbb{P}(\lambda_{1} > x) = \mathrm{exp}\left(\int_{0}^{x}\frac{\sigma^{(\mathrm{V})}_{k,N}(t)}{t}\,dt\right), \label{lam1TW}
\end{equation}
where $\sigma^{(\mathrm{V})}_{k,N}(t)$ satisfies the Jimbo-Miwa-Okamoto $\sigma$-form of the Painlev\'e V equation \eqref{pvintro} with parameter $\alpha=N$. We remark that setting $z=0$ in \eqref{smallesteig} shows that the constant pre-factor is precisely $R_{2k}(0)$, so that with \eqref{lam1TW}, equation \eqref{smallesteig} becomes
\begin{equation}
R_{2k}(z) = R_{2k}(0)\,e^{N|z|^{2}k}\,\mathrm{exp}\left(\int_{0}^{N|z|^{2}}\frac{\sigma^{(\mathrm{V})}_{k,N}(t)}{t}\,dt\right). \label{r2ksecond}
\end{equation}
As $k$ now plays the role of a \textit{parameter} in the differential equation \eqref{pvintro}, in principle it can be considered for non-integer values. Starting with the left-hand side of \eqref{r2ksecond} for non-integer $k=\frac{\gamma}{2}$, we now show that the same equation \eqref{pvintro} with $k = \frac{\gamma}{2}$ gives the correct interpretation of $R_{\gamma}(z)$. Via a somewhat different route, this has also been demonstrated in the work of Kanzieper \cite{Kan05} in a different context. However, as the association of \eqref{Robject} with Painlev\'e V seems not well known, we will give an alternative proof below. The proof exploits a Toeplitz determinant identity obtained rather recently in the work \cite{WW18} where the identification in terms of Painlev\'e V did not appear. For completeness, we sketch the main ideas in \cite{WW18} which lead to the following.
\begin{theorem}
\label{th:ginibretocue}
Let $\gamma>-2$ and let $d\mu(U)$ denote the normalized Haar measure on the group of $N \times N$ unitary matrices. Then we have the identity
\begin{equation}\label{ZN_Toeplitz}
\begin{split}
R_{\gamma}(z) &= R_{\gamma}(0)\int_{U(N)}\,\mathrm{det}(U)^{-\frac{\gamma}{4}}|\mathrm{det}(I+U)|^{\frac{\gamma}{2}}\,\mathrm{exp}\left(N|z|^{2}\mathrm{Tr}(U)\right)d\mu(U)\\
&= \frac{R_{\gamma}(0)}{N!(2\pi)^{N}}\int_{[-\pi,\pi]^{N}}\,\prod_{j=1}^{N}e^{-\frac{i\gamma \theta_{j}}{4}}|1+e^{i\theta_{j}}|^{\frac{\gamma}{2}}e^{N|z|^{2}e^{i\theta_{j}}}\,|\Delta(e^{i\theta})|^{2}\,d\vec{\theta},
\end{split}
\end{equation}
where the constant is
\begin{equation}
R_{\gamma}(0) = N^{-\frac{\gamma N}{2}}\prod_{j=0}^{N-1} \frac{\Gamma(\frac{\gamma}{2}+j+1)}{\Gamma(j+1)}. \label{angam}
\end{equation}
Furthermore, we have the representation in terms of solutions of \eqref{pvintro},
\begin{equation}
R_{\gamma}(z) = R_{\gamma}(0)\,e^{N|z|^{2}\frac{\gamma}{2}}\,\mathrm{exp}\left(\int_{0}^{N|z|^{2}}\,\frac{\sigma^{(\mathrm{V})}_{\gamma/2,N}(t)}{t}\,dt\right) \label{TWformula}.
\end{equation}
\end{theorem}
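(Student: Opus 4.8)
The plan is to establish the three assertions of Theorem~\ref{th:ginibretocue} in turn. The constant \eqref{angam} is the easiest: setting $z=0$, the weight $|\lambda|^{\gamma}e^{-N|\lambda|^{2}}$ is rotationally invariant, so its planar orthogonal polynomials are the monomials $\lambda^{j}$ and the squared norms are computed by a single radial integral, $h^{(\gamma)}_{j}=\pi N^{-j-\gamma/2-1}\,\Gamma(j+\tfrac{\gamma}{2}+1)$. Since $R_{\gamma}(z)$ is the ratio of the partition function of the weight $|\lambda-z|^{\gamma}e^{-N|\lambda|^{2}}$ to that of the Ginibre weight $e^{-N|\lambda|^{2}}$, the norming-constant formula \eqref{norms} gives $R_{\gamma}(0)=\prod_{j=0}^{N-1}h^{(\gamma)}_{j}/h^{(0)}_{j}$, and the product telescopes to \eqref{angam}.

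For the unitary-group representation \eqref{ZN_Toeplitz} I would recall the argument of \cite{WW18}. By rotational invariance $R_{\gamma}(z)$ depends only on $|z|$, so one may take $z\ge 0$. Writing $R_{\gamma}(z)$ again as a ratio of partition functions and expanding the numerator through \eqref{norms} (equivalently through the Andr\'eief/Heine identity, Lemma~\ref{lem:andre}), one obtains $N!$ times the $N\times N$ determinant of the planar moments $c_{ij}=\int_{\mathbb{C}}\lambda^{i}\overline{\lambda}^{j}|\lambda-z|^{\gamma}e^{-N|\lambda|^{2}}\,d^{2}\lambda$. Evaluating these moments in polar coordinates, the angular integral of $e^{i(i-j)\theta}|re^{i\theta}-z|^{\gamma}$ supplies a dependence on $i-j$ of Toeplitz type, and after a rescaling of the radial variable together with elementary row and column operations one recognises the determinant, via the determinant--integral duality (the Weyl/Heine form of Lemma~\ref{lem:andre}), as an integral over $U(N)$. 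A short computation using the elementary identity $e^{-i\gamma\theta/4}|1+e^{i\theta}|^{\gamma/2}=(1+e^{-i\theta})^{\gamma/2}$ on $(-\pi,\pi)$ (principal branches) shows the integrand is the one displayed in \eqref{ZN_Toeplitz}, and the prefactor is fixed by the $z=0$ case above. As a consistency check one can also derive \eqref{ZN_Toeplitz} for integer $k=\gamma/2$ from the Laguerre duality \eqref{smallesteig} together with the classical self-duality of the smallest-eigenvalue distribution of the Laguerre ensemble with integer parameter.

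For the Painlev\'e~V representation \eqref{TWformula} the key observation is that the right-hand side of \eqref{ZN_Toeplitz} is a Toeplitz determinant $D_{N}(\eta)$ of size $N$ with symbol $(1+e^{-i\theta})^{\gamma/2}\,e^{\eta e^{i\theta}}$, where $\eta:=N|z|^{2}$ --- that is, a single root-type Fisher--Hartwig singularity at $e^{i\theta}=-1$ with exponent governed by $\gamma$, deformed by an analytic exponential factor in the parameter $\eta$. It is by now standard that such Toeplitz determinants, regarded as functions of the deformation parameter, are described by the $\sigma$-form of Painlev\'e~V: one sets up the Riemann--Hilbert problem for the orthogonal polynomials on the unit circle with this weight (or, equivalently, uses the Toeplitz `ladder' relations), extracts the Hamiltonian $\sigma(\eta):=\eta\,\tfrac{d}{d\eta}\log D_{N}(\eta)-\tfrac{\gamma}{2}\eta$, and verifies that it satisfies \eqref{pvintro} with parameters $k=\tfrac{\gamma}{2}$ and $\alpha=N$ --- precisely those appearing in the Tracy--Widom representation \eqref{lam1TW} of the Laguerre smallest eigenvalue (the symmetry of \eqref{pvintro} under exchange of its two parameters is the shadow of the fact that a size-$N$ Toeplitz determinant and a size-$\tfrac{\gamma}{2}$ Hankel/Laguerre object share the same $\sigma$-function). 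Integrating once and exponentiating then yields \eqref{TWformula}; the constant of integration and the branch of the transcendent are pinned down by the behaviour at $\eta=0$, where $D_{N}(0)=1$ and $D_{N}'(0)=\gamma/2$ are read off from the Fourier coefficients of the symbol, which together with the structure of \eqref{pvintro} at $t=0$ forces $\sigma(0)=\sigma'(0)=0$ and is consistent with the integer-$k$ identity \eqref{r2ksecond}.

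The main obstacle is this last step: carrying the Fisher--Hartwig exponent through the Riemann--Hilbert (or ladder-relation) analysis of $D_{N}(\eta)$ for \emph{general} $\gamma>-2$ and matching the resulting second-order ODE, its parameters, and its boundary data to \eqref{pvintro}. An appealing alternative would be to bypass this by interpolating from the integer values $k=\gamma/2\in\mathbb{N}$, for which \eqref{r2ksecond} already gives \eqref{TWformula}; here the role of \eqref{ZN_Toeplitz} is precisely that it exhibits $R_{\gamma}(z)/R_{\gamma}(0)$ as an entire function of $\gamma$ of controlled (sub-exponential) growth on $\mathrm{Re}(\gamma)>-2$, which is the hypothesis one needs for a Carlson-type uniqueness theorem. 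However, verifying the corresponding analyticity and growth on the Painlev\'e side of \eqref{TWformula} is itself delicate, so the direct Toeplitz-determinant route appears preferable.
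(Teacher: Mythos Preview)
Your treatment of the constant $R_{\gamma}(0)$ is fine and matches the paper. The other two parts, however, diverge from the paper in ways worth noting.

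For the unitary-group identity \eqref{ZN_Toeplitz}, you invoke \cite{WW18} but your sketch does not actually capture the key mechanism used there (and reproduced in the paper). The planar moments $c_{ij}=\int_{\mathbb{C}}\lambda^{i}\overline{\lambda}^{j}|\lambda-z|^{\gamma}e^{-N|\lambda|^{2}}\,d^{2}\lambda$ are \emph{not} Toeplitz: in polar coordinates they depend on $i+j$ through the radial integral as well as on $i-j$ through the angular one, and ``elementary row and column operations'' alone will not cure this. The paper's route is different: one writes the planar orthogonality in the form $\int_{\mathbb{C}}p_{j}(\lambda)\,\partial_{\overline{\lambda}}h(\lambda,\overline{\lambda},z)\,d^{2}\lambda$ for a suitable primitive $h$, applies Green's theorem to collapse the area integral to a contour integral around $[0,z]$, and only then obtains genuine orthogonality on $S^{1}$ against a weight whose moments are Toeplitz. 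The Green's theorem step is the essential idea (it is what separates the radial and angular dependence cleanly), and it is absent from your description.

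For the Painlev\'e~V representation \eqref{TWformula}, you propose carrying out a Riemann--Hilbert or ladder-operator analysis from scratch, and flag this as the main obstacle. The paper avoids this entirely: once one has the $U(N)$ integral \eqref{ZN_Toeplitz}, that precise object --- an average over Haar unitaries of $\det(U)^{-\gamma/4}|\det(I+U)|^{\gamma/2}e^{\eta\,\mathrm{Tr}(U)}$ --- has already been identified with a $\sigma$-Painlev\'e~V transcendent in the literature (Adler--van~Moerbeke \cite{AvMPV}, Forrester--Witte \cite{FWPV}), with exactly the parameters $(k,\alpha)=(\gamma/2,N)$. So the last step is a citation, not a new computation. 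Your Carlson-type interpolation idea is interesting but unnecessary here, and as you note the analytic control on the Painlev\'e side would be the hard part.
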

\begin{proof}
Since $\gamma$ is no longer an even integer, formula \eqref{akevern} does not apply. Instead we apply identity \eqref{norms} and work with the planar polynomials $p_{j}(\lambda)$ orthogonal with respect to the weight $|\lambda-z|^{\gamma}e^{-N|\lambda|^{2}}$ with $\lambda \in \mathbb{C}$. Note that we also cannot use the previous trick to compute $p_{j}(\lambda)$ because the weight is not radial, although we can assume that $z = |z| > 0$ without loss of generality. The fundamental idea is to use Green's theorem to reduce the planar orthogonality of $p_{j}(\lambda)$ to orthogonality on a suitable contour, which can then be deformed to the unit circle. In the present context this idea goes back to \cite{BBLM15}. Indeed, we can write the orthogonality \eqref{orthog} in the differential form ($k \leq j$):
\begin{equation}
\begin{split}
h_{j}\delta_{j,k} &= \int_{\mathbb{C}}p_{j}(\lambda)(\overline{\lambda}-z)^{k}|\lambda-z|^{\gamma}e^{-N|\lambda|^{2}}\,d^{2}\lambda\\
&=\lim_{r \to \infty}\int_{|\lambda|\leq r}p_{j}(\lambda)\frac{\partial}{\partial \overline{\lambda}}\,h(\lambda,\overline{\lambda},z)\,d^{2}\lambda,
\end{split}
\end{equation}
where
\begin{equation}
h(\lambda,\overline{\lambda},z) = (\lambda-z)^{\frac{\gamma}{2}}\int_{z}^{\overline{\lambda}}(s-z)^{\frac{\gamma}{2}+k}e^{-N\lambda s}\,ds,
\end{equation}
and the roots are defined using the principal branch. Applying Green's theorem and writing $h(\lambda,\overline{\lambda},z)$ in terms of the Gamma function (see \cite[Appendix A]{WW18} for full details) yields
\begin{equation}
\begin{split}
h_{j}\delta_{j,k} = \frac{\pi \Gamma(\frac{\gamma}{2}+k+1)}{N^{\frac{\gamma}{2}+k+1}}\oint_{\Sigma}p_{j}(\lambda)\lambda^{-k-\frac{\gamma}{2}}(\lambda-z)^{\frac{\gamma}{2}}e^{-Nz\lambda}\frac{d\lambda}{2\pi i \lambda}. \label{ginibrecontour}
\end{split}
\end{equation}
The contour $\Sigma$ is any closed curve encircling the interval $[0,z]$, possibly passing through $z$, but no other points of $[0,z]$, and oriented counterclockwise; let us take $\Sigma = zS^{1}$ where $S^{1}$ is the unit circle. The change of variable $\lambda \to z\lambda$ implies that we get norms on the unit circle of the form
\begin{equation}
\begin{split}
\tilde{h}_{j}\delta_{j,k} &:= \oint_{\Sigma}p_{j}(\lambda)\lambda^{-k-\frac{\gamma}{2}}(\lambda-z)^{\frac{\gamma}{2}}e^{-Nz\lambda}\frac{d\lambda}{2\pi i \lambda}\\
&= \oint_{S^{1}}\tilde{p}_{j}(\lambda)\lambda^{-k-\frac{\gamma}{4}}|\lambda+1|^{\frac{\gamma}{2}}e^{Nz^{2}\lambda}\frac{d\lambda}{2\pi i \lambda}, \label{pjetc}
\end{split}
\end{equation}
where $\tilde{p}_{j}(\lambda)$ is another family of degree $j$ monic polynomials. To get the second equality in \eqref{pjetc} we exploited the fact that writing $\lambda=e^{i\theta}$, with $\theta\in(-\pi,\pi)$ we have the identity
\begin{equation}
\lambda^{-\frac{\gamma}{2}}(\lambda-1)^{\frac{\gamma}{2}} = e^{i\textrm{sgn}(\theta)\gamma\pi/4}\lambda^{-\frac{\gamma}{4}}|\lambda-1|^{\frac{\gamma}{2}}, \qquad \theta \in (-\pi,\pi) \label{idabs}
\end{equation}
and scaled $\lambda \to -\lambda$. Now applying \eqref{norms} in reverse we get the expression as a Toeplitz determinant
\begin{equation}
\begin{split}
R_{\gamma}(z) &= \frac{N!\prod_{j=0}^{N-1}h_{j}}{\mathcal{Z}^{(\mathrm{Gin})}_{N}} = N^{-\frac{\gamma N}{2}}\prod_{j=0}^{N-1}\frac{\Gamma(\frac{\gamma}{2}+j+1)}{\Gamma(j+1)}\prod_{j=0}^{N-1}\tilde{h}_{j}\\
&= R_{\gamma}(0)\,\det\bigg\{\oint_{S^{1}}\lambda^{j-i-\frac{\gamma}{4}}|\lambda+1|^{\frac{\gamma}{2}}e^{Nz^{2}\lambda}\frac{d\lambda}{2\pi i \lambda}\bigg\}_{i,j=0}^{N-1}. \label{tdetgin}
\end{split}
\end{equation}
Apart from the simple identity \eqref{idabs}, the determinant \eqref{tdetgin} is of the same form found in \cite[Lemma 2.5]{WW18}. The explicit form of the pre-factor $R_{\gamma}(0)$ is obtained by applying identity \eqref{norms}, using that the relevant orthogonality weight is radially symmetric. Now \eqref{ZN_Toeplitz} follows from Heine's identity, expressing the above Toeplitz determinant as an integral over the unitary group. To arrive at \eqref{TWformula} we simply observe that the quantities on the right-hand side of \eqref{ZN_Toeplitz} have known associations with Painlev\'e V, see for example \cite[Proposition 4.2]{AvMPV} and \cite[Eq. (1.46)]{FWPV} where these group integrals appear explicitly and are identified in terms of solutions of the Painlev\'e V equation.
\end{proof}
\begin{remark}
We make some observations implied by the above calculation. The polynomials appearing in \eqref{pjetc} (orthogonal on the unit circle) have been studied in relation to Riemann-Hilbert problems in the works \cite{FW04, FW06}, which involve Painlev\'e V. We can therefore make the observation that, as implied by \eqref{pjetc}, up to some explicit constants the results in \cite{FW04,FW06} also apply to the corresponding \textit{planar} orthogonal polynomials. At the level of asymptotics, the recurrence coefficients for these polynomials were studied in \cite{DK09} using Riemann-Hilbert techniques related to a Painlev\'e IV parametrix. 
\end{remark}
\begin{remark}
In the case $\gamma=2k$, the Toeplitz determinant \eqref{tdetgin} can be written as the generating function of certain combinatorial objects, associated with the longest increasing subsequence problem \cite{TWwords}. The appearance of such generating functions in the present context of the complex Ginibre ensemble seems to be a new observation. The equivalence of \eqref{ZN_Toeplitz} and \eqref{LUEduality} for $\gamma = 2k$ implies an interesting duality between averages over the unitary group and over the Laguerre ensemble of positive definite matrices. This duality appeared from a different point of view in \cite[Proposition 3.9]{FWPV}, which contains further discussion based on hypergeometric functions of matrix argument.
\end{remark}

We will now show that the appearance of Painlev\'e IV in Theorem \ref{th:onecharge} can be justified for non-integer exponents $k = \frac{\gamma}{2}$, at least to a physical level of rigour. The idea is to take the representation \eqref{TWformula} and rescale the differential equation \eqref{pvintro} near the boundary, setting $|z|^{2} = 1-\frac{2u}{\sqrt{N}}$ with $u$ fixed. First, notice that applying known asymptotics of the Barnes G-function (see \cite[Eq. 5.17.5]{NIST:DLMF}), the pre-factor gives the main terms in the asymptotic expansion of Theorem \ref{th:onecharge}:
\begin{equation}
R_{\gamma}(0) = N^{-\frac{\gamma N}{2}}\frac{G(\frac{\gamma}{2}+N+1)}{G(1+\frac{\gamma}{2})G(N+1)} = e^{-\frac{\gamma}{2}N}\,N^{\frac{\gamma^{2}}{8}}\,\frac{(2\pi)^{\frac{\gamma}{4}}}{G(1+\frac{\gamma}{2})}\,(1+o(1)), \qquad N \to \infty.
\end{equation}
Inserting this into \eqref{TWformula} and changing variables $t=N(1-s/\sqrt{N})$, we obtain
\begin{equation}
R_{\gamma}(z) = e^{\frac{N\gamma}{2}(|z|^{2}-1)}\frac{N^{\frac{\gamma^{2}}{8}}(2\pi)^{\frac{\gamma}{4}}}{G(1+\frac{\gamma}{2})}\,\mathrm{exp}\left(\int_{2u}^{\sqrt{N}}\frac{N^{-1/2}\sigma^{(\mathrm{V})}_{\gamma/2,N}(N(1-s/\sqrt{N}))}{1-s/\sqrt{N}}\,ds\right)\,(1+o(1)), \qquad N \to \infty. \label{RgamPain}
\end{equation}
Now we claim that the function $v(s) = -N^{-1/2}\sigma^{(\mathrm{V})}_{\gamma/2,N}(N(1-s/\sqrt{N}))$ has a limit. The equation satisfied by $v(s)$ is
\begin{equation}
\begin{split}
(\sqrt{N}(1-s/\sqrt{N})v'')^{2}&-[-\sqrt{N}v-N(1-s/\sqrt{N})v'+2(v')^{2}+(\gamma+N)v']^{2}\\
&+4(v')^{2}(N+\gamma/2+v')(\gamma/2+v')=0. \label{scaledv}
\end{split}
\end{equation}
Equating terms of order $N$ (leading order) in the above equation, we obtain a limiting equation
\begin{equation}
(v'')^{2}+4(v')^{2}(v'+\gamma/2)-(sv'-v)^{2}=0, \label{PIV}
\end{equation}
which is precisely the $\sigma$-form of Painlev\'e IV \eqref{p4} of Theorem \ref{th:onecharge} with $k=\frac{\gamma}{2}$. We do not attempt to justify the interchange of limits involved here, which we expect to require Riemann-Hilbert techniques, though we note that this is the only step lacking full mathematical rigour in this general exponent setting.

\subsection{Application to the lemniscate ensemble}
\label{se:lem}
The partition function of the lemniscate particle system is defined by
\begin{equation}
\mathcal{Z}^{(\mathrm{Lem}_{d})}_{N}(t) := \int_{\mathbb{C}^{N}}\prod_{j=1}^{N}e^{-N(|\lambda_{j}|^{2d}-t(\lambda_{j}^{d}+\overline{\lambda_{j}}^{d}))}\,|\Delta(\vec{\lambda})|^{2}\,d\vec{\lambda}, \qquad d \in \mathbb{N},\quad t \in \mathbb{R}. \label{lempart}
\end{equation}
In this section we will discuss this partition function for finite $N$, its relation to Painlev\'e transcendents and asymptotic expansions in the three regimes $t<t_{\mathrm{c}}$, $t=t_{\mathrm{c}}$ and $t>t_{\mathrm{c}}$ (recall that $t_{\mathrm{c}} := d^{-1/2}$). We begin, for convenience, by repeating Lemma \ref{lem:introredgin} from the introduction.
\begin{lemma}[Reduction to Ginibre]
\label{lem:ginred2}
We have the identity 
\begin{equation}
\mathcal{Z}^{(\mathrm{Lem}_{d})}_{Nd}(t) = e^{(Ntd)^{2}}c_{N,d}(\mathcal{Z}^{(\mathrm{Gin})}_{N})^{d}\prod_{\ell=0}^{d-1}\mathbb{E}\left(|\det(G_{N}-t\sqrt{d})|^{\gamma_{\ell}}\right) \label{ginred2}
\end{equation}
where
\begin{equation}
\gamma_{l} = -2\left(1-\frac{\ell+1}{d}\right), \qquad \ell=0,\ldots,d-1, \label{gamell2}
\end{equation}
and $c_{N,d} = (Nd)!d^{-N(Nd+2d+1)/2}(N!)^{-d}$. The partition function for the Ginibre ensemble (\textit{i.e.} \eqref{generalpf} with $V(\lambda)=|\lambda|^{2}$) is
\begin{equation}
\mathcal{Z}^{(\mathrm{Gin})}_{N} = \pi^{N}\frac{\prod_{k=1}^{N}k!}{N^{\frac{N(N+1)}{2}}}.
\end{equation}
\label{lem:redgin}
\end{lemma}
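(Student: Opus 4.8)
The strategy is to exploit the $d$-fold rotational symmetry $\lambda \to \lambda e^{2\pi i/d}$ of the potential \eqref{lem} directly at the level of the integral \eqref{lempart}, writing each eigenvalue in terms of its $d$-th power. Concretely, I would perform the change of variables grouping the $Nd$ eigenvalues $\lambda_1,\dots,\lambda_{Nd}$ according to the sector of the plane they occupy: each $\lambda_j$ with $\lambda_j^d = \zeta$ can be written $\lambda_j = \zeta^{1/d} e^{2\pi i \ell_j / d}$ for a branch choice and $\ell_j \in \{0,\dots,d-1\}$. The key algebraic identity is the factorization of the Vandermonde: writing $\mu_j = \lambda_j^d$, one has $\prod_{i<j}|\lambda_i - \lambda_j|^2$ splitting, after symmetrization, into a product of $d$ Vandermondes $|\Delta(\vec\mu^{(0)})|^2\cdots|\Delta(\vec\mu^{(d-1)})|^2$ in the new variables for each sector, times cross terms $\prod_{\ell \neq \ell'}\prod_{i,j}|\lambda_i^{(\ell)} - \lambda_j^{(\ell')}|$. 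The cross terms are where the characteristic-polynomial factors come from: using the classical identity $\prod_{m=0}^{d-1}(a - b\, e^{2\pi i m/d}) = a^d - b^d$, the product over sector pairs collapses into powers of $|\mu_i^{(\ell)} - \mu_j^{(\ell')}|$ and, crucially, leaves residual single-sector factors $\prod_j |\mu_j^{(\ell)}|^{c_\ell}$ that produce the noninteger exponents $\gamma_\ell$ in \eqref{gamell2}. The exponents $\gamma_\ell = -2(1 - (\ell+1)/d)$ should emerge as a bookkeeping count of how many cross-factors of each type survive versus are absorbed.

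The cleaner route — and the one I would actually write up — is probably to avoid the explicit sector decomposition and instead use the orthogonal polynomial / partition function formula \eqref{norms}. The weight $e^{-NV^{(d)}(\lambda,t)}$ is invariant under $\lambda \to \lambda e^{2\pi i/d}$, so the planar orthogonal polynomials $p_k(\lambda)$ for this weight are of the form $\lambda^{r} q_{\lfloor k/d\rfloor}(\lambda^d)$ where $r = k \bmod d$, i.e. they decouple into $d$ families indexed by the residue class, each family being (up to the monomial prefactor $\lambda^r$) a system of planar orthogonal polynomials in the variable $\lambda^d$. Changing variables $\lambda^d = w$ in the orthogonality integral \eqref{orthog} turns these into planar orthogonal polynomials with respect to a weight of the form $|w|^{2r/d}\, e^{-N(|w|^2 - t(w + \bar w))}\cdot \tfrac1d |w|^{2/d - 2}$ on $\mathbb{C}$, i.e. essentially a Ginibre-type weight with a point-charge insertion at the origin of strength depending on $r$. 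Then by \eqref{norms}, $\mathcal{Z}^{(\mathrm{Lem}_d)}_{Nd}(t)$ factors as $(Nd)!$ times a product over the $N d$ norms $h_k$, which regroups into $d$ blocks of $N$ norms each; block $\ell$ is (up to explicit Gamma-function constants from the Jacobian $\tfrac1d|w|^{2/d-2}$ and the rescaling $w \to w\sqrt d$) exactly $N!^{-1}$ times the partition function for a radial Ginibre weight with a charge $\gamma_\ell$ at $t\sqrt d$ — which is $\mathcal{Z}^{(\mathrm{Gin})}_N \cdot \mathbb{E}(|\det(G_N - t\sqrt d)|^{\gamma_\ell})$ by the radial-weight version of \eqref{norms} used in the proof of Theorem \ref{th:ginibretocue}. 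Completing the square in the exponent, $-N(|w|^2 - t(w+\bar w)) = -N|w - t|^2 + Nt^2$, and summing the constants $Nt^2$ over the $Nd$ eigenvalues (after the $\lambda^d$ substitution shifts $Nd$ copies) produces the overall factor $e^{(Ntd)^2}$ — here I should be careful that the shift $w \to w + t\sqrt d$ versus $w\to w + t$ is tracked consistently with the $\sqrt d$ rescaling, which is exactly the source of the $t\sqrt d = t/t_{\mathrm c}\cdot$(normalization) appearing as the evaluation point of the characteristic polynomial.

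The main obstacle I anticipate is purely bookkeeping: getting the constant $c_{N,d} = (Nd)!\, d^{-N(Nd+2d+1)/2}(N!)^{-d}$ exactly right. Each of the three pieces — the overall $(Nd)!$ from \eqref{norms}, the $(N!)^{-d}$ from reassembling each of the $d$ blocks into a Ginibre expectation via the radial \eqref{norms}, and the power of $d$ from (i) the $d^{-Nd}$ Jacobian factors $\tfrac1d$, (ii) the $|w|^{2/d-2}$ absorbed into the weight, and (iii) the $w \to w\sqrt d$ rescaling that converts the $N$-dependent Gaussian weight into the standard one — must be tracked with care, and the fractional powers $2\ell/d + 2/d - 2$ in the effective charge must be matched to $\gamma_\ell$ in \eqref{gamell2}, including the subtlety that the radial weight $|w|^{2\beta}e^{-N|w-t\sqrt d|^2}$ with $\beta = \ell/d + 1/d - 1 = -(1-(\ell+1)/d) = \gamma_\ell/2$ is exactly a "point charge $\gamma_\ell$ at $0$" which, after the shift, becomes a point charge at $t\sqrt d$ — hence $\mathbb{E}(|\det(G_N - t\sqrt d)|^{\gamma_\ell})$. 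None of this is conceptually deep, but the constant is stated explicitly in the lemma, so the write-up must carry every factor; I would organize it as a short chain of three substitutions (grouping eigenvalues by residue, $\lambda^d = w$, completing the square and rescaling) with the constants collected at the very end.
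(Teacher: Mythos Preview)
Your proposal is correct and your ``cleaner route'' via the orthogonal-polynomial decomposition $p_{jd+\ell}(\lambda)=\lambda^\ell q_j^{(\ell)}(\lambda^d)$, the change of variables $\lambda^d=w$, and the product-of-norms formula \eqref{norms} is exactly the approach taken in the paper (which also offers a second proof via the block-diagonal structure of the moment matrix under the $d$-fold symmetry). Your anticipation that the only real work is careful bookkeeping of the Jacobian, the $\sqrt d$ rescaling, and the completion of the square to produce $e^{(Ntd)^2}$ and $c_{N,d}$ is accurate.
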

\begin{proof}
We give a quick proof based on ideas of \cite{BGM17}. We will abbreviate throughout
\begin{equation}
\,V^{(d)}(\lambda,t) = |\lambda|^{2d}-t(\lambda^{d}+\overline{\lambda}^{d}). \label{lempotlat}
\end{equation}
If $p_{j}(\lambda)$ are the monic planar polynomials orthogonal with respect to $e^{-N\,V^{(d)}(\lambda,t)}$, by the symmetry  $\lambda \to \lambda e^{2\pi i/d}$ there must exist a family of monic polynomials $q^{(\ell)}_{j}(\lambda)$ such that
\begin{equation}
p_{jd+\ell}(\lambda) = \lambda^{\ell}q^{(\ell)}_{j}(\lambda^{d}).
\end{equation}
The change of variables $u = \lambda^{d}$ comes with a Jacobian $d^{2}\lambda = d^{-2}|u|^{\frac{2}{d}-2}\,d^{2}u$ and we get
\begin{equation}
\begin{split}
h_{jd+\ell}\delta_{j,k} &= \int_{\mathbb{C}}|\lambda|^{2\ell}q^{(\ell)}_{k}(\lambda^{d})\overline{q^{(\ell)}_{j}(\lambda^{d})}e^{-Nd(|\lambda|^{2d}-t(\lambda^{d}+\overline{\lambda}^{d}))}\,d^{2}\lambda\\
&= d^{-1}e^{Ndt^{2}}\int_{\mathbb{C}}q^{(\ell)}_{k}(u+t)\overline{q^{(\ell)}_{j}(u+t)}|u+t|^{\gamma_{\ell}}e^{-Nd|u|^{2}}\, d^{2}u
\end{split} \label{changevars}
\end{equation}
where the $\gamma_{\ell}$ exponents are as in definition \eqref{gamell2} (note that a factor $d^{-1}$ is absorbed from the $d$-fold rotation). Scaling $u \to u/\sqrt{d}$ and using \eqref{norms} twice, we find
\begin{equation}
\mathcal{Z}^{(\mathrm{Lem}_{d})}_{Nd}(t) = (Nd)!\prod_{\ell=0}^{d-1}\prod_{j=0}^{N-1}h_{jd+\ell} = e^{(Ntd)^{2}}c_{N,d}(\mathcal{Z}^{(\mathrm{Gin})}_{N})^{d}\prod_{\ell=0}^{d-1}R_{\gamma_\ell}(t\sqrt{d}), \label{normstwice}
\end{equation}
where $R_{\gamma}(z)$ is defined in \eqref{Robject}, see also \eqref{ZN_Toeplitz}. It is also interesting to consider a second proof going explicitly via the determinant formula in \eqref{norms}. We have
\begin{equation}
\mathcal{Z}^{(\mathrm{Lem}_{d})}_{Nd}(t) = (Nd)!\det\bigg\{\int_{\mathbb{C}}\lambda^{j}\overline{\lambda}^{k}e^{-Nd\,V^{(d)}(\lambda,t)}d^{2}\lambda\bigg\}_{j,k=0}^{Nd-1}. \label{compmommat}
\end{equation}
We claim that most of the entries of the above moment matrix are zero due to symmetry. Indeed, because $V^{(d)}(\lambda,t)$ is invariant under the rotation $\lambda \to \lambda e^{2\pi i/d}$ we see that in a given $d \times d$ sub-block parameterized by $j = j'd+\ell$, $k=k'd+\ell'$, where $\ell = 0,\ldots,d-1$ and $\ell' = 0,\ldots,d-1$, one has
\begin{equation}
\int_{\mathbb{C}}\lambda^{j}\overline{\lambda}^{k}e^{-Nd\,V^{(d)}(\lambda,t)}d^{2}\lambda = e^{2i\pi(j'd-k'd)/d}e^{2i\pi(\ell-\ell')/d}\int_{\mathbb{C}}\lambda^{j}\overline{\lambda}^{k}e^{-Nd\,V^{(d)}(\lambda,t)}d^{2}\lambda.
\end{equation}
By the constraints on $\ell$ and $\ell'$, it is impossible for $(\ell-\ell')/d$ to be an integer unless $\ell=\ell'$. This shows that our moment matrix in \eqref{compmommat} consists of $d \times d$ diagonal blocks. Since all these blocks commute one can treat them as scalars and compute the determinant as for an ordinary matrix of scalars, followed by the determinant of the resulting diagonal matrix:
\begin{equation}
\begin{split}
\mathcal{Z}^{(\mathrm{Lem}_{d})}_{Nd}(t) &= (Nd)!\prod_{\ell=0}^{d-1}\det\bigg\{\int_{\mathbb{C}}|\lambda|^{2\ell}(\lambda^{d})^{j}(\overline{\lambda}^{d})^{k}e^{-Nd\,V^{(d)}(\lambda,t)}\,d^{2}\lambda\bigg\}_{j,k=0}^{N-1}\\
&= (Nd)!\prod_{\ell=0}^{d-1}\det\bigg\{d^{-1}\int_{\mathbb{C}}|u|^{\gamma_{\ell}}u^{j}\overline{u}^{k}e^{-Nd|u-t|^{2}+Nd|t|^{2}}\,d^{2}u\bigg\}_{j,k=0}^{N-1}\\
&= \frac{(Nd)!d^{-Nd}}{(N!)^{d}}\prod_{\ell=0}^{d-1}\int_{\mathbb{C}^{N}}\prod_{j=1}^{N}|u_{j}|^{\gamma_{\ell}}e^{-Nd|u_{j}-t|^{2}+Nd|t|^{2}}|\Delta(\vec{u})|^{2}\,d^{2}\vec{u},
\end{split}
\end{equation}
where we made the same change of variable $u=\lambda^{d}$ as before. Now shifting $u_{j} \to u_{j}+t$ followed by $u_{j} \to u_{j}/\sqrt{d}$ completes the proof.

\end{proof}

%The boundary condition above is obtained as follows. First write \sigma(x) = x(d/dx)\log(p_{k}(x)) = x(p'(x)/p(x)). Then use Leibniz rule for differentiating under an integral repeatedly. One eventually reaches k\int_{[x,\infty)^{k-1}} P(x,\lambda_2,\ldots,\lambda_k)d\lambda ~ k\int_{[0,\infty)^{k-1}} P(x,\lambda_2,\ldots,\lambda_k) = k\rho_{k}(x) = x^{N} (x^{-N}k\rho_{k}(x)) and a direct calculation using Laguerre polynomials shows that \lim_{x \to 0}(x^{-N}k\rho_{k}(x)) = \frac{\Gamma(N+1+k)}{\Gamma(N+2)\Gamma(N+1)\Gamma(k)}.

Combining this result with Lemma \ref{lem:redgin} gives the following:
\begin{corollary}
The partition function of the lemniscate ensemble given by \eqref{lempart} has an exact representation in terms of solutions of the $\sigma$-form of Painlev\'e V in \eqref{pvintro}. Specifically, we have
\begin{equation}
\begin{split}
\mathcal{Z}^{(\mathrm{Lem}_{d})}_{Nd}(t) &= e^{(Ntd)^{2}}c_{N,d}(\mathcal{Z}^{(\mathrm{Gin})}_{N})^{d}\prod_{\ell=0}^{d-1}R_{\gamma_\ell}(t\sqrt{d})\\
&= e^{(Ntd)^{2}-Nt^{2}\frac{d(d-1)}{2}}c_{N,d}(\mathcal{Z}^{(\mathrm{Gin})}_{N})^{d}\prod_{\ell=0}^{d-1}\left(R_{\gamma_l}(0)\,\mathrm{exp}\left(\int_{0}^{t^{2}dN}\frac{\sigma_{\gamma_\ell/2,N}(s)}{s}\,ds\right)\right),
\end{split}
\end{equation}
where the constants are given in Lemma \ref{lem:ginred2} and \eqref{angam}.
\end{corollary}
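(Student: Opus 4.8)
The plan is to combine the two ingredients that are already in place. The first displayed equality is nothing other than Lemma \ref{lem:ginred2} (equivalently Lemma \ref{lem:introredgin} from the introduction), so there is nothing to prove for it; I would simply restate it to set up the second line.

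For the second equality the idea is to substitute the Painlev\'e V representation \eqref{TWformula} of Theorem \ref{th:ginibretocue} into each factor $R_{\gamma_\ell}(t\sqrt{d})$. First one must check that Theorem \ref{th:ginibretocue} applies, i.e. that $\gamma_\ell>-2$ for every $\ell$: from \eqref{gamell2} we have $\gamma_\ell=-2+2(\ell+1)/d\ge -2+2/d>-2$, so the theorem is applicable term by term. Taking $z=t\sqrt{d}$, so that $N|z|^2=Ndt^2$, equation \eqref{TWformula} gives
\[
R_{\gamma_\ell}(t\sqrt{d}) = R_{\gamma_\ell}(0)\,e^{Ndt^2\gamma_\ell/2}\,\exp\!\left(\int_0^{Ndt^2}\frac{\sigma^{(\mathrm{V})}_{\gamma_\ell/2,N}(s)}{s}\,ds\right),
\]
with $R_{\gamma_\ell}(0)$ given explicitly by \eqref{angam}.

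It then remains to take the product over $\ell=0,\dots,d-1$ and collect the exponential prefactors. The only computation involved is the elementary sum
\[
\sum_{\ell=0}^{d-1}\gamma_\ell = -2\sum_{\ell=0}^{d-1}\left(1-\frac{\ell+1}{d}\right) = -2\left(d-\frac{d+1}{2}\right) = -(d-1),
\]
from which $\prod_{\ell=0}^{d-1}e^{Ndt^2\gamma_\ell/2}=e^{-Ndt^2(d-1)/2}$; multiplying this by the factor $e^{(Ntd)^2}c_{N,d}(\mathcal{Z}^{(\mathrm{Gin})}_N)^d$ already present in the first line produces the stated prefactor $e^{(Ntd)^2-Nt^2 d(d-1)/2}$, and the proof is complete.

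I do not expect any genuine obstacle here: all the analytic content — the Green's-theorem reduction of the non-radial planar orthogonality to a Toeplitz determinant, the passage to a $U(N)$ integral via Heine's identity, and the identification of that integral with a $\sigma$-form Painlev\'e V transcendent — is already carried out in Theorem \ref{th:ginibretocue}, while the reduction of the lemniscate partition function to Ginibre characteristic polynomials is Lemma \ref{lem:ginred2}. The corollary is a bookkeeping synthesis of the two; the only step demanding a little care is keeping track of the several explicit constants ($c_{N,d}$, $(\mathcal{Z}^{(\mathrm{Gin})}_N)^d$, and the $R_{\gamma_\ell}(0)$ from \eqref{angam}) and summing the exponents $\gamma_\ell$ correctly.
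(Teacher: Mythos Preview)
Your proposal is correct and matches the paper's approach: the paper presents this corollary as an immediate consequence of combining Lemma~\ref{lem:ginred2} with the Painlev\'e~V representation of Theorem~\ref{th:ginibretocue}, and gives no separate proof. You have correctly filled in the bookkeeping details the paper leaves implicit, including the check that each $\gamma_\ell>-2$ so that Theorem~\ref{th:ginibretocue} applies, and the elementary sum $\sum_{\ell=0}^{d-1}\gamma_\ell=-(d-1)$ that produces the exponential prefactor.
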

In the following, recall that $t_{\mathrm{c}} = \frac{1}{\sqrt{d}}$. Combining Lemma \ref{lem:redgin} with Theorem \ref{th:ww} gives an asymptotic expansion of the lemniscate partition function in the sub-critical phase.
\begin{corollary}
In the sub-critical regime $0 < t < t_{\mathrm{c}}$ with $t$ fixed, we have
\begin{equation}
\frac{\mathcal{Z}^{(\mathrm{Lem}_{d})}_{Nd}(t)}{\mathcal{Z}^{(\mathrm{Lem}_{d})}_{Nd}(0)} = e^{(Ntd)^{2}-Nt^{2}\frac{d(d-1)}{2}}\,(1+o(1)), \qquad N \to \infty. \label{lemsubcrit}
\end{equation}
\end{corollary}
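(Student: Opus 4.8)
The plan is to reduce the statement to Theorem~\ref{th:ww} via the Ginibre identity of Lemma~\ref{lem:redgin}. First I would take the exact identity \eqref{ginred2} together with its specialisation at $t=0$, namely $\mathcal{Z}^{(\mathrm{Lem}_{d})}_{Nd}(0) = c_{N,d}(\mathcal{Z}^{(\mathrm{Gin})}_{N})^{d}\prod_{\ell=0}^{d-1}R_{\gamma_\ell}(0)$, and divide. The $N$-dependent prefactors $c_{N,d}$ and $(\mathcal{Z}^{(\mathrm{Gin})}_{N})^{d}$ cancel exactly, leaving the clean identity $\frac{\mathcal{Z}^{(\mathrm{Lem}_{d})}_{Nd}(t)}{\mathcal{Z}^{(\mathrm{Lem}_{d})}_{Nd}(0)} = e^{(Ntd)^{2}}\prod_{\ell=0}^{d-1}\frac{R_{\gamma_\ell}(t\sqrt{d})}{R_{\gamma_\ell}(0)}$, so that the whole problem becomes the asymptotics of the $d$ individual ratios $R_{\gamma_\ell}(t\sqrt d)/R_{\gamma_\ell}(0)$.

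Next I would verify that Theorem~\ref{th:ww} applies to each factor. From \eqref{gamell2} one has $\gamma_\ell = -2 + \tfrac{2(\ell+1)}{d} \in (-2,0]$, so the hypothesis $\mathrm{Re}(\gamma)>-2$ holds (the factor $\ell=d-1$ being trivial since $\gamma_{d-1}=0$), and since $0<t<t_{\mathrm{c}}=d^{-1/2}$ is fixed, the point $z = t\sqrt d$ lies in $[0,1)$, hence in the closed disc $D_{1-\delta}$ for any sufficiently small $\delta>0$. Applying the expansion \eqref{webbwongform} with $z=t\sqrt d$ (so that $|z|^{2}=t^{2}d$) and with $z=0$, the terms $\tfrac{\gamma_\ell^{2}}{8}\log N$ and the constants $(2\pi)^{\gamma_\ell/4}/G(1+\gamma_\ell/2)$ cancel in the quotient, leaving $\frac{R_{\gamma_\ell}(t\sqrt d)}{R_{\gamma_\ell}(0)} = e^{\frac{N\gamma_\ell}{2}(t^{2}d-1)+\frac{N\gamma_\ell}{2}}(1+o(1)) = e^{\frac{N\gamma_\ell}{2}t^{2}d}(1+o(1))$ as $N\to\infty$. (Equivalently, one may use the exact expression \eqref{angam} for $R_{\gamma_\ell}(0)$ together with the Barnes $G$-function asymptotics in place of the $z=0$ case; either route gives the same cancellation.)

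Finally I would multiply the $d$ estimates together. Since $d$ is independent of $N$, the product of the $(1+o(1))$ factors is again $1+o(1)$, and $\prod_{\ell=0}^{d-1}e^{\frac{N\gamma_\ell}{2}t^{2}d} = e^{\frac{Nt^{2}d}{2}\sum_{\ell=0}^{d-1}\gamma_\ell}$. The elementary computation $\sum_{\ell=0}^{d-1}\gamma_\ell = -2d + \tfrac{2}{d}\cdot\tfrac{d(d+1)}{2} = -(d-1)$ then gives $\prod_{\ell=0}^{d-1}\frac{R_{\gamma_\ell}(t\sqrt d)}{R_{\gamma_\ell}(0)} = e^{-Nt^{2}\frac{d(d-1)}{2}}(1+o(1))$, and combining with the factor $e^{(Ntd)^{2}}$ yields \eqref{lemsubcrit}. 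There is no genuine obstacle in this argument: it is a direct substitution into Theorem~\ref{th:ww}, and the only points needing care are the cancellation of the $N$-dependent prefactors and of the $\log N$ and $G$-function terms, the verification that $\gamma_\ell>-2$, and the identity $\sum_{\ell=0}^{d-1}\gamma_\ell = -(d-1)$.
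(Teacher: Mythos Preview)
Your proposal is correct and follows exactly the route the paper indicates: the corollary is stated immediately after the sentence ``Combining Lemma \ref{lem:redgin} with Theorem \ref{th:ww} gives an asymptotic expansion of the lemniscate partition function in the sub-critical phase'', and you have carried out precisely that combination, including the cancellation of prefactors, the check $\gamma_\ell>-2$, and the sum $\sum_\ell \gamma_\ell=-(d-1)$ that the paper leaves implicit.
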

In the critical phase, based on the previous computations and Theorem \ref{th:onecharge} we can make a conjecture for the expansion at $t=t_{\mathrm{c}}$.
\begin{conjecture}
\label{con:crit}
In the critical regime $t = t_{\mathrm{c}}-\frac{\tau t_{\mathrm{c}}}{\sqrt{N}}$ with $\tau \in \mathbb{R}$ fixed, we have
\begin{equation}
\frac{\mathcal{Z}^{(\mathrm{Lem}_{d})}_{Nd}(t)}{\mathcal{Z}^{(\mathrm{Lem}_{d})}_{Nd}(0)} = e^{(Ntd)^{2}-Nt^{2}\frac{d(d-1)}{2}}\,\prod_{\ell=0}^{d-1}F_{\gamma_{\ell}/2}(2\tau)\,(1+o(1)), \qquad N \to \infty, \label{lemcrit}
\end{equation}
where $F_{\gamma_\ell/2}(x)$ denotes the right-hand side of \eqref{Fk} with $k$ replaced with $\gamma_{\ell}/2$ in \eqref{p4}.
\end{conjecture}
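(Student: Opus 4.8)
The plan is to derive Conjecture~\ref{con:crit} from the reduction to the Ginibre ensemble in Lemma~\ref{lem:ginred2}, modulo one genuinely hard input: a boundary asymptotic expansion of the type \eqref{ginexpan} valid for the \emph{negative, non-integer} exponents $\gamma_{\ell}$ of \eqref{gamell2}. First I would apply \eqref{ginred2} at the critical value of $t$ and at $t=0$ and divide; the constants $c_{N,d}$ and $(\mathcal{Z}^{(\mathrm{Gin})}_{N})^{d}$ cancel, leaving
\[
\frac{\mathcal{Z}^{(\mathrm{Lem}_{d})}_{Nd}(t)}{\mathcal{Z}^{(\mathrm{Lem}_{d})}_{Nd}(0)} = e^{(Ntd)^{2}}\prod_{\ell=0}^{d-1}\frac{R_{\gamma_{\ell}}(t\sqrt{d})}{R_{\gamma_{\ell}}(0)}.
\]
Under $t = t_{\mathrm{c}}-\tau t_{\mathrm{c}}/\sqrt{N}$ one has $z := t\sqrt{d} = 1-\tau/\sqrt{N}$ (recall $t_{\mathrm{c}}\sqrt{d}=1$), so $\sqrt{N}(1-|z|^{2}) = 2\tau-\tau^{2}/\sqrt{N}\to 2\tau$. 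Hence the whole statement reduces to proving that expansion \eqref{ginexpan} of Theorem~\ref{th:onecharge}, with $F_{k}$ defined through \eqref{Fk}--\eqref{p4}, continues to hold for $k=\gamma_{\ell}/2\in(-1,0]$, uniformly for $z$ in the $N^{-1/2}$-neighbourhood of $|z|=1$, and then to collect the explicit prefactors.

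The main obstacle is exactly this non-integer boundary expansion; the rest is bookkeeping. Section~\ref{se:redgroup} already reduces $R_{\gamma}(z)$ for any $\gamma>-2$ to the Toeplitz determinant, equivalently the $U(N)$ integral, of Theorem~\ref{th:ginibretocue}, whose symbol $\lambda^{-\gamma/4}|1+\lambda|^{\gamma/2}e^{N|z|^{2}\lambda}$ on the unit circle is of Fisher--Hartwig type with a large essential singularity of parameter $N|z|^{2}\approx N$. The route I would take is a Riemann--Hilbert/Deift--Zhou steepest descent analysis of the associated orthogonal polynomials on the unit circle in this double-scaling regime, in the spirit of \cite{FW04,FW06} and of \cite{DK09}, whose Painlev\'e~IV parametrix is already adapted to this problem. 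The new ingredient needed, absent from those works, is \emph{uniformity} as the relevant saddle coalesces with the Fisher--Hartwig point $\lambda=-1$ (after the scaling $\lambda\to-\lambda$ of \eqref{idabs}): in $\sigma$-function language one must show that $v(s) = -N^{-1/2}\sigma^{(\mathrm{V})}_{\gamma/2,N}(N(1-s/\sqrt{N}))$, the object appearing in \eqref{scaledv}, converges (with enough derivatives, uniformly on compact $s$) to the solution of the $\sigma$-Painlev\'e~IV equation \eqref{p4} with $k=\gamma/2$ and the boundary condition recorded after \eqref{p4}; this is precisely the interchange of limits left unjustified there. An alternative, self-contained route avoids the Ginibre reduction and runs the steepest descent directly on the planar orthogonal polynomials for $e^{-Nd\,V^{(d)}}$ near $t=t_{\mathrm{c}}$ --- essentially the analysis of \cite{BGG18} --- and then assembles $\mathcal{Z}^{(\mathrm{Lem}_{d})}_{Nd}(t) = (Nd)!\prod_{\ell=0}^{d-1}\prod_{j=0}^{N-1}h_{jd+\ell}$ from the resulting uniform norm asymptotics; the cost is that one then has to control a product of order $N$ norm expansions, which needs error bounds uniform in the polynomial degree.

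Granting the non-integer version of \eqref{ginexpan}, the remainder is routine. Applying it for each $\ell$, and using the exact prefactor \eqref{angam} together with the Barnes~$G$ asymptotics (exactly as in the computation preceding \eqref{RgamPain}), the powers of $N$, the powers of $2\pi$, and the factors $G(1+\gamma_{\ell}/2)^{-1}$ all cancel between $R_{\gamma_{\ell}}(t\sqrt{d})$ and $R_{\gamma_{\ell}}(0)$, leaving
\[
\frac{R_{\gamma_{\ell}}(t\sqrt{d})}{R_{\gamma_{\ell}}(0)} = e^{N\frac{\gamma_{\ell}}{2}|z|^{2}}\,F_{\gamma_{\ell}/2}(2\tau)\,(1+o(1)), \qquad |z|^{2} = dt^{2}.
\]
Since $\sum_{\ell=0}^{d-1}\gamma_{\ell}/2 = \sum_{\ell=0}^{d-1}\bigl(\tfrac{\ell+1}{d}-1\bigr) = \tfrac{1-d}{2}$, the product of these exponential factors equals $e^{-Nt^{2}d(d-1)/2}$, and substituting into the displayed ratio formula yields exactly \eqref{lemcrit}. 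Continuity of $\tau\mapsto F_{\gamma_{\ell}/2}(2\tau)$, immediate from \eqref{Fk}, absorbs the $O(N^{-1/2})$ gap between $\sqrt{N}(1-|z|^{2})$ and $2\tau$ into the error term, and the finitely many factors $(1+o(1))$ (one per $\ell$, with $d$ fixed) combine into a single one.
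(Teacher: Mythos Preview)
Your proposal is essentially the same reasoning the paper itself offers, and your identification of the obstacle matches the paper's own. Note, however, that the paper does \emph{not} prove this statement: it is explicitly left as a conjecture, with the remark immediately following it that ``the proof of Conjecture~\ref{con:crit} above will require Riemann-Hilbert techniques and is postponed to a future investigation.'' The paper's heuristic is exactly what you wrote out: apply the Ginibre reduction \eqref{ginred2}, divide by the $t=0$ case, and invoke the would-be non-integer extension of Theorem~\ref{th:onecharge} (the extrapolation discussed informally in Section~\ref{se:redgroup}, culminating in the unjustified interchange of limits around \eqref{scaledv}--\eqref{PIV}). Your bookkeeping in the final paragraph is correct and matches the paper's sub-critical computation \eqref{lemsubcrit}.

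So there is no ``paper's proof'' to compare against; you have correctly reconstructed the paper's heuristic derivation and correctly named the genuine gap. Your two proposed routes to close it---steepest descent on the Toeplitz/OPUC side with a Painlev\'e~IV local parametrix as in \cite{DK09}, or direct analysis of the lemniscate planar OPs as in \cite{BGG18} followed by assembling the norm product---are both reasonable and are precisely the kind of Riemann--Hilbert work the paper defers.
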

It seems likely that the proof of Conjecture \ref{con:crit} above will require Riemann-Hilbert techniques and is postponed to a future investigation. In the super-critical phase $t > t_{\mathrm{c}}$, by Lemma \ref{lem:redgin} one requires asymptotics in the exterior region outside the circular law spectrum of the Ginibre ensemble, $|z|>1$. This corresponds to Laplace asymptotics of a random matrix linear statistic with a test function which is smooth on the support of the equilibrium measure. Indeed, as we explain in Appendix \ref{app:gff}, combining Lemma \ref{lem:redgin} and the results of \cite{RV07, AHM11} suggests that for $t > t_{\mathrm{c}}$ fixed, we have
\begin{equation}
\frac{\mathcal{Z}^{(\mathrm{Lem}_{d})}_{Nd}(t)}{(\mathcal{Z}^{(\mathrm{Lem}_{1})}_{N}(t\sqrt{d}))^{d}} = c_{N,d}\,\left(\frac{t}{t_{\mathrm{c}}}\right)^{N(d-1)}\left(1-\left(\frac{t_{\mathrm{c}}}{t}\right)^{2}\right)^{-\kappa_{d}}\,(1+o(1)), \qquad N \to \infty, \label{supercrit}
\end{equation}
where
\begin{equation}
\kappa_{d} :=\frac{1}{4}\sum_{\ell=0}^{d-1}\gamma_{\ell}^{2} = \frac{d(d-1)(2d-1)}{6d^{2}}.
\end{equation}
\begin{remark}
\label{rem:serfaty}
Asymptotic expansion of the partition function of planar models has received some attention recently. In the works \cite{BBNY16} and \cite{LS17} a partition function which includes \eqref{generalpf} as a special case has been studied (the Coulomb and Riesz gases), but their asymptotic expansion only gives terms up to order $N$, while the expansions here (albeit for specific models) include the constant term. In the physics literature, see \cite{ZW06} for interesting discussion and conjectures regarding the general form of this constant term.
\end{remark}
\begin{remark}
Note that the normalizations $\mathcal{Z}^{(\mathrm{Lem}_{d})}_{Nd}(0)$ and $(\mathcal{Z}^{(\mathrm{Lem}_{1})}_{N}(t\sqrt{d}))^{d}$ appearing in the denominators on the left-hand sides of \eqref{lemsubcrit}, \eqref{lemcrit} and \eqref{supercrit} involve radial weights and are simple to calculate explicitly.
\end{remark}

\subsection{Multiple charges near the boundary}
\label{se:boundarymerge}
\begin{proof}[Proof of Theorem \ref{th:multiplebdry}]
Proceeding as in the previous sections, we start with \eqref{akevern} under the microscopic scaling
\begin{equation}
x_{i} = z - \frac{u_{i}}{\overline{z}\sqrt{N}}, \qquad \overline{y_{i}} = \overline{z} - \frac{\overline{v_{i}}}{z\sqrt{N}} \label{edgescalecomplex}
\end{equation}
for $i=1,\ldots,k$, where $z$ is a point on the boundary, $|z|=1$. The polynomial part of kernel is
\begin{equation}
B_{N+k}(x,\overline{y}) = \frac{N}{\pi}\sum_{j=0}^{N+k-1}\frac{(x\overline{y}N)^{j}}{j!}.
\end{equation}
By Lemma 9.4 of \cite{BS09}, under the scaling \eqref{edgescalecomplex} we have, for any fixed $z \in S^{1}$, the following asymptotic formula uniformly in compact subsets of $u$ and $v$,
\begin{equation}
B_{N+k}(x,\overline{y}) \sim \frac{N}{2\pi}\,e^{N-\sqrt{N}(u+\overline{v})+u\overline{v}}\,\mathrm{erfc}\left(-\frac{u+\overline{v}}{\sqrt{2}}\right), \qquad N \to \infty,
\end{equation}
where $\mathrm{erfc}(z) := \frac{2}{\sqrt{\pi}}\int_{z}^{\infty}e^{-t^{2}}\,dt$ is the complementary error function. Regarding the other quantities in \eqref{akevern}, Stirling's formula gives $\prod_{j=0}^{k-1}h_{j+N} \sim \pi^{k}e^{-Nk}N^{-k/2}(2\pi)^{k/2}$ while the Vandermonde determinants rescale as $\Delta(\vec{x})\Delta(\vec{y}) = N^{-\frac{k(k-1)}{2}}\Delta(\vec{u})\Delta(\vec{v})$. Defining the \textit{error function kernel}
\begin{equation}
K_{\mathrm{erf}}(u,\overline{v}) := e^{-\frac{(u-\overline{v})^{2}}{2}}\,\mathrm{erfc}\left(-\frac{u+\overline{v}}{\sqrt{2}}\right)
\end{equation}
and inserting these results into \eqref{akevern}, using \eqref{mergeakevern}, we get the formula
\begin{equation}
\begin{split}
\mathbb{E}&\left(\prod_{i=1}^{k}\det(G_{N}-x_i)\det(G_{N}^{\dagger}-\overline{y_i})\right) \sim \mathrm{exp}\left(-\sqrt{N}\sum_{i=1}^{k}(u_i+\overline{v_i})+\frac{k^{2}}{2}\log(N)+\sum_{i=1}^{k}\frac{u_{i}^{2}+\overline{v_{i}}^{2}}{2}\right)\\
&\times\left(\frac{\pi}{2}\right)^{k/2}\,\frac{\det\bigg\{K_{\mathrm{erf}}(u_i,\overline{v_j})\bigg\}_{i,j=1}^{k}}{\Delta(\vec{u})\Delta(\vec{v})}, \qquad N \to \infty.
\end{split}
\end{equation}
The relation to non-intersecting paths goes via the integral representation
\begin{equation}
K_{\mathrm{erf}}(u,\overline{v}) =2\sqrt{2\pi}\,\int_{0}^{\infty}p_{1/2}(u,s)p_{1/2}(s,\overline{v})\,ds
\end{equation}
in terms of the Brownian motion transition density
\begin{equation}
p_{t}(u,v) = \frac{1}{\sqrt{2\pi t}}\,e^{-\frac{(u-v)^{2}}{2t}}.
\end{equation}
Consequently, by Lemma \ref{lem:andre} we can immediately identify
\begin{equation}
\det\bigg\{K_{\mathrm{erf}}(u_i,v_j)\bigg\}_{i,j=1}^{k} = \frac{2^{k}(2\pi)^{k/2}}{k!}\,\mathcal{Z}_{1/2}(\vec{u},\vec{v},\mathbb{R}_{+}),
\end{equation}
and \eqref{multiplecharge} follows.
\end{proof}

\section{Bulk asymptotics: proofs}
\label{se:bulk}
The purpose of this section will be to prove our results on bulk asymptotics, namely Theorems \ref{th:twocharge} and \ref{th:hcizbulk} which deal with integer charge exponents and Theorem \ref{th:nonintegerbulk} which deals with a particular case of non-integer exponents. The proofs are both based on formula \eqref{akevern} but proceed in different ways, one exploits the Harish-Chandra Itzykson Zuber integral while the other exploits the exactly solvable structure of the \textit{induced Ginibre ensemble} studied in \cite{FBKSZ}. Our proof of Theorem \ref{th:nonintegerbulk} can be interpreted as the computation of the $1$-point correlator for the induced Ginibre ensemble.
\begin{proof}[Proof of Theorems \ref{th:twocharge} and \ref{th:hcizbulk}]
We focus to begin with on the proof of Theorem \ref{th:twocharge}. Taking \eqref{akevern} as our starting point, with the Ginibre weight $w(\lambda) = e^{-N|\lambda|^{2}}$ the usual formulas apply and we have $p_{j}(\lambda) = \lambda^{j}$, $h_{j} = N^{-j-1}\pi j!$, and the asymptotics
\begin{equation}
\prod_{j=0}^{k-1}h_{j+N} \sim \pi^{k}\,(2\pi)^{k/2}\,e^{-Nk}\,N^{-k/2}, \qquad N \to \infty.
\end{equation}
The polynomial part of the correlation kernel is
\begin{equation}
B_{N+k}(x,\overline{y}) = \frac{N}{\pi}\sum_{j=0}^{N+k-1}\frac{(x\overline{y}N)^{j}}{j!}.
\end{equation}
The idea of the proof will be to merge the variables in \eqref{akevern} in two distinct groups, the first group creating a singularity with strength $k_{1}$ and the second with strength $k_{2}$ and $k=k_{1}+k_{2}$. It is convenient to first impose the microscopic scaling 
\begin{equation}
x_{j} = z+\frac{u_{j}}{\sqrt{N}}, \qquad \overline{y_{j}} = \overline{z} + \frac{\overline{v_{j}}}{\sqrt{N}}, \qquad j=1,\ldots,k
\end{equation}
and then perform the merging (still keeping $N$ finite)
\begin{equation}
(u_1,\ldots,u_{k_1}) \to (u_1,\ldots,u_1), \qquad (u_{k_{1}+1},\ldots,u_{k_1+k_2}) \to (u_2,\ldots,u_2)
\end{equation}
and
\begin{equation}
(\overline{v_1},\ldots,\overline{v_{k_1}}) \to (\overline{v_1},\ldots,\overline{v_1}), \quad (\overline{v_{k_{1}+1}},\ldots,\overline{v_{k_1+k_2}}) \to (\overline{v_{2}},\ldots,\overline{v_{2}}).
\end{equation}
Doing this, we can straightforwardly write down the exact identity
\begin{equation}
\begin{split}
&\mathbb{E}\left(|\det(G_{N}-z_1)|^{2k_1}|\det(G_{N}-z_2)|^{2k_2}\right) = e^{Nk|z|^{2}+\sqrt{N}k_{1}(u_{1}\overline{z}+\overline{u_1}z)+\sqrt{N}k_{2}(u_{2}\overline{z}+\overline{u_2}z)}\\
&\times \left(\prod_{j=0}^{k-1}h_{j+N}\right)\,N^{k}\pi^{-k}\,N^{\frac{k(k-1)}{2}}\,\\
&\times \lim_{v_{2}\to u_{2},v_{1} \to u_{1}}\lim_{\substack{(u_1,\ldots,u_{k_1}) \to (u_1,\ldots,u_1),(u_{k_1+1},\ldots,u_{k_1+k_2}) \to (u_2,\ldots,u_2)\\
(\overline{v_{1}},\ldots,\overline{v_{k_1}}) \to (\overline{v_1},\ldots,\overline{v_{1}}), (\overline{v_{k_1+1}},\ldots,\overline{v_{k_1+k_2}}) \to (\overline{v_{2}},\ldots,\overline{v_{2}})}}\frac{\det\bigg\{f_{N+k}(u_i,\overline{v_j})\bigg\}_{i,j=1}^{k}}{\Delta(\vec{u})\overline{\Delta(\vec{v})}}
\end{split} \label{limitfnkbulk}
\end{equation}
%\frac{\det\begin{pmatrix} \bigg\{\frac{\partial^{i+j-2}f_{N+k}(u_{1},\overline{v_{1}})}{\partial u_{1}^{i-1}\partial \overline{v_1}^{j-1}}\bigg\}_{i,j=1}^{k_1,k_1} & \bigg\{\frac{\partial^{i+j-2}f_{N+k}(u_{1},\overline{v_{2}})}{\partial u_{1}^{i-1}\partial \overline{v_2}^{j-1}}\bigg\}_{i,j=1}^{k_1,k_2}\\
%\bigg\{\frac{\partial^{i+j-2}f_{N+k}(u_{2},\overline{v_{1}})}{\partial u_{1}^{i-1}\partial \overline{v_1}^{j-1}}\bigg\}_{i,j=1}^{k_2,k_1} & \bigg\{\frac{\partial^{i+j-2}f_{N+k}(u_{2},\overline{v_{2}})}{\partial u_{1}^{i-1}\partial \overline{v_2}^{j-1}}\bigg\}_{i,j=1}^{k_2,k_2}\end{pmatrix}}{\det \begin{pmatrix}\bigg\{\frac{\partial^{i-1}u_{1}^{j-1}}{\partial u_1^{i-1}}\bigg\}_{i=1,j=1}^{k_1,k_1+k_2}\\
%\bigg\{\frac{\partial^{i-1}u_{2}^{j-1}}{\partial u_1^{i-1}}\bigg\}_{i=1,j=1}^{k_2,k_1+k_2}\end{pmatrix}\det \begin{pmatrix}\bigg\{\frac{\partial^{i-1}v_{1}^{j-1}}{\partial v_1^{i-1}}\bigg\}_{i=1,j=1}^{k_1,k_1+k_2}\\
%\bigg\{\frac{\partial^{i-1}v_{2}^{j-1}}{\partial v_1^{i-1}}\bigg\}_{i=1,j=1}^{k_2,k_1+k_2}\end{pmatrix}}
where we used that $\Delta(\vec{x})\overline{\Delta(\vec{y})} = N^{-\frac{k(k-1)}{2}}\Delta(\vec{u})\overline{\Delta(\vec{v})}$ and defined
\begin{equation}
f_{N+k}(u,\overline{v}) = e^{-N|z|^{2}-\sqrt{N}(u\overline{z}+\overline{v}z)}\sum_{j=0}^{N+k-1}\frac{\left(N(z+\frac{u}{\sqrt{N}})(\overline{z}+\frac{\overline{v}}{\sqrt{N}})\right)^{j}}{j!}, \qquad f(u,\overline{v}) = e^{u\overline{v}}
\end{equation}
in such a way that $\lim_{N \to \infty}f_{N+k}(u,\overline{v}) = f(u,\overline{v})$ uniformly in compact subsets of $u$ and $v$ (see \textit{e.g.} \cite[Lemma 9.2]{BS09} or \cite[Proposition 2]{BW07}). The degenerate limit in \eqref{limitfnkbulk} can be written as the determinant of a block matrix involving partial derivatives of $f_{N+k}(u,\overline{v})$ with respect to the $u$ and $v$ variables. Indeed, all derivatives of $f_{N}(u,\overline{v})$ to any finite order converge uniformly to the corresponding derivatives of $f(u,\overline{v})$ as $N \to \infty$, and thus the various limiting operations can be interchanged. Therefore, the limit $N \to \infty$ of the ratio of determinants in \eqref{limitfnkbulk} is equal to
\begin{equation}
\lim_{\substack{(u_1,\ldots,u_{k_1}) \to (u_1,\ldots,u_1),(u_{k_1+1},\ldots,u_{k_1+k_2}) \to (u_2,\ldots,u_2)\\
(\overline{v_{1}},\ldots,\overline{v_{k_1}}) \to (\overline{v_1},\ldots,\overline{v_{1}}), (\overline{v_{k_1+1}},\ldots,\overline{v_{k_1+k_2}}) \to (\overline{v_{2}},\ldots,\overline{v_{2}})}}\frac{\det\bigg\{e^{u_i\,\overline{v_j}}\bigg\}_{i,j=1}^{k}}{\Delta(\vec{u})\overline{\Delta(\vec{v})}}.
\end{equation}
More generally, the same considerations show that
\begin{equation}
\begin{split}
\mathbb{E}\left(\prod_{i=1}^{k}\det(G_{N}-x_{i})\det(G_{N}^{\dagger}-\overline{y_{i}})\right) &= \mathrm{exp}\left(Nk(|z|^{2}-1)+\sqrt{N}\sum_{i=1}^{k}(z\overline{v_{i}}+\overline{z}u_{i})+\frac{k^{2}}{2}\log(N)\right)\\
& \times (2\pi)^{k/2}\frac{\det\bigg\{e^{u_{i}\overline{v_{j}}}\bigg\}_{i,j=1}^{k}}{\Delta(\vec{u})\Delta(\vec{\overline{v}})}\left(1+o(1)\right) \label{HCIZform}
\end{split}
\end{equation}
where the asymptotics on the right-hand side are defined by taking a limit if some of the points in the vectors $\vec{u}$ or $\vec{v}$ coincide. To conclude the proof of Theorem \ref{th:hcizbulk} we recognise the right-hand side of \eqref{HCIZform} in terms of the Harish-Chandra Itzykson Zuber integral \cite{IZ80}, which is the exact formula
\begin{equation}
\frac{\det\bigg\{e^{u_{i}\overline{v}_{j}}\bigg\}_{i,j=1}^{k}}{\Delta(\vec{u})\Delta(\vec{\overline{v}})} = \frac{1}{G(1+k)}\int_{U(k)}\,e^{\mathrm{Tr}(UAU^{\dagger}\overline{B})}\,d\mu(U), \label{HCIZformula}
\end{equation}
where $d\mu(U)$ is the normalized Haar measure on the group $U(k)$ of $k \times k$ unitary matrices, and $A = \mathrm{diag}(u_1,\ldots,u_k)$, $\overline{B} = \mathrm{diag}(\overline{v_1},\ldots,\overline{v_k})$ are diagonal matrices. The rather intricate merging in \eqref{limitfnkbulk} can now be performed directly by creating degeneracies in the diagonal matrices $A$ and $\overline{B}$. For Theorem \ref{th:twocharge} we simply put
\begin{equation}
\begin{split}
A &= \mathrm{diag}(\overbrace{u_1,\ldots,u_1}^{k_1},\overbrace{u_2,\ldots,u_2}^{k_2})\\
\overline{B} &=  \mathrm{diag}(\overbrace{\overline{v_1},\ldots,\overline{v_1}}^{k_1},\overbrace{\overline{v_2},\ldots,\overline{v_2}}^{k_2})
\end{split}
\end{equation}
where eventually we will set $u_1=v_1$ and $u_2=v_2$. This motivates us to split the unitary matrix $U$ into four sub-blocks. Let us write
\begin{equation}
U = \begin{pmatrix} a_{k_1 \times k_1} & b_{k_1 \times k_2}\\ c_{k_2 \times k_1} & d_{k_2 \times k_2} \end{pmatrix}.
\end{equation}
Then a direct calculation using the unitarity of $U$ shows that
\begin{equation}
\mathrm{Tr}(UAU^{\dagger}\overline{B}) = u_{1}\overline{v_{1}}k_{1}+u_{2}\overline{v_{2}}k_{2}-[(u_{2}-u_{1})(\overline{v_{2}}-\overline{v_{1}})]\mathrm{Tr}(cc^{\dagger}), \label{explicit-trace}
\end{equation}
where we abbreviated the matrix $c := c_{k_2 \times k_1}$. To proceed, we need to know how to project the Haar measure in \eqref{HCIZformula} onto the sub-block $c$. This turns out to be a well studied problem (not least because of various physical applications, see \cite{F06}). For $k_{1} \geq k_{2}$, the eigenvalues $t_{1},\ldots,t_{k_2}$ of $cc^{\dagger}$ lie in the unit interval $[0,1]$ and have the joint probability density function \cite{F06,Col05}
\begin{equation}
P(t_1,\ldots,t_{k_2}) = \frac{1}{C^{(\mathrm{JUE}_{k_1-k_2,0})}_{k_2}}\prod_{j=1}^{k_2}t_{j}^{k_{1}-k_{2}}\,\Delta^{2}(\vec{t}) \label{jacobidistn}
\end{equation}
where $C^{(\mathrm{JUE}_{k_1-k_2,0})}_{k_2}$ is a normalization constant for the Jacobi Unitary Ensemble (JUE),
\begin{equation}
C^{(\mathrm{JUE}_{k_1-k_2,0})}_{k_2} := \int_{[0,1]^{k_2}}\prod_{j=1}^{k_2}t_{j}^{k_1-k_2}\Delta^{2}(\vec{t})\,d\vec{t}.
\end{equation}
Inserting \eqref{explicit-trace} into \eqref{HCIZformula} and using \eqref{jacobidistn}, we find
\begin{equation}
\begin{split}
\int_{U(k)}e^{\mathrm{Tr}(UAU^{\dagger}\overline{B})}\,d\mu(U) &=  e^{u_1\overline{v_1}k_{1}+u_{2}\overline{v_{2}}k_{2}}\,\mathbb{E}(e^{-s\mathrm{Tr}(cc^{\dagger})}).\\
&= \frac{ e^{u_1\overline{v_1}k_{1}+u_{2}\overline{v_{2}}k_{2}}}{C^{(\mathrm{JUE}_{k_1-k_2,0})}_{k_2}}\,\int_{[0,1]^{k_2}}\prod_{j=1}^{k_2}t_{j}^{k_1-k_2}e^{-st_{j}}\Delta^{2}(\vec{t})d\vec{t},
\end{split}
\end{equation}
where we defined the variable $s=(u_2-u_1)(\overline{v_2}-\overline{v_1})$. Finally, we express this last integral as a gap probability in the LUE. For this we specialize to $u_1 = \overline{v_1}$ and $u_2=\overline{v_2}$ so that $s = |u_{2}-u_{1}|^{2} = N|z_2-z_1|^{2}$. We have
\begin{equation}
\begin{split}
&\frac{1}{C^{(\mathrm{JUE}_{k_1-k_2,0})}_{k_2}}\int_{[0,1]^{k_2}}\prod_{j=1}^{k_2}t_{j}^{k_1-k_2}e^{-st_{j}}\,\Delta^{2}(\vec{t})\,d\vec{t}\\
&=\frac{C_{k_2}^{(\mathrm{LUE}_{k_1-k_2})}}{C^{(\mathrm{JUE}_{k_1-k_2,0})}_{k_2}}s^{-k_1k_2}\mathbb{P}(\lambda_{\mathrm{max}}^{(\mathrm{LUE}_{k_2,k_1})} < s).
\end{split}
\end{equation}
The normalization constants here are explicitly known, see equations \eqref{LUEnormconst}, \eqref{JUEnormconst} and \cite[Chapter 4]{Forrester_loggas}. We have
\begin{equation}
\begin{split}
\frac{C_{k_2}^{(\mathrm{LUE}_{k_1-k_2})}}{C^{(\mathrm{JUE}_{k_1-k_2,0})}_{k_2}} &= \prod_{j=0}^{k_2-1}\frac{\Gamma(1+k_1+j)}{\Gamma(1+j)}\\
&= \frac{G(k_1+k_2+1)}{G(k_1+1)G(k_2+1)}, \label{rationorms}
\end{split}
\end{equation}
the numerator of which exactly cancels the denominator in \eqref{HCIZformula}. Therefore, we get the following asymptotic formula, uniformly in compact subsets of $u_1$ and $u_2$,
\begin{equation}
\begin{split}
\mathbb{E}&\left(|\det(G_{N}-z_1)|^{2k_1}|\det(G_{N}-z_2)|^{2k_2}\right) = e^{k_{1}N(|z_1|^{2}-1)+k_{2}N(|z_2|^{2}-1)+\frac{k_1^{2}}{2}\log(N)+\frac{k_{2}^{2}}{2}\log(N)}\\
&\times e^{-2k_1k_2\log|z_2-z_1|}\,\frac{(2\pi)^{(k_1+k_2)/2}}{G(1+k_1)G(1+k_2)}F_{k_1,k_2}(N|z_2-z_1|^{2})(1+o(1)), \qquad N \to \infty,
\end{split}
\end{equation}
where $F_{k_1,k_2}(s)$ is the distribution function of the largest eigenvalue in a $k_2 \times k_2$ Laguerre ensemble with parameter $k_1-k_2$ (equivalently, $k_1$ degrees of freedom):
\begin{equation}
F_{k_1,k_2}(s) = \mathbb{P}\left(\lambda_{\mathrm{max}}^{(\mathrm{LUE}_{k_2,k_1})} < s\right).
\end{equation}
This concludes the proof of Theorem \ref{th:twocharge}.
\end{proof}

\begin{proof}[Proof of Theorem \ref{th:nonintegerbulk}]
With the assumptions of Theorem \ref{th:nonintegerbulk}, the $2$-point correlator on the left-hand side of \eqref{corr2} simplifies to
\begin{equation}
\begin{split}
&\mathbb{E}\left(|\det(G_{N})|^{2\gamma_{1}}\,|\det(G_{N}-z_2)|^{2k_2}\right)\\
&= \frac{1}{\mathcal{Z}^{(\mathrm{Gin})}_{N}}\int_{\mathbb{C}^{N}}\prod_{j=1}^{N}|\lambda_{j}|^{2\gamma_{1}}\,|\lambda_{j}-z_2|^{2k_2}\,e^{-N|\lambda_j|^{2}}\,|\Delta(\vec{\lambda})|^{2}\,d^{2}\vec{\lambda} \label{corr2center}
\end{split}
\end{equation}
where $z_{2} = \frac{u_{2}}{\sqrt{N}}$ and $u_2$ is fixed in the microscopic scaling. By the radial symmetry, we can assume without loss of generality that $u_2>0$ throughout. The point of this proof, in contrast to the proof of Theorem \ref{th:twocharge}, is that $\gamma_{1}$ is \textit{not constrained to be a positive integer}. We assume that $\gamma_{1} \geq k_2>0$ and $k_2 \in \mathbb{N}$.

We now proceed by viewing \eqref{corr2center} as a $1$-point correlator for the reference weight $w(\lambda) = |\lambda|^{2\gamma_{1}}e^{-N|\lambda|^{2}}$ and apply the techniques of the previous sections. Adapting the proof of Proposition \ref{prop:lue}, we again have a rotationally invariant weight, implying $p_{j}(\lambda) = \lambda^{j}$, but now the norms in \eqref{norms} are
\begin{equation}
h_{j} = \int_{\mathbb{C}}|\lambda|^{2j+2\gamma_{1}}e^{-N|\lambda|^{2}}\,d^{2}\lambda = \pi N^{-j-\gamma_{1}-1}\Gamma(j+\gamma_{1}+1). \label{indnorms}
\end{equation}
Under the microscopic scaling, the points in \eqref{akevern} take the form
\begin{equation}
x_{i} = \frac{u_{i}}{\sqrt{N}}, \qquad y_{i} = \frac{v_{i}}{\sqrt{N}}, \qquad i=1,\ldots,k \label{microinduced}
\end{equation}
and the product of Vandermonde determinants in the denominator of \eqref{akevern} rescale to $\Delta(\vec{x})\Delta(\vec{y}) = N^{-\frac{k_{2}(k_{2}-1)}{2}}\Delta(\vec{u})\Delta(\vec{v})$. From the explicit formulae for $p_{j}(\lambda)$ and $h_{j}$, we get the polynomial part of the kernel
\begin{equation}
B_{N+k}(x,y) = \frac{N^{\gamma_{1}+1}}{\pi}\sum_{j=0}^{N+k-1}\frac{(xyN)^{j}}{\Gamma(j+\gamma_{1}+1)}. \label{inducedkernel}
\end{equation}
Defining the functions
\begin{equation}
f_{N}(u,v) := v^{\gamma_{1}}\sum_{j=0}^{N-1}\frac{(uv)^{j}}{\Gamma(j+\gamma_{1}+1)}, \qquad f(u,v):=v^{\gamma_{1}}\sum_{j=0}^{\infty}\frac{(uv)^{j}}{\Gamma(j+\gamma_{1}+1)}
\end{equation}
we first observe that \eqref{akevern} and \eqref{mergeakevern} give the \textit{exact identity}
\begin{equation}
\begin{split}
&\mathbb{E}\left(|\det(G_{N})|^{2\gamma_{1}}|\det(G_{N}-z_2)|^{2k_{2}}\right) = \frac{\mathcal{Z}^{(\mathrm{Ind}_{\gamma_{1}})}_{N}}{\mathcal{Z}^{(\mathrm{Gin})}_{N}}\,N^{k_{2}\gamma_{1}}\pi^{-k_{2}}u_{2}^{-\gamma_{1} k_{2}}N^{k_{2}(k_{2}+1)/2}\\
&\times\prod_{j=0}^{k_{2}-1}h_{j+N}\, \frac{1}{\prod_{j=0}^{k_{2}-1}(j!)^{2}}\,\lim_{u \to u_2, v \to u_2}\det\bigg\{\frac{\partial^{i+j-2}f_{N+k_{2}}(u,v)}{\partial u^{i-1}\partial v^{j-1}}\bigg\}_{i,j=1}^{k_{2}}, \label{inducedpartials}
\end{split}
\end{equation}
where the partition function of the induced Ginibre ensemble is
\begin{equation}
\mathcal{Z}^{(\mathrm{Ind}_{\gamma_{1}})}_{N} := \int_{\mathbb{C}^{N}}\prod_{j=1}^{N}|\lambda_{j}|^{2\gamma_{1}}e^{-N|\lambda_{j}|^{2}}\,|\Delta(\vec{\lambda})|^{2}\,d^{2}\vec{\lambda} = N!\prod_{j=0}^{N-1}(\pi N^{-j-\gamma_{1}-1}\Gamma(j+\gamma_{1}+1)). \label{indgpart}
\end{equation}
Now the large-$N$ asymptotics of \eqref{inducedpartials} can be derived, exploiting the fact that the convergence $f_{N}(u,v) \to f(u,v)$ as $N \to \infty$ is uniform. Furthermore, all mixed partial derivatives of $f_{N}(u,v)$ converge uniformly to those of $f(u,v)$ and this allows the interchange of the various limiting operations. Consequently, it suffices to compute the determinant in \eqref{inducedpartials} with $f_{N}(u,v)$  replaced with the limiting kernel $f(u,v)$, for which we have the integral representation
\begin{equation}
f(u,v) = \frac{1}{\Gamma(\gamma_{1})}\,\int_{0}^{v}e^{ut}(v-t)^{\gamma_{1}-1}\,dt.
\end{equation}
The needed partial derivatives of the kernel can be computed using the Leibniz formula for differentiating under the integral, using that the integrand vanishes at the upper end point. This gives
\begin{equation}
\begin{split}
\frac{\partial^{i+j-2}f(u,v)}{\partial u^{i-1}\partial v^{j-1}} &= \frac{1}{\Gamma(\gamma_{1}-(j-1))}\int_{0}^{v}t^{i-1}e^{ut}(v-t)^{\gamma_{1}-1-(j-1)}\,dt\\
&= \frac{v^{\gamma_{1}+i-j}(-1)^{i-1}e^{uv}}{\Gamma(\gamma_{1}-(j-1))}\int_{0}^{1}t^{\gamma_{1}-1}e^{-uvt}(t-1)^{i-1}t^{-(j-1)}\,dt.
\end{split}
\end{equation}
By Lemma \ref{lem:andre},
\begin{equation}
\begin{split}
&\det\bigg\{\frac{\partial^{i+j-2}f(u,v)}{\partial u^{i-1}\partial v^{j-1}} \bigg\}_{i,j=1}^{k_{2}}\\
&= \frac{v^{\gamma_{1} k_{2}}e^{uvk_{2}}}{k_{2}!}\prod_{j=0}^{k_{2}-1}\frac{(-1)^{j}}{\Gamma(\gamma_{1}-j)}\int_{[0,1]^{k_{2}}}\prod_{j=1}^{k_{2}}t_{j}^{\gamma_{1}-1}e^{-uvt_{j}}\det\bigg\{(t_{i}-1)^{j-1}\bigg\}_{i,j=1}^{k_2}\det\bigg\{t_{i}^{-(j-1)}\bigg\}_{i,j=1}^{k_2}\,d\vec{t}\\
&=\frac{v^{\gamma_{1} k_{2}}\,(uv)^{-\gamma_{1} k_{2}}\,e^{uvk_{2}}}{k_{2}!}\prod_{j=0}^{k_{2}-1}\frac{1}{\Gamma(\gamma_{1}-j)}\int_{[0,uv]^{k_2}}\prod_{j=1}^{k_2}t_{j}^{\gamma_{1}-k_{2}}e^{-t_{j}}\,\Delta^{2}(\vec{t})\,d\vec{t}, \label{chargecollideint}
\end{split}
\end{equation}
where we now spot the joint distribution of eigenvalues of the Laguerre ensemble appearing in \eqref{chargecollideint}. We have therefore arrived at the determinant identity
\begin{equation}
\det\bigg\{\frac{\partial^{i+j-2}f(u,v)}{\partial u^{i-1}\partial v^{j-1}} \bigg\}_{i,j=1}^{k_{2}} = \frac{v^{\gamma_{1} k_{2}}\,(uv)^{-\gamma_{1} k_{2}}\,G(1+k_{2})\,e^{uvk_{2}}}{k_{2}!}\,\mathbb{P}(\lambda^{(\mathrm{LUE}_{k_2,\gamma_1})}_{\mathrm{max}} < uv). \label{chargecollidelaguerre}
\end{equation}
It remains just to deal with the constants in \eqref{inducedpartials}. Writing the constant \eqref{indgpart} in terms of the Barnes G-function and applying their asymptotic expansion (see \cite[Eq. 5.17.5]{NIST:DLMF}) gives,
\begin{equation}
\frac{\mathcal{Z}^{(\mathrm{Ind}_{\gamma_{1}})}_{N}}{\mathcal{Z}^{(\mathrm{Gin})}_{N}} = \frac{1}{G(1+\gamma_{1})}\,e^{-\gamma_{1} N+\frac{\gamma_{1}^{2}}{2}\log(N)}(2\pi)^{\gamma_{1}/2}\,(1+o(1)), \qquad N \to \infty, \label{barnesasy}
\end{equation}
while Stirling's formula gives
\begin{equation}
\prod_{j=0}^{k_{2}-1}h_{j+N} \sim \pi^{k_{2}}\,(2\pi)^{k_{2}/2}\,e^{-Nk_{2}}N^{-k_{2}/2}, \qquad N \to \infty. \label{indstirling}
\end{equation}
Now inserting \eqref{indstirling}, \eqref{barnesasy} and \eqref{chargecollidelaguerre} into \eqref{inducedpartials} gives the formula
\begin{equation}
\begin{split}
&\mathbb{E}\left(|\det(G_{N})|^{2\gamma_{1}}|\det(G_{N}-z_2)|^{2k_{2}}\right) = e^{u_{2}^{2}k_{2}-(\gamma_{1}+k_{2})N+\frac{\gamma_{1}^{2}}{2}\log(N)+\frac{k_{2}^{2}}{2}\log(N)}\\
&\times e^{-2k_{2}\gamma_{1}\,\log(u_2/\sqrt{N})}\frac{(2\pi)^{\frac{\gamma_{1}+k_{2}}{2}}}{G(1+k_2)G(1+\gamma_{1})}\,\mathbb{P}(\lambda^{(\mathrm{LUE}_{k_2,\gamma_1})}_{\mathrm{max}} < u_{2}^{2})\,\left(1+o(1)\right), \qquad N \to \infty,
\label{corr2proved}
\end{split}
\end{equation}
which is precisely of the form \eqref{corr2}, with $k_1 = \gamma_{1}$ and $z=z_{1}=0$.
\end{proof}

\section{Truncated CUE: finite $N$ and boundary asymptotics}
\label{se:tcue}
The purpose of this section is to prove analogous results for the $1$-point function as obtained in Section \ref{se:ginibre}. Therefore, we consider the moments
\begin{equation}
R_{2k}(z) = \mathbb{E}\left(|\det(T-z)|^{2k}\right) \label{tcuemoments}
\end{equation}
where $T$ is an $N \times N$ truncation of a Haar distributed unitary matrix of size $M \times M$. We begin by showing that \eqref{tcuemoments} is expressible in terms of solutions of Painlev\'e VI via a duality with the Jacobi Unitary Ensemble. Then we will study the boundary asymptotics and show that they are expressed in terms of Painlev\'e V. This will prove Theorems \ref{th:tcuefiniten} and \ref{th:tcueonecharge}. As in Section \ref{se:ginibre}, we also investigate the continuation of Theorem \ref{th:tcuefiniten} off the integers, that is we will prove that setting $k=\frac{\gamma}{2}$ in \eqref{JUEprobab} and \eqref{pvirep} gives the correct interpretation of \eqref{tcuemoments}. Along the way, we will study planar orthogonal polynomials associated to the weight
\begin{equation}
w(\lambda) = |\lambda-z|^{\gamma}(1-|\lambda|^{2})^{M-N-1}, \qquad |\lambda| \leq 1.
\end{equation}
\subsection{Duality with the JUE}
We begin by deriving the analogue of identity \eqref{LUEduality}. Recall the construction of the Jacobi Unitary Ensemble (JUE): take two independent $k$-dimensional LUE matrices (Wishart matrices) $W_{1}$ with parameter $\alpha$ and $W_{2}$ with parameter $\beta$ and form the ratio $J = W_{1}(W_{1}+W_{2})^{-1}$. The eigenvalues $t_{1},\ldots,t_{k}$ of $J$ lie in the unit interval $[0,1]$ and have the joint probability density function
\begin{equation}
P(t_1,\ldots,t_k) = \frac{1}{C^{(\mathrm{JUE}_{\alpha,\beta})}_{k}}\prod_{j=1}^{k}t_{j}^{\alpha}(1-t_{j})^{\beta}\,\Delta^{2}(\vec{t}),
\end{equation}
where the normalization constant (see \textit{e.g.} \cite[Chapter 4]{Forrester_loggas}) is given explicitly by
\begin{equation}
C^{(\mathrm{JUE}_{\alpha,\beta})}_{k} := \int_{[0,1]^{k}}\prod_{j=1}^{k}t_{j}^{\alpha}(1-t_{j})^{\beta}\,\Delta^{2}(\vec{t})\,d\vec{t}= \prod_{j=0}^{k-1}\frac{\Gamma(\alpha+j+1)\Gamma(\beta+j+1)\Gamma(j+2)}{\Gamma(\alpha+\beta+k+j+1)}. \label{JUEnormconst}
\end{equation}
\begin{proposition}
\label{prop:JUEduality}
Consider the expectation with respect to the truncated unitary ensemble measure in \eqref{zspdf} with $\kappa = M-N$. Let $x$ and $y$ be any complex numbers and let $k \in \mathbb{N}$. We have
\begin{equation}
\begin{split}
&\mathbb{E}\left(\det(T-x)^{k}\det(T^{\dagger}-y)^{k}\right) \\
&= \frac{1}{C^{(\mathrm{JUE}_{\kappa+N,0})}_{k}}\int_{[0,1]^{k}}\prod_{j=1}^{k}t_{j}^{\kappa}(1+(xy-1)t_{j})^{N}\,\Delta^{2}(\vec{t})\,d\vec{t}. \label{jueduality}
\end{split}
\end{equation}
\end{proposition}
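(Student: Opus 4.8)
The plan is to mimic the proof of Proposition~\ref{prop:lue} (the Ginibre/LUE duality), now with the truncated-CUE weight in place of the Gaussian one. From \eqref{zspdf} the relevant planar weight is $w(\lambda) = (1-|\lambda|^2)^{\kappa-1}$ on the unit disc, where $\kappa = M-N$; since this is rotationally invariant, the monic planar orthogonal polynomials are the monomials $p_j(\lambda) = \lambda^j$ and the norms in \eqref{orthog} are Beta integrals,
\[
h_j = \int_{|\lambda|\le 1}|\lambda|^{2j}(1-|\lambda|^2)^{\kappa-1}\,d^2\lambda = \pi B(j+1,\kappa) = \frac{\pi\,\Gamma(j+1)\Gamma(\kappa)}{\Gamma(j+\kappa+1)}.
\]
Starting from the Akemann--Vernizzi identity \eqref{akevern}, the whole computation then hinges on a convenient integral representation of the polynomial kernel $B_{N+k}(x,\overline y) = \sum_{j=0}^{N+k-1}(x\overline y)^j/h_j$, playing the role that \eqref{ginkern} plays in the Ginibre case.

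The key step is the identity, valid for $x>0$ and then extended by polynomiality in $x,y$,
\[
B_{N+k}(x,y) = \frac{\Gamma(\kappa+N+k+1)}{\pi\,\Gamma(\kappa)\Gamma(N+k)}\,x^{N+k}\int_0^\infty\frac{(1+yv)^{N+k-1}\,v^{\kappa}}{(x+v)^{N+k+\kappa+1}}\,dv,
\]
which one checks either by expanding $(1+yv)^{N+k-1}$ and integrating term by term against Beta integrals, or by inserting $\Gamma(j+\kappa+1) = \int_0^\infty s^{j+\kappa}e^{-s}\,ds$ into $1/h_j$ and then using the elementary truncated-exponential integral already exploited in \eqref{ginkern}. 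What matters is that this representation is structurally the exact analogue of \eqref{ginkern}: the variable $y$ enters only through the factor $(1+yv)^{N+k-1}$, and $x$ enters only through the prefactor $x^{N+k}$ and through $(x+v)^{-(N+k+\kappa+1)}$.

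Substituting this into \eqref{akevern} and factoring the diagonal matrix $\mathrm{diag}(x_i^{N+k})$ out of the $k\times k$ determinant (its determinant contributes $\prod_i x_i^{N+k}$, which under the merging $x_i\to x$ tends to $x^{k(N+k)}$), the merging $x_i\to x$, $\overline{y_j}\to y$ via \eqref{mergeakevern} reduces the problem to evaluating $\det\{\partial_x^{i-1}\partial_y^{j-1}\tilde A(x,y)\}_{i,j=1}^{k}$ with $\tilde A(x,y)=\int_0^\infty (1+yv)^{N+k-1}v^{\kappa}(x+v)^{-(N+k+\kappa+1)}\,dv$. Differentiating under the integral sign, $\partial_x^{i-1}\partial_y^{j-1}$ produces, up to explicit $i$- and $j$-dependent constants, the integrand $v^{\kappa+j-1}(1+yv)^{N+k-j}(x+v)^{-(N+k+\kappa+i)}$; pulling these constants out of the determinant and applying the Andr\'eief identity (Lemma~\ref{lem:andre}) with $f_i(v) = (x+v)^{-(N+k+\kappa+i)}$ and $g_j(v) = v^{\kappa+j-1}(1+yv)^{N+k-j}$, the two determinants under the integral become Vandermonde determinants in the variables $1/(x+v_l)$ and $v_l/(1+yv_l)$ respectively, which after extracting common row factors yields, up to an explicit constant, $\int_{[0,\infty)^k}\prod_{l=1}^{k}v_l^{\kappa}(1+yv_l)^{N}(x+v_l)^{-(N+2k+\kappa)}\,\Delta^2(\vec v)\,d\vec v$. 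The substitution $v_l = x w_l/(1-w_l)$, which maps $[0,\infty)$ onto $[0,1]$, then turns this into $x^{-k(N+k)}\int_{[0,1]^k}\prod_l w_l^{\kappa}(1+(xy-1)w_l)^{N}\,\Delta^2(\vec w)\,d\vec w$; the factor $x^{-k(N+k)}$ exactly cancels the $x^{k(N+k)}$ coming from the prefactor, producing the Jacobi-type integral in \eqref{jueduality}.

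The only remaining task is to collect all constants --- $\prod_{j=0}^{k-1}h_{j+N}$ from \eqref{akevern}, the $k$-th power of $\Gamma(\kappa+N+k+1)/(\pi\Gamma(\kappa)\Gamma(N+k))$, the factor $\prod_{j=0}^{k-1}1/(j!)^2$ from \eqref{mergeakevern}, the $i$- and $j$-dependent constants and signs from the differentiation and Andr\'eief steps, and the Jacobian of the substitution --- and to verify that they telescope to $1/C^{(\mathrm{JUE}_{\kappa+N,0})}_{k}$, using the explicit product formula \eqref{JUEnormconst}. I expect \emph{this constant bookkeeping}, together with pinning down the integral representation above in exactly the right normalisation, to be the only genuine obstacle; conceptually the argument is line-for-line parallel to the proof of Proposition~\ref{prop:lue}. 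A useful sanity check is $x=y=0$: the left-hand side of \eqref{jueduality} is then $C_{M,N,k}$ from \eqref{r2kzero}, while the right-hand side is $C^{(\mathrm{JUE}_{\kappa,N})}_{k}/C^{(\mathrm{JUE}_{\kappa+N,0})}_{k}$, and the two coincide by \eqref{JUEnormconst}.
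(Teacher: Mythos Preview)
Your proposal is correct and follows essentially the same route as the paper: the paper's integral representation \eqref{intreptcue} for $B_{N+k}(x,y)$ is related to yours by the substitution $t=1/v$, and its two-step change of variables ($t_j\to t_j/x$ followed by ``further standard manipulations'') amounts exactly to your single substitution $v_l = xw_l/(1-w_l)$. The only difference is that the paper carries out the constant bookkeeping explicitly (identifying the prefactor with $1/C^{(\mathrm{JUE}_{\kappa+N,0})}_{k}$), whereas you leave this as a sanity check at $x=y=0$; that check is indeed sufficient to pin down the normalization once the integral is established.
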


\begin{proof}[Proof of Proposition \ref{prop:JUEduality}, Theorem \ref{th:tcuefiniten} and Theorem \ref{th:tcueonecharge}]
Compared with the Ginibre case in Proposition \ref{prop:lue}, the main difference here is that we now work with the weight $w(\lambda) = (1-|\lambda|^{2})^{\kappa-1}$ supported on the disc $\lambda \in \mathbb{D}$. As before the corresponding orthogonal polynomials are $p_{j}(\lambda) = \lambda^{j}$ but now the normalizations are
\begin{equation}
h_{j} := \int_{\mathbb{D}}|\lambda|^{2j}\,(1-|\lambda|^{2})^{\kappa-1}\,d^{2}\lambda = \pi\frac{j!\Gamma(\kappa)}{\Gamma(j+\kappa+1)}. \label{hjtcue}
\end{equation}
The polynomial part of the kernel follows as
\begin{equation}
\begin{split}
&B_{N+k}(x,y) = \frac{1}{\pi\Gamma(\kappa)}\sum_{j=0}^{N+k-1}\frac{\Gamma(j+\kappa+1)(xy)^{j}}{\pi j!}\\
&= x^{N+k}\frac{\Gamma(N+k+\kappa+1)}{\pi\,\Gamma(\kappa)\Gamma(N+k)}\,\int_{0}^{\infty}(tx+1)^{-(N+k)-\kappa-1}(t+y)^{N+k-1}\,dt,\\ \label{intreptcue}
\end{split}
\end{equation}
where for now (and without loss of generality) we will assume that $x$ and $y$ are positive reals. The integral representation \eqref{intreptcue} follows \textit{e.g.} from Euler's integral representation for the hypergeometric function \cite[15.6.1]{NIST:DLMF}. Its advantage here is that we have separated the $x$ and $y$ variables in the integrand. Indeed, considering the function 
\begin{equation}
A^{(\kappa)}_{N+k}(x,y) := \frac{\Gamma(N+k+\kappa+1)}{\Gamma(N+k)}\int_{0}^{\infty}(tx+1)^{-N-k-\kappa-1}(t+y)^{N+k-1}\,dt,
\end{equation}
we can directly calculate the derivatives required on the right-hand side of \eqref{mergeakevern}:
\begin{equation}
\begin{split}
\frac{\partial^{i+j-2}A^{(\kappa)}_{N+k}(x,y)}{\partial x^{i-1} \partial y^{j-1}} &= \frac{\Gamma(N+k+\kappa+i)(-1)^{i-1}}{\Gamma(N+k-(j-1))}\\
&\times \int_{0}^{\infty}t^{i-1}(tx+1)^{-N-k-\kappa-1-(i-1)}(t+y)^{N+k-1-(j-1)}\,dt.
\end{split}
\end{equation}
By Lemma \ref{lem:andre} we get
\begin{equation}
\begin{split}
&\det\bigg\{\frac{\partial^{i+j-2}A^{(\kappa)}_{N+k}(x,y)}{\partial x^{i-1}\partial y^{j-1}}\bigg\}_{i,j=1}^{k} \\
&= e_{N,k,\kappa}\int_{[0,\infty)^{k}}\det\bigg\{t_{i}^{j-1}(t_{i}x+1)^{-N-k-j-\kappa}\bigg\}_{i,j=1}^{k}\,\det\bigg\{(t_{i}+y)^{N+k-j}\bigg\}_{i,j=1}^{k}\,d\vec{t}\\
&= (-1)^{\frac{k(k-1)}{2}}e_{N,k,\kappa}\int_{[0,\infty)^{k}}\,\prod_{j=1}^{k}(t_{j}x+1)^{-N-\kappa-2k}(t_{j}+y)^{N}\,\Delta^{2}(\vec{t})\,d\vec{t}, \label{tcuedets}\\
\end{split}
\end{equation}
where
\begin{equation}
e_{N,k,\kappa} := \frac{(-1)^{\frac{k(k-1)}{2}}}{k!}\,\prod_{j=1}^{k}\left(\frac{\Gamma(N+k+\kappa+j)}{\Gamma(N+j)}\right).
\end{equation}
To obtain the third line in \eqref{tcuedets} we made use of the determinant identities
\begin{equation}
\begin{split}
\det\bigg\{(t_{i}+y)^{k-j}\bigg\}_{i,j=1}^{k} &= (-1)^{\frac{k(k-1)}{2}}\Delta(\vec{t}),\\
\det\bigg\{(t_{i}^{j-1}(t_{i}x+1)^{-k-j}\bigg\}_{i,j=1}^{k} &= \prod_{j=1}^{k}(t_{j}x+1)^{-2k}\Delta(\vec{t}).
\end{split}
\end{equation}
After sending $t_{j} \to t_{j}/x$ for each $j=1,\ldots,k$ in \eqref{tcuedets} and some further standard manipulations, we arrive at the identity
\begin{equation}
\begin{split}
&\det\bigg\{\frac{\partial^{i+j-2}A^{(\kappa)}_{N+k}(x,y)}{\partial x^{i-1}\partial y^{j-1}}\bigg\}_{i,j=1}^{k} \\
&=(-1)^{\frac{k(k-1)}{2}} e_{N,k,\kappa}\,x^{-k(N+k)}\int_{[0,1]^{k}}\prod_{j=1}^{k}t_{j}^{\kappa}(1+(xy-1)t_{j})^{N}\,\Delta^{2}(\vec{t})\,d\vec{t}. \label{Adetcue}
\end{split}
\end{equation}
Inserting these results into formula \eqref{akevern} gives
\begin{equation}
\begin{split}
&\mathbb{E}\left(\det(T-x)^{k}\det(T^{\dagger}-y)^{k}\right) \\
&= \frac{\prod_{j=0}^{k-1}h_{j+N}}{\prod_{j=0}^{k-1}(j!)^{2}}x^{k(N+k)}\frac{1}{(\pi\Gamma(\kappa))^{k}}\det\bigg\{\frac{\partial^{i+j-2}A^{(\kappa)}_{N+k}(x,y)}{\partial x^{i-1}\partial y^{j-1}}\bigg\}_{i,j=1}^{k}\\
&= \frac{1}{C^{(\mathrm{JUE}_{\kappa+N,0})}_{k}}\int_{[0,1]^{k}}\prod_{j=1}^{k}t_{j}^{\kappa}(1+(xy-1)t_{j})^{N}\,\Delta^{2}(\vec{t})\,d\vec{t}. \label{lastlinetcue}
\end{split}
\end{equation}
To obtain the last line above we inserted \eqref{Adetcue} and identified the pre-factors as
\begin{equation}
\frac{\prod_{j=0}^{k-1}h_{j+N}}{\prod_{j=0}^{k-1}(j!)^{2}}\frac{(-1)^{\frac{k(k-1)}{2}}}{(\pi\Gamma(\kappa))^{k}}\,e_{N,k,\kappa} = \frac{1}{C^{(\mathrm{JUE}_{\kappa+N,0})}_{k}}.
\end{equation}
Initially we assumed $x>0$ and $y>0$, but since \eqref{lastlinetcue} is an identity between polynomials in $x$ and $y$, it holds on the whole complex plane. This proves Proposition \ref{prop:JUEduality}. Now Theorem \ref{th:tcuefiniten} follows by setting $x = \overline{y} = z$ and the change of variables $t_{j} \to t_{j}/(1-|z|^{2})$ (assuming $|z|<1$) which expresses the right-hand side of \eqref{jueduality} as the distribution function of the largest eigenvalue in the Jacobi Unitary Ensemble. The explicit form of the constant in \eqref{r2kzero} follows by setting $x=y=0$ in \eqref{jueduality}. For the boundary limit, note that if $|z|=1-\frac{u}{N}$ and $\kappa = M-N$ is fixed, we have the limit
\begin{equation}
\begin{split}
&\lim_{N \to \infty}\int_{[0,1]^{k}}\prod_{j=1}^{k}t_{j}^{\kappa}\left(1+\left(-\frac{2u}{N}+\frac{u^{2}}{N^{2}}\right)t_{j}\right)^{N}\,\Delta^{2}(\vec{t})\,d\vec{t}\\
&=\int_{[0,1]^{k}}\prod_{j=1}^{k}t_{j}^{\kappa}e^{-2ut_{j}}\,\Delta^{2}(\vec{t})\,d\vec{t}\\
&=\frac{G(2+k)G(\kappa+k+1)}{G(\kappa+1)}(2u)^{-k^{2}-k\kappa}F_{k+\kappa,k}(2u)
\end{split}
\end{equation}
where $F_{k+\kappa,k}(x)$ is the distribution function of the largest eigenvalue in the Laguerre Unitary Ensemble with parameter $\kappa$. To complete the proof of Theorem \ref{th:tcueonecharge} it is enough to note that the constant in Proposition \ref{prop:JUEduality} has the asymptotics
\begin{equation}
C^{(\mathrm{JUE}_{\kappa+N,0})}_{k} = G(1+k)G(2+k)N^{-k^{2}}\left(1+o(1)\right), \qquad N \to \infty.
\end{equation}
\end{proof}
\begin{remark}
In the limiting case $\kappa \to 0$ and $x=y=1$, the right-hand side of \eqref{jueduality} becomes completely explicit and re-derives the exact formula for moments of the characteristic polynomial of a Haar distributed unitary matrix due to Keating and Snaith \cite{KS00}.
\end{remark} 

\subsection{Planar orthogonal polynomials and reduction to integrals over $U(N)$}
We now consider the quantity \eqref{tcuemoments} with $k=\frac{\gamma}{2}$ and $\gamma$ a general real parameter up to the integrability constraint $\gamma>-2$. As with the Ginibre case, this average over non-Hermitian matrices can be reduced to an integral over the unitary group.
\begin{theorem}
\label{prop:tPVI}
Let $\gamma > -2$ and let $d\mu(U)$ denote the normalized Haar measure on the group of $N \times N$ unitary matrices. Then we have the identity
\begin{equation}
\begin{split}
R_{\gamma}(z) &= R_{\gamma}(0)\int_{U(N)}\det(U)^{-\frac{\gamma}{4}}|\det(I+U)|^{\frac{\gamma}{2}}\det(I+|z|^{2}U)^{\kappa+\frac{\gamma}{2}}\,d\mu(U)\\
&= \frac{R_{\gamma}(0)}{N!(2\pi)^{N}}\int_{[-\pi,\pi]^{N}}\prod_{j=1}^{N}e^{-\frac{i\gamma \theta_j}{4}}|1+e^{i\theta_j}|^{\frac{\gamma}{2}}(1+|z|^{2}e^{i\theta_j})^{\kappa+\frac{\gamma}{2}}|\Delta(e^{i\theta})|^{2}\,d\vec{\theta} \label{tcuetocue}
\end{split}
\end{equation}
where the constant is
\begin{equation}
R_{\gamma}(0) = \prod_{j=0}^{N-1}\frac{\Gamma(\frac{\gamma}{2}+j+1)\Gamma(j+\kappa+1)}{\Gamma(j+1)\Gamma(\frac{\gamma}{2}+j+\kappa+1)}. \label{tcueprefactor}
\end{equation}
Furthermore, we have the exact representation in terms of the Jimbo-Miwa-Okamoto $\sigma$-form of Painlev\'e VI:
\begin{equation}
\label{tcuepvirep}
R_{\gamma}(z) = R_{\gamma}(1)\,\mathrm{exp}\left(\int_{|z|^{2}}^{1}\frac{h(t)-e_{2}'t+\frac{1}{2}e_{2}}{t(1-t)}\,dt\right),
\end{equation}
where $h$ satisfies the equation
\begin{equation}
h'(t(1-t)h'')^{2}+(h'(2h-(2t-1)h')+\tilde{b}_{1}\tilde{b}_{2}\tilde{b}_{3}\tilde{b}_{4})^{2} = \prod_{j=1}^{4}(h'+\tilde{b}_{j}^{2}) \label{heqn}
\end{equation}
with parameters
\begin{equation}
\begin{split}
\tilde{b}_{1} &= \frac{\kappa+N}{2}, \qquad \hspace{8pt}\tilde{b}_{2} = \frac{\kappa+\gamma+N}{2},\\
\tilde{b}_{3} &= \frac{-\kappa+N}{2}, \qquad \tilde{b}_{4} = -\frac{N+\gamma+\kappa}{2}. \label{paramsbtilde}
\end{split}
\end{equation}
In \eqref{tcuepvirep}, $e_{2}$ (respectively $e_{2}'$) is the elementary symmetric polynomial of degree $2$ in $(\tilde{b}_{1},\tilde{b}_{2},\tilde{b}_{3},\tilde{b}_{4})$ (respectively in $(\tilde{b}_{1},\tilde{b}_{3},\tilde{b}_{4})$) and the constant $R_{\gamma}(1)$ is explicit, see equation \eqref{morris}.
\end{theorem}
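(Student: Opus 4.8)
The plan is to adapt the argument behind Theorem~\ref{th:ginibretocue} to the truncated-unitary weight $w(\lambda)=|\lambda-z|^{\gamma}(1-|\lambda|^{2})^{\kappa-1}$ on the unit disc $\mathbb{D}$, where $\kappa=M-N\in\mathbb{N}$. Since $\gamma$ is no longer an even integer, \eqref{akevern} is unavailable, so I start from \eqref{norms}: writing $h_{j}^{(z)}$ for the squared norms of the monic planar polynomials orthogonal with respect to $w(\lambda)$, and $\tilde h_{j}=\pi\,j!\,\Gamma(\kappa)/\Gamma(j+\kappa+1)$ for those of the radial reference weight $(1-|\lambda|^{2})^{\kappa-1}$, one has $R_{\gamma}(z)=\prod_{j=0}^{N-1}h_{j}^{(z)}/\tilde h_{j}$. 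As usual we may assume $z=|z|\ge0$.

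The key step is the Green's-theorem reduction of \cite{BBLM15,WW18}. Writing the planar orthogonality of the polynomial $p_{j}^{(z)}$ against $(\overline{\lambda}-z)^{m}$, $m\le j$, as $\int_{\mathbb{D}}p_{j}^{(z)}(\lambda)\,\partial_{\overline{\lambda}}H\,d^{2}\lambda$ with
\[
H(\lambda,\overline{\lambda},z)=(\lambda-z)^{\gamma/2}\int_{z}^{\overline{\lambda}}(s-z)^{\gamma/2+m}(1-\lambda s)^{\kappa-1}\,ds ,
\]
and evaluating the inner integral in closed form (a finite sum, since $\kappa-1\in\mathbb{Z}_{\ge0}$, or more invariantly an incomplete-Beta/Euler integral for ${}_{2}F_{1}$), Green's theorem collapses the area integral onto the boundary of $\mathbb{D}$; because $w$ is supported on $\overline{\mathbb{D}}$ this boundary plays the role of the contour at infinity in the Ginibre case, and after a rescaling and an elementary branch identity of the type \eqref{idabs} it yields, up to explicit $\Gamma$-factors, orthogonality relations on the unit circle of the form $\oint_{S^{1}}\tilde p_{j}(\lambda)\,\lambda^{-m-\gamma/4}|\lambda+1|^{\gamma/2}(1+|z|^{2}\lambda)^{\kappa+\gamma/2}\,\tfrac{d\lambda}{2\pi i\lambda}$, exactly as in \eqref{ginibrecontour}--\eqref{pjetc}. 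Reading \eqref{norms} backwards turns $R_{\gamma}(z)$ into a Toeplitz determinant whose pre-factor is identified with $R_{\gamma}(0)$ by applying \eqref{norms} directly to the radially symmetric weight $|\lambda|^{\gamma}(1-|\lambda|^{2})^{\kappa-1}$, for which $h_{j}^{(0)}/\tilde h_{j}=\Gamma(\tfrac{\gamma}{2}+j+1)\Gamma(j+\kappa+1)/[\Gamma(j+1)\Gamma(\tfrac{\gamma}{2}+j+\kappa+1)]$, i.e.\ \eqref{tcueprefactor}; Heine's identity then rewrites this Toeplitz determinant as the integral over $U(N)$, proving the two equivalent forms in \eqref{tcuetocue}.

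For the Painlev\'e~VI statement I would note that the symbol $e^{-i\gamma\theta/4}|1+e^{i\theta}|^{\gamma/2}(1+|z|^{2}e^{i\theta})^{\kappa+\gamma/2}$ carries a single Fisher--Hartwig type singularity on the circle (at $e^{i\theta}=-1$) together with a branch point off the circle (at $e^{i\theta}=-1/|z|^{2}$) whose location is the movable parameter; Toeplitz determinants of this kind are known to be governed by the Jimbo--Miwa--Okamoto $\sigma$-form of Painlev\'e~VI, cf.\ the framework of \cite{FWPVI}. To pin the parameters down concretely, I observe that for $\gamma=2k$ the duality of Proposition~\ref{prop:JUEduality}, equivalently Theorem~\ref{th:tcuefiniten}, already identifies the left-hand side with the JUE gap probability $\mathbb{P}(\lambda^{(\mathrm{JUE}_{k,\kappa,N})}_{\max}<1-|z|^{2})$, which satisfies \eqref{pvirep} with the parameters \eqref{paramsb} at $(\alpha,\beta)=(\kappa,N)$. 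Since \eqref{pvieqn} depends on the $b_{i}$ only through $b_{i}^{2}$ and $\prod_{i}b_{i}$, matching these against \eqref{paramsbtilde} through the standard symmetries $\sigma\mapsto-\sigma$, $b_{i}\mapsto-b_{i}$, $t\mapsto1-t$ of $\sigma$-PVI produces \eqref{heqn}--\eqref{paramsbtilde} and the representation \eqref{tcuepvirep}; the solution branch is selected, and the normalizing constant $R_{\gamma}(1)$ of \eqref{morris} computed, by letting $|z|^{2}\to1$, where the $U(N)$ integral in \eqref{tcuetocue} degenerates to a Morris (Selberg) integral with an explicit product evaluation. For non-integer $\gamma>-2$ the same $\sigma$-PVI identification of the Toeplitz determinant persists with parameters depending analytically on $\gamma$, the underlying Riemann--Hilbert analysis being insensitive to integrality, so the branch fixed at $\gamma\in2\mathbb{Z}_{\ge0}$ extends by analyticity.

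I expect the main obstacle to be the Green's-theorem step: the careful branch-cut bookkeeping needed to pass from the area integral over $\mathbb{D}$ to an integral on $S^{1}$ and, in particular, to produce the exponent $\kappa+\tfrac{\gamma}{2}$ on $(1+|z|^{2}U)$ in \eqref{tcuetocue}. A secondary delicate point is making the $\sigma$-PVI identification and the parameter matching \eqref{paramsbtilde} fully rigorous for non-integer $\gamma$, which, as elsewhere in the paper, ultimately rests on a Riemann--Hilbert treatment of the relevant Toeplitz determinant rather than on elementary manipulations.
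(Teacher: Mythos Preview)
Your derivation of the Toeplitz/\,$U(N)$ identity \eqref{tcuetocue} follows the paper's argument essentially verbatim: Green's theorem applied to the planar orthogonality, the Beta-integral evaluation of the antiderivative on $S^{1}$ (where $\lambda\overline\lambda=1$ makes $1-\lambda s$ factor as $(1-\lambda z)(1-u)$), the identification of the prefactor $R_{\gamma}(0)$ via the radial weight, and Heine's identity. One step you gloss over is that after Green's theorem the orthogonality is against $(\overline\lambda-z)^{m}$ rather than $\overline\lambda^{m}$; the paper undoes this by forming explicit linear combinations with coefficients $c_{p,k}(z)=\binom{p}{k}z^{p-k}c_{\gamma,p,\kappa}/c_{\gamma,k,\kappa}$, which is what ultimately produces the clean symbol on $S^{1}$. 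This is routine but worth stating.

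Where you diverge from the paper is in the Painlev\'e~VI identification. You propose to fix the parameters by matching against the JUE gap probability at $\gamma=2k$ via Theorem~\ref{th:tcuefiniten} and the $\sigma$-PVI symmetries, and then to extend to non-integer $\gamma$ by analyticity, flagging this last step as delicate and potentially requiring Riemann--Hilbert input. The paper avoids this detour entirely: once the $U(N)$ integral \eqref{tcuetocue} is in hand, it simply observes that this precise class of group integrals has already been identified with $\sigma$-PVI for \emph{general} real parameters in Forrester--Witte \cite{FWPVI} (their Equations (1.6), (1.21), (1.34) and Proposition~13), from which \eqref{tcuepvirep}--\eqref{paramsbtilde} are read off directly. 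So no analytic-continuation argument, and no appeal to the integer-$k$ duality, is needed at this stage; the parameter matching you describe is instead the content of Corollary~\ref{cor:consist}, which goes in the opposite direction (showing the two PVI characterizations agree). Your route would work, but it is more laborious and introduces a ``delicate point'' that the paper's direct citation sidesteps.
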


\begin{proof}
By rotational symmetry the function $R_{\gamma}(z)$ only depends on $|z|$, so without loss of generality we assume $z>0$ throughout this proof. We will again start from orthogonal polynomials in the complex plane and apply the Green's theorem trick to reduce the planar orthogonality to contour orthogonality over the boundary of the unit disc. We proceed in a similar vein to the proof of Theorem \ref{th:ginibretocue}. Let $p_{j}(\lambda)$ denote the monic degree $j$ polynomials satisfying the planar orthogonality ($k \leq j$)
\begin{equation}
h_{j}\delta_{j,k} = \int_{\mathbb{D}}p_{j}(\lambda)\overline{\lambda}^{k}|\lambda-z|^{\gamma}(1-|\lambda|^{2})^{\kappa-1}\,d^{2}\lambda. \label{tcueorthog}
\end{equation}
The goal is to obtain an analogue of the Ginibre identity \eqref{ginibrecontour}. First, using the orthogonality to replace $\overline{\lambda}^{k}$ with $(\overline{\lambda}-z)^{k}$, the orthogonality \eqref{tcueorthog} is easily written in the differential form
\begin{equation}
h_{j}\delta_{k,j} = \int_{\mathbb{D}}p_{j}(\lambda)(\lambda-z)^{\frac{\gamma}{2}}\frac{\partial}{\partial \overline{\lambda}}h(\lambda,\overline{\lambda},z)\,d^{2}\lambda, \label{diff-form}
\end{equation}
where
\begin{equation}
\begin{split}
h(\lambda,\overline{\lambda},z) &:= \int_{z}^{\overline{\lambda}}(s-z)^{\frac{\gamma}{2}+k}(1-\lambda s)^{\kappa-1}\,ds\\
&= (\overline{\lambda}-z)^{\frac{\gamma}{2}+k+1}(1-\lambda z)^{\kappa-1}\int_{0}^{1}s^{\frac{\gamma}{2}+k}(1-s)^{\kappa-1}\,ds.
\end{split}
\end{equation}
In the following we denote the above integral (a particular case of the Euler beta integral) by
\begin{equation}
c_{\gamma,k,\kappa} :=\int_{0}^{1}s^{\frac{\gamma}{2}+k}(1-s)^{\kappa-1}\,ds = \frac{\Gamma(\frac{\gamma}{2}+k+1)\Gamma(\kappa)}{\Gamma(\frac{\gamma}{2}+k+\kappa+1)}.
\end{equation}
Now we apply Green's theorem to \eqref{diff-form} resulting in the boundary integral
\begin{equation}
h_{j}\delta_{k,j} = \frac{1}{2i}\oint_{S^{1}}p_{j}(\lambda)(\lambda-z)^{\frac{\gamma}{2}}h(\lambda,\overline{\lambda},z)\,d\lambda. \label{tcues1integral}
\end{equation}
To simplify this orthogonality further, we recall that we can form linear combinations on both sides of \eqref{tcues1integral} without altering the orthogonality. Indeed, defining
\begin{equation}
c_{p,k}(z) := \binom{p}{k}\,z^{p-k}\frac{c_{\gamma,p,\kappa}}{c_{\gamma,k,\kappa}}
\end{equation}
we multiply both sides of \eqref{tcues1integral} by $c_{p,k}(z)$ and sum from $k=0$ to $k=p$ for some $p \leq j$. On the right-hand side this leads to the sum
\begin{equation}
\sum_{k=0}^{p}c_{p,k}(z)(\overline{\lambda}-z)^{k}c_{\gamma,k,\kappa} = \overline{\lambda}^{p}c_{\gamma,p,\kappa}
\end{equation}
and the analogue of identity \eqref{ginibrecontour} follows,
\begin{equation}
\begin{split}
h_{j}\delta_{k,j} &:= \int_{\mathbb{D}}p_{j}(\lambda)\overline{\lambda}^{k}|\lambda-z|^{\gamma}(1-|\lambda|^{2})^{\kappa-1}\,d^{2}\lambda\\
&=\pi \frac{\Gamma(\frac{\gamma}{2}+j+1)\Gamma(\kappa)}{\Gamma(\frac{\gamma}{2}+j+\kappa+1)}\,\oint_{S^{1}}p_{j}(\lambda)\lambda^{-k}|\lambda-z|^{\gamma}(1-z\lambda)^{\kappa}\frac{d\lambda}{2\pi i \lambda}. \label{tcuecontour}
\end{split}
\end{equation}

The same reasoning that led to \eqref{tdetgin} gives, in this context, the Toeplitz determinant expression
\begin{equation}
\begin{split}
R_{\gamma}(z) =& \frac{N!}{\mathcal{Z}^{(\mathrm{tCUE})}_{N}}\prod_{j=0}^{N-1}h_{j}=\frac{N!}{\mathcal{Z}^{(\mathrm{tCUE})}_{N}}\prod_{j=0}^{N-1}\left(\pi \frac{\Gamma(\frac{\gamma}{2}+j+1)\Gamma(\kappa)}{\Gamma(\frac{\gamma}{2}+j+\kappa+1)}\right)\\
&\times\det\bigg\{\oint_{S^{1}}\lambda^{j-i}|\lambda-z|^{\gamma}(1-z\lambda)^{\kappa}\,\frac{d\lambda}{2\pi i \lambda}\bigg\}_{i,j=0}^{N-1},
\end{split}
\end{equation}
where $\mathcal{Z}_{N}^{(\mathrm{tCUE})}$ is the normalization constant for \eqref{zspdf} which can be calculated explicitly (see \textit{e.g.} \cite{ZS00}). This yields the form of the constant pre-factor
\begin{equation}
\frac{N!}{\mathcal{Z}^{(\mathrm{tCUE})}_{N}}\,\prod_{j=0}^{N-1}\left(\pi\frac{\Gamma(\frac{\gamma}{2}+j+1)\Gamma(\kappa)}{\Gamma(\frac{\gamma}{2}+j+\kappa+1)}\right) = R_{\gamma}(0), \label{rgamma0ident}
\end{equation}
an identity which follows from computing $R_{\gamma}(0)$ via \eqref{norms} and formula \eqref{hjtcue} with $j \to j+\frac{\gamma}{2}$.

Making appropriate changes of variables and deformations of the contour, the integrals inside the determinant can be written in the equivalent form
\begin{equation}
\oint_{S^{1}}\lambda^{j-i}|\lambda-z|^{\gamma}(1-z\lambda)^{\kappa}\,\frac{d\lambda}{2\pi i \lambda} = (-z)^{j-i}\oint_{S^{1}}\lambda^{j-i}\,\lambda^{-\frac{\gamma}{4}}|1+\lambda|^{\frac{\gamma}{2}}(1+z^{2}\lambda)^{\kappa+\frac{\gamma}{2}}\,\frac{d\lambda}{2\pi i \lambda},
\end{equation}
where we used identity \eqref{idabs}. Now \eqref{tcuetocue} follows from the same considerations given in the proof of Theorem \ref{th:ginibretocue}. The association of the group integrals in \eqref{tcuetocue} with Painlev\'e VI has been studied by Forrester and Witte, for example see Equations (1.6), (1.21), (1.34) and Proposition 13 in \cite{FWPVI}, from which identity \eqref{tcuepvirep} follows. 
\end{proof}

\begin{corollary}
\label{cor:consist}
The identity \eqref{JUEprobab} can be extended to any real $k = \frac{\gamma}{2} > -1$ by continuing the integer valued parameters in the equation \eqref{JUEprobab}. Equivalently, the Painlev\'e characterization \eqref{tcuepvirep} reduces to the one in \eqref{JUEprobab} when $\gamma=2k$.
\end{corollary}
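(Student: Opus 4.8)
The plan is to exploit the two \emph{a priori} distinct Painlev\'e~VI descriptions of the truncated-CUE moment $R_{2k}(z)$ that are available when $k\in\mathbb{N}$. On one side, Theorem~\ref{th:tcuefiniten} together with the classical identity \eqref{pvirep}--\eqref{paramsb}, taken with $\alpha=\kappa=M-N$ and $\beta=N$, expresses the left-hand side of \eqref{JUEprobab}, for $|z|<1$, through a solution $\sigma^{(\mathrm{VI})}_{\kappa,N}$ of \eqref{pvieqn} with parameters $b_{1}=b_{2}=k+\tfrac{\kappa+N}{2}$, $b_{3}=\tfrac{\kappa+N}{2}$, $b_{4}=\tfrac{N-\kappa}{2}$. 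On the other side, Theorem~\ref{prop:tPVI} describes the same quantity through \eqref{tcuepvirep}, via a solution $h$ of \eqref{heqn} with the parameters \eqref{paramsbtilde}, and this second description is valid for \emph{all} real $\gamma>-2$. Thus the corollary reduces to the statement that, on setting $\gamma=2k$, the representation \eqref{tcuepvirep} is transformed into the right-hand side of \eqref{JUEprobab} by a procedure whose ingredients are rational --- in fact affine --- in $k$, so that it remains valid once $k$ is allowed off the even integers.

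First I would perform the parameter matching. Putting $\gamma=2k$ in \eqref{paramsbtilde} yields $\tilde b_{1}=\tfrac{\kappa+N}{2}$, $\tilde b_{2}=k+\tfrac{\kappa+N}{2}$, $\tilde b_{3}=\tfrac{N-\kappa}{2}$, $\tilde b_{4}=-\bigl(k+\tfrac{N+\kappa}{2}\bigr)$, whence $\{\tilde b_{1}^{2},\tilde b_{2}^{2},\tilde b_{3}^{2},\tilde b_{4}^{2}\}=\{b_{1}^{2},b_{2}^{2},b_{3}^{2},b_{4}^{2}\}$ as multisets, while $\prod_{j}\tilde b_{j}=-\prod_{i}b_{i}$. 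Since the $\sigma$-form of Painlev\'e~VI in \eqref{pvieqn}--\eqref{heqn} depends on its parameters only through the squares $b_{i}^{2}$ and through the single product $\prod_{i}b_{i}$, these two sets of data are related by an element of the B\"acklund symmetry group of $\sigma$-PVI: concretely, a sign flip of one parameter composed with the reflection $t\mapsto 1-t$, which is precisely what is needed to reconcile the integration interval $[\,|z|^{2},1\,]$ in \eqref{tcuepvirep} with the interval $[\,x,1\,]$, $x=1-|z|^{2}$, in \eqref{pvirep}. I would write out this transformation explicitly as an affine relation between $h(t)$ and $\sigma^{(\mathrm{VI})}_{\kappa,N}(1-t)$, and then check that the affine correction, the normalising shifts $-e_{2}'t+\tfrac{1}{2}e_{2}$ versus $-b_{1}b_{2}t+\tfrac{1}{2}(b_{1}b_{2}+b_{3}b_{4})$, and the explicit prefactors $R_{\gamma}(1)$ from \eqref{morris} and $C_{M,N,k}(1-|z|^{2})^{-k(M-N)-k^{2}}$ from \eqref{r2kzero} combine to carry the right-hand side of \eqref{tcuepvirep} onto the right-hand side of \eqref{JUEprobab}. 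For $k\in\mathbb{N}$ the resulting equality is already forced by Theorems~\ref{th:tcuefiniten} and \ref{prop:tPVI}; what remains is to identify \emph{which} symmetry element effects it and to confirm that the boundary condition at $t=1$ pinning down $h$ in \eqref{heqn} is mapped, under $t\mapsto1-t$, to the boundary condition at $x=1$ pinning down $\sigma^{(\mathrm{VI})}_{\kappa,N}$ in \eqref{pvirep}. For the explicit B\"acklund action on $\sigma$-PVI I would appeal to \cite{FWPVI} and the references therein.

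Once this transformation is in place, the extrapolation to non-integer $k$ is automatic. Each ingredient --- the substitutions $\gamma=2k$, $\alpha=\kappa$, $\beta=N$ in the parameters, the change of variable $t\mapsto1-t$, and the identification of constants --- is meaningful and valid for all real $\gamma>-2$, and the transformation has coefficients rational in $k$. Since \eqref{tcuepvirep} holds for all such $\gamma$ by Theorem~\ref{prop:tPVI}, applying the transformation produces the right-hand side of \eqref{JUEprobab} with the integer-valued parameters of \eqref{paramsb} continued to $k=\tfrac{\gamma}{2}$, for every real $k>-1$. This is exactly the claim; the restriction $k>-1$ is the integrability condition $\gamma>-2$ for the weight $|\lambda-z|^{\gamma}(1-|\lambda|^{2})^{\kappa-1}$.

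The main obstacle I anticipate is bookkeeping rather than conceptual. The Jimbo--Miwa--Okamoto $\sigma$-PVI carries a large affine Weyl symmetry group, so one must isolate the precise group element realising the correspondence and then match, simultaneously and without slack, three separate pieces of data: the linear and constant shifts built into the two Hamiltonian normalisations, the two explicit Barnes-$G$/Selberg prefactors, and the endpoint boundary conditions. This is a finite and mechanical check; for $k\in\mathbb{N}$ it is underwritten by the equality of both sides to $R_{2k}(z)$, and the only ingredient not already spelled out above is the tabulated action of the relevant B\"acklund transformation on $\sigma$-PVI, available in \cite{FWPVI}.
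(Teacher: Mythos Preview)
Your proposal is correct and follows essentially the same route as the paper: identify the two $\sigma$-PVI representations via the reflection $t\mapsto 1-t$ and then match the affine shifts in the integrands. Two tactical differences are worth noting. First, the paper normalises \emph{both} representations to carry the prefactor $R_{\gamma}(0)$ rather than $R_{\gamma}(1)$: one rewrites \eqref{JUEprobab} using that the extra factor $(1-|z|^{2})^{-k\kappa-k^{2}}$ is absorbed into the integral as the constant shift $-k\kappa-k^{2}$ in the numerator, and one rewrites \eqref{tcuepvirep} by the change of variable $t\mapsto 1-t$; this avoids the Selberg/Barnes-$G$ bookkeeping you anticipate. Second, the paper does not appeal to tabulated B\"acklund transformations from \cite{FWPVI} at all --- it simply checks by direct substitution that $h(t)$ solves \eqref{heqn} if and only if $h(1-t)$ solves \eqref{pvieqn} with the parameters \eqref{paramsb}, and then verifies the two elementary identities $b_{1}b_{2}-k\kappa-k^{2}=-e_{2}'$ and $\tfrac{1}{2}(b_{1}b_{2}+b_{3}b_{4})-k\kappa-k^{2}=\tfrac{1}{2}e_{2}-e_{2}'$. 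The boundary-condition matching you flag is not addressed explicitly; it is implicit in the fact that for $\gamma=2k\in2\mathbb{N}$ both sides equal $R_{2k}(z)$.
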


\begin{proof}
First notice that identity \eqref{JUEprobab} can easily be written
\begin{equation}
R_{2k}(z) = R_{2k}(0)\,\mathrm{exp}\left(-\int_{1-|z|^{2}}^{1}\frac{\sigma^{(\mathrm{VI})}_{\kappa,N}(t)-(b_1 b_2-k\kappa-k^{2})t+\frac{b_{1}b_{2}+b_{3}b_{4}}{2}-k\kappa-k^{2}}{t(1-t)}\,dt\right).
\end{equation}
Regarding identity \eqref{tcuepvirep}, some simple changes of variables casts it in the form
\begin{equation}
R_{\gamma}(z) = R_{\gamma}(0)\,\mathrm{exp}\left(-\int_{1-|z|^{2}}^{1}\frac{h(1-t)+e_{2}'t-e_{2}'+\frac{1}{2}e_{2}}{t(1-t)}\,dt\right).
\end{equation}
Now it is straightforward to show that $h(t)$ solves equation \eqref{heqn} if and only if $h(1-t)$ solves \eqref{pvieqn} replacing everywhere $k$ with $\frac{\gamma}{2}$ and identifying $b_{1},\ldots,b_{4}$ in terms of $\alpha = \kappa$ and $\beta=N$, see \eqref{paramsb} and \eqref{paramsbtilde}. It remains to observe the simple identities
\begin{equation}
\begin{split}
\left(b_{1}b_{2}-k\kappa-k^{2}\right)|_{k=\frac{\gamma}{2}} &= -e_{2}'\\
\left(\frac{b_{1}b_{2}+b_{3}b_{4}}{2}-k\kappa-k^{2}\right)\bigg{|}_{k=\frac{\gamma}{2}}& = \frac{1}{2}e_{2}-e_{2}'
\end{split}
\end{equation}
and so the characterization \eqref{JUEprobab} remains true for any real values $k=\frac{\gamma}{2}$ with $k > -1$.
\end{proof}

\begin{remark}
As with the Ginibre case, the identity \eqref{tcuecontour} implies that the planar orthogonal polynomials are expressible in terms of certain polynomials orthogonal on the unit circle. These polynomials have been characterized in terms of fundamental objects associated with the Painlev\'e VI system, see \cite{FW06}. Discrete $N$-recurrences for the group integrals in \eqref{tcuepvirep} were derived in \cite{FW04}. As before, identities \eqref{tcuetocue} and \eqref{jueduality} (with $x=y=|z|$) imply a duality between averages over the CUE and the JUE, which has been discussed in the context of generalized hypergeometric functions in Section 3.3 of \cite{FWPVI}. Its appearance here is a new manifestation of the duality arising from computing an average of truncated unitary matrices in two different ways. Interestingly, averages of this type have been shown to arise in last passage percolation and the Ising model, see Section 5.3 of \cite{FWPVI}.
\end{remark}
\begin{remark}
The constant $R_{\gamma}(1)$ in \eqref{tcuepvirep} can be computed explicitly from \eqref{tcuetocue} by setting $z=1$ and reinstating the absolute value (see \textit{e.g.} identity \eqref{idabs}). This leads to the Morris integral evaluation,
\begin{equation}
\begin{split}
R_{\gamma}(1) &= \frac{R_{\gamma}(0)}{N!(2\pi)^{N}}\int_{[-\pi,\pi]^{N}}\prod_{j=1}^{N}e^{\frac{i\kappa \theta_j}{2}}|1+e^{i\theta}|^{\gamma+\kappa}|\Delta(e^{i\theta})|^{2}\,d\vec{\theta}\\
&= R_{\gamma}(0)\prod_{j=0}^{N-1}\frac{\Gamma(1+\kappa+\gamma+j)\Gamma(1+j)}{\Gamma(1+\kappa+\frac{\gamma}{2}+j)\Gamma(1+\frac{\gamma}{2}+j)}\\
&= \prod_{j=0}^{N-1}\frac{\Gamma(\kappa+j+1)\Gamma(\kappa+\gamma+j+1)}{\Gamma(\kappa+\frac{\gamma}{2}+j+1)^{2}}, \label{morris}
\end{split}
\end{equation}
where we used \eqref{tcueprefactor}. Writing this in terms of Barnes-G functions and inserting their asymptotic expansion (see \cite[Eq. 5.17.5]{NIST:DLMF}) gives
\begin{equation}
R_{\gamma}(1) = N^{\frac{\gamma^{2}}{4}}\,\frac{G(\kappa+\frac{\gamma}{2}+1)^{2}}{G(\kappa+1)G(\kappa+\gamma+1)}\,(1+o(1)), \qquad N \to \infty,
\end{equation}
which is consistent with the limit $u \to 0$ of \eqref{tcueresult} with $k=\frac{\gamma}{2}$.
\end{remark}

\subsection{Painlev\'e VI to Painlev\'e V scaling heuristics}
Starting from the identity \eqref{JUEprobab}, we consider the boundary scaling $|z|^{2} = 1-\frac{2u}{N}$ in the weak non-unitarity limit, so that $\kappa := M-N$ is fixed. A change of variable $t \to t/N$ in \eqref{pvirep} gives the characterization
\begin{equation}
\mathbb{P}\left(\lambda^{(\mathrm{JUE}_{k,\kappa,N})}_{\mathrm{max}} < \frac{2u}{N}\right) =\mathrm{exp}\left(-\int_{2u}^{N}\frac{\sigma^{(\mathrm{VI})}_{\kappa,N}(\frac{t}{N})-b_{1}b_{2}\frac{t}{N}+\frac{b_{1}b_{2}+b_{3}b_{4}}{2}}{t(1-\frac{t}{N})}\,dt\right)
\end{equation}
Defining the centered and scaled function
\begin{equation}
v(t) = \sigma^{(\mathrm{VI})}_{\kappa,N}\left(\frac{t}{N}\right)-b_{1}b_{2}\frac{t}{N}+\frac{b_{1}b_{2}+b_{3}b_{4}}{2}
\end{equation}
it is easily shown that $v$ satisfies the equation
\begin{equation}
\begin{split}
&N^{3}\left(v'+\frac{b_{1}b_{2}}{N}\right)t^{2}\left(1-\frac{t}{N}\right)^{2}(v'')^{2}\\
&-\left(N(N-2t)(v')^{2}+(2vN+(b_{1}b_{2}-b_{3}b_{4})N-2b_{1}b_{2}t)v'+2vb_{1}b_{2}\right)^{2}\\
&+N^{4}\prod_{i=1}^{4}\left(v'+\frac{b_{1}b_{2}-b_{i}^{2}}{N}\right)=0. \label{rescaledPVI}
\end{split}
\end{equation}
Now dividing both sides of \eqref{rescaledPVI} by $N^{4}$ and, using the explicit form of the $N$-dependent parameters $b_{1},\ldots,b_{4}$ in \eqref{paramsb} (recall that $\alpha=\kappa$ and $\beta=N$), the limit $N \to \infty$ gives the equation
\begin{equation}
(tv'')^{2}-\left(2(v')^{2}-tv'+(2k+\kappa)v'+v\right)^{2}+4(v')^{2}(v'+k)(v'+\kappa+k) = 0,
\end{equation}
which is precisely the $\sigma$-form of Painlev\'e V \eqref{pvintro} with parameter $\alpha=\kappa$. Thus
\begin{equation}
\lim_{N \to \infty}\mathbb{P}\left(\lambda^{(\mathrm{JUE}_{k,\kappa,N})}_{\mathrm{max}} < \frac{2u}{N}\right) = \mathrm{exp}\left(-\int_{2u}^{\infty}\frac{v(t)}{t}\,dt\right).
\end{equation}
Clearly these scaling heuristics are consistent with Theorem \ref{th:tcueonecharge}.

\appendix
\section{Relation to central limit theorems and the Gaussian free field}
\label{app:gff}
In this appendix we will discuss the asymptotic expansion in Theorem \ref{th:onecharge} when $z$ is in the exterior region $|z|>1+\delta$ where $\delta>0$ is fixed. As with the proof of Theorem \ref{th:onecharge}, we take the duality \ref{prop:lue} as our starting point. In particular, rescaling equation \eqref{LUEduality} via $t_{j} \to t_{j}\sqrt{N}$ for each $j=1,\ldots,k$ gives the identity
\begin{equation}
\mathbb{E}\left(|\det(G_{N}-z)|^{2k}\right) = \frac{N^{k^{2}}}{\prod_{j=0}^{k-1}(j!(j+1)!)}I^{(2k)}_{N}(z) \label{ink}
\end{equation}
where
\begin{equation}
I^{(2k)}_{N}(z) := \int_{[0,\infty)^{k}}\prod_{j=1}^{k}e^{-N\phi(t_j)}\Delta^{2}(\vec{t})\,d\vec{t}
\end{equation}
and
\begin{equation}
\phi(t) = t-\log(|z|^{2}+t).
\end{equation}
For $|z|>1$ fixed, the function $\phi(t)$ achieves its minimum at $t=0$, the endpoint of integration. We have,
\begin{equation}
\phi(t) = -2\log|z|+t(1-|z|^{-2})+O(t^{2}), \qquad t \to 0.
\end{equation}
Therefore by classical saddle point arguments we get the approximation
\begin{equation}
\begin{split}
&I^{(2k)}_{N}(z)\sim |z|^{2Nk}\int_{[0,\infty)^{k}}\prod_{j=1}^{k}e^{-Nt_{j}(1-|z|^{-2})}\,\Delta^{2}(\vec{t})\,d\vec{t}\\
&=|z|^{2Nk}N^{-k^{2}}(1-|z|^{-2})^{-k^{2}}\prod_{j=0}^{k-1}j!(j+1)! \label{outsideapprox}
\end{split}
\end{equation}
where we used the Selberg integral type formula (see \textit{e.g.} \cite[Chapter 4]{Forrester_loggas})
\begin{equation}
\int_{[0,\infty)^{k}}\prod_{j=1}^{k}e^{-t_{j}}\,\Delta^{2}(\vec{t})\,d\vec{t} = \prod_{j=0}^{k-1}j!(j+1)!.
\end{equation}
Now inserting \eqref{outsideapprox} into \eqref{ink} we obtain the asymptotic expansion
\begin{equation}
\mathbb{E}\left(|\det(G_{N}-z)|^{2k}\right) = e^{2Nk\log|z|-k^{2}\log(1-|z|^{-2})}\,\left(1+o(1)\right), \qquad N \to \infty. \label{zoutside}
\end{equation}

The expansion \eqref{zoutside} could also be arrived at from the global fluctuation theory of smooth linear statistics studied in \cite{RV07,AHM11}. In this sense, we have the interpretation
\begin{equation}
\mathbb{E}\left(|\det(G_{N}-z)|^{\gamma}\right) = \mathbb{E}(e^{\gamma L_{N}[f_{z}]})
\end{equation}
where
\begin{equation}
L_{N}[f_z] := \log|\det(G_{N}-z)| = \sum_{j=1}^{N}f_{z}(\lambda_j), \qquad f_{z}(\lambda) := \log|\lambda-z|, \label{logrv}
\end{equation}
and if $|z|>1$, $f_{z}(\lambda)$ is a smooth function of $\lambda$ on the eigenvalue support. In fact consider a general \textit{linear statistic}
\begin{equation}
L_{N}[f] := \sum_{j=1}^{N}f(\lambda_j)
\end{equation}
where $\lambda_j$ are the $N$ eigenvalues of the complex Ginibre random matrix $G_{N}$. Then for sufficiently smooth $f$, it holds that
\begin{equation}
\mathbb{E}(e^{\gamma L_{N}[f]}) = e^{N\gamma m_{1}(f)+\frac{\gamma^{2}}{2}m_{2}(f)}\left(1+o(1)\right), \qquad N \to \infty, \label{CLTgin}
\end{equation}
where the asymptotic mean is
\begin{equation}
m_{1}(f) = \frac{1}{\pi}\int_{\mathbb{D}}\,f(\lambda)\,d^{2}\lambda
\end{equation}
and the asymptotic variance is given by
\begin{equation}
m_{2}(f) = \frac{1}{4\pi}\int_{\mathbb{D}}|\nabla f(\lambda)|^{2}\,d^{2}\lambda + \frac{1}{2}\sum_{k=-\infty}^{\infty}|k||\hat{f}(k)|^{2} \label{gradsquared}
\end{equation}
where $\hat{f}(k) = \frac{1}{2\pi}\int_{0}^{2\pi}f(e^{i\theta})e^{-ik\theta}\,d\theta$. The gradient squared term in \eqref{gradsquared} is associated with Gaussian free field fluctuations, again see \cite{RV07} for discussion about this. In the particular case of $f_{z}(\lambda) = \log|\lambda-z|$, $|z|>1$, the mean and variance can be computed explicitly from these formulae as $m_{1} = \log|z|$ and $m_{2} = -\frac{1}{2}\log(1-|z|^{-2})$. Setting $k=\frac{\gamma}{2}$ then shows the agreement between \eqref{zoutside} and \eqref{CLTgin}, leading to expansion \eqref{supercrit}. On the other hand when $|z| \leq 1$ or $z$ is too close to the boundary, $f_{z}(\lambda)$ lacks smoothness on the support of the eigenvalues and the asymptotic expansion \eqref{CLTgin} takes a more complicated form, as Theorems \ref{th:ww} and \ref{th:onecharge} demonstrate.

\section*{Acknowledgements}
The authors would like to thank Gernot Akemann, Yan V. Fyodorov, Arno B. J. Kuijlaars and Seung-Yeop Lee for helpful discussions on the results of this work. We thank Peter J. Forrester for comments and for pointing out reference \cite{LJP99} to us. We are very grateful to the anonymous referees for their careful reading and comments on the paper. A. D. gratefully acknowledges financial support from EPSRC, First Grant project  ``Painlev\'e equations: analytical properties and numerical computation", reference EP/P026532/1. N. S. gratefully acknowledges support of the Royal Society University Research Fellowship “Random matrix theory and log-correlated Gaussian fields”, reference URF\textbackslash R1\textbackslash 180707.

\bibliographystyle{plain}
\bibliography{non-herm}

\end{document}